\newtheorem{theorem}{Theorem}
\newtheorem{lemma}{Lemma}
\newtheorem{definition}{Definition}
\newtheorem{corollary}{Corollary}
\newcounter{claimctr}[lemma]
\newcounter{remctr}
\newenvironment{remark}{\medskip\par\noindent\refstepcounter{remctr}\textbf{Remark \theremctr.} }{\smallskip\par}
\newtheorem{observation}{Observation}
\newtheorem{question}{Question}
\newenvironment{myclaim}{\refstepcounter{claimctr}\medskip\par\noindent\textit{Claim \theclaimctr.}}{\par}
\newenvironment{myenumerate}
{\vspace{-0.05in}
	\begin{enumerate}
		\itemsep -0.04in

	}
	{\end{enumerate}}
\@nx\else[{#1}]\fi}
\newcommand{\com}[1]{}
\title{$\chi$-binding functions for squares of bipartite graphs and its subclasses}
\author{Dibyayan Chakraborty\footnote{School of Computing, University of Leeds, United Kingdom} \and L. Sunil Chandran\footnote{Indian Institute of Science, Bengaluru, India} \and Dalu Jacob\footnotemark[2] \and Raji R. Pillai\footnotemark[2]}
\date{}
\newcommand{\col}[1]{\chi\left( #1\right)}
\newcommand{\clique}[1]{\omega\left( #1\right)}
\newcommand{\squares}[1]{\left(#1\right)^2}
\newcommand{\lowerbound}[1]{\Omega\left( #1\right)}
\newcommand{\leftof}{\textit{left}}
\newcommand{\rightof}{\textit{right}}
\begin{document}

	\maketitle
	
	\begin{abstract}
		A class of graphs $\mathcal{G}$ is $\chi$-bounded if there exists a function $f$ such that $\chi(G) \leq f(\omega(G))$ for each graph $G \in \mathcal{G}$, where $\chi(G)$ and $\omega(G)$ are the chromatic and clique number of $G$, respectively. 
		The \emph{square} of a graph $G$, denoted as $G^2$, is the graph with the same vertex set as $G$ in which two vertices are adjacent when they are at a distance at most two in $G$. In this paper, we study the $\chi$-boundedness of \emph{squares} of bipartite graphs and its subclasses. Note that the class of squares of graphs, in general, admit a quadratic $\chi$-binding function. 
        Moreover there exist bipartite graphs $B$ for which  $\col{B^2}$ is $\lowerbound{\frac{\squares{\clique{B^2}} }{\log \clique{B^2}}}$. We first ask the following question: ``What sub-classes of bipartite graphs have a linear $\chi$-binding function?" We focus on the class of \emph{convex bipartite} graphs and prove the following result: for any convex bipartite graph $G$, $\col{G^2} \leq \frac{3 \clique{G^2}}{2}$. Our proof also yields a polynomial-time $3/2$-approximation algorithm for coloring squares of convex bipartite graphs. We then introduce a notion called ``partite testable properties" for the squares of bipartite graphs. We say that a \emph{graph property $P$ is partite testable for the squares of bipartite graphs if for a bipartite graph $G=(A,B,E)$, whenever the induced subgraphs $G^2[A]$ and $G^2[B]$ satisfies the property $P$ then $G^2$ also satisfies the property $P$}. Here, we discuss whether some of the well-known graph properties like \emph{perfectness}, \emph{chordality}, \emph{(anti-hole)-freeness}, etc. are partite testable or not. As a consequence, we prove that the squares of biconvex bipartite graphs are perfect. 

         \medskip \noindent \textbf{Keywords:} Bipartite graphs, Convex bipartite graphs, $\chi$-binding function, Squares of graphs, Partite testable property.
         \end{abstract}

	\section{Introduction}
	All the graphs we consider in the paper are finite, undirected, and simple. For a graph $G$, we denote the vertex set as $V(G)$ and the edge set as $E(G)$. A $k$-coloring of a graph $G$ is a mapping $c\colon V(G) \rightarrow \{1,2,\ldots,k\}$ such that for any edge $uv\in E(G)$, $c(u) \neq c(v)$. The \emph{chromatic number} of $G$, denoted by $\col{G}$, is the minimum integer $k$ such that $G$ admits a $k$-coloring. The \emph{clique number} of a graph $G$, denoted by $\clique{G}$, is the largest integer $k$ such that $G$ contains a complete subgraph on $k$ vertices.  A class of graphs $\mathcal{G}$ is $\chi$-bounded if there exists a function $f$ such that $\chi(G) \leq f(\omega(G))$ for each graph $G \in \mathcal{G}$, where $\chi(G)$ and $\omega(G)$ are the chromatic and clique number of $G$, respectively. The study on $\chi$-boundedness of graph classes has recently gained popularity among researchers in the field. See the survey by Scott and Seymour~\cite{scott2020survey}.
	
	\medskip
	
	In this paper, we study the $\chi$-boundedness of \emph{squares} of bipartite graphs. The \emph{square} of a graph $G$, denoted as $G^2$, is the graph with the same vertex set as $G$ in which two vertices are adjacent when they are at a distance at most two in $G$.  A graph $H$ is a \emph{square} graph if there exists a graph $G$ such that $G^2\cong H$. Note that coloring square graphs is also known as \emph{distance $2$-coloring} of graphs~\cite{KRAMER2008422} and has applications in broadcast scheduling in multi-hop radio networks, distributed computing, etc.~\cite{fraigniaud2020,congestmodel}.
	
	\medskip
	Bounding the chromatic number of graph squares has been receiving attention over the past 25 years.  See a recent survey by Cranston~\cite{cranston2022}. 
 
 Coloring squares of graphs is interesting, even for special classes of graphs. Wegner’s conjecture on the chromatic number of squares of planar graphs is a perfect example of this. In 1977, Wegner conjectured the following: Let $G$ be a planar graph with maximum degree $\Delta$. If $\Delta = 3$, then $\chi(G^2)\leq 7$, if $4 \leq \Delta \leq 7$, then $\chi(G^2)\leq \Delta+5$, and  if $\Delta \geq 8$, then $\chi(G^2)\leq \big\lfloor \frac{3\Delta}{2}\big\rfloor+1$.  
	Wegner's conjecture for $\Delta = 3$ has been confirmed by Thomassen~\cite{thomassen2018}. But the conjecture is still open for $\Delta \geq 4$. Due to the popularity of Wegner's conjecture, several upper bounds in terms of maximum degree are known for the chromatic number of squares of planar graphs~\cite{agnarsson2003coloring,molloy2005bound,van2003coloring,zhu2022}: in particular for planar graphs with high girth~\cite{choi2020,dong2019,dvovrak2008}. Apart from planar graphs, there are several other special classes of graphs, including cocomparability graphs, circular-arc graphs, etc., for which the bounds on the chromatic number of their squares are investigated~\cite{calamoneri2009}.  

	\medskip

	\medskip
	Observe that if the maximum degree of a graph $G$ is $\Delta$, then the maximum degree of $G^2$ is at most $\Delta^2$, and therefore
 we have $\col{G^2}\leq \Delta^2+1$. On the other hand, we can observe that for any vertex $v\in V(G)$, the vertex $v$, together with its neighborhood in $G$, form a clique in $G^2$. This implies that $\omega(G^2)\geq \Delta+1$. Together with the previous observation, we then have $\chi(G^2)\leq (\omega(G^2)-1)^2+1$. Hence, the class of square graphs is $\chi$-bounded with a quadratic $\chi$-binding function. Alon and Mohar~\cite{Alon2002} have studied the upper and lower bounds for the chromatic number of squares of graphs in terms of maximum degree. Since bipartite graphs are the class of graphs having the least chromatic number, it would be interesting to look at the chromatic number of the squares of bipartite graphs. In the following section, we summarize our results and their significance.

\subsection{Our results}
	
	 First, we observe some implications of the result of Alon and Mohar~\cite{Alon2002}. In particular, we show the existence of bipartite graphs $B$ such that the chromatic number of $B^2$ is $\Omega(\frac{\omega^2}{\log\omega})$, where $\omega$ is the clique number of $B^2$ (Observation~\ref{obs:bipartite}). 
 Furthermore, we also observe that if the girth of a graph $G$ is greater than six, then the chromatic number of $G^2$ is  $O(\frac{\omega^2}{\log\omega})$, where $\omega$ is the clique number of $G^2$ (Observation~\ref{obs:general}). 
	
	\smallskip
	The above observation implies that the squares of bipartite graphs with girth greater than six have a sub-quadratic $\chi$-binding function. But the question of \emph{whether the squares of (bipartite) graphs, in general, have a sub-quadratic $\chi$-binding function or not}, remains open. In this paper, we are interested in finding sub-classes of bipartite graphs, say $\mathcal{C}$ for which $\mathcal{C}^2$ (i.e. the class of graphs obtained by taking squares of graphs in $\mathcal{C}$) has a linear $\chi$-binding function. Specifically, we investigate the following question.

	\begin{question}
		Identify some popular sub-classes of bipartite graphs whose squares admit a linear $\chi$-binding function. 
	\end{question}

	Let $G=(A,B,E)$ be a bipartite graph. Observe that while taking the square of $G$, the set of edges whose endpoints are in different partite sets remain unchanged in $G^2$, and the additional edges in $G^2$ appear only between the vertices belonging to the same partite set. Also, note that $\chi(G^2)\leq \chi(G^2[A])+\chi(G^2[B])$. Therefore, the structure of the graphs induced by the partite sets, $G^2[A]$ and $G^2[B]$, can have an influence on the chromatic number of $G^2$. A graph $H$ is \emph{perfect} if for every induced subgraph $H'$ of $H$ we have $\col{H'} = \clique{H'}$. Since we are particularly interested in finding the sub-classes of bipartite graphs that admit a linear $\chi$-binding function, the bipartite graphs $G$ for which $G^2[A]$ and $G^2[B]$ are \emph{perfect} is a natural candidate. Let $\mathcal{C}$ be the class of bipartite graphs $G=(A,B,E)$ such that \emph{both $G^2[A]$ and $G^2[B]$ are perfect}. Then observe that for every graph $H$ in $\mathcal{C}^2$, we have $\col{H}\leq 2\clique{H}$. Interestingly, the class of \emph{convex bipartite} graphs is one such class of bipartite graphs. 

 
 
	
	\begin{definition}[Convex bipartite graph]\label{def:convex}
		A bipartite graph $G=(A,B,E)$ is said to be convex bipartite if the vertices in $B$ have an ordering such that, for each vertex $a\in A$, the vertices in the neighborhood of $a$ in $G$ appear consecutively with respect to the ordering.
	\end{definition}
 
Let $\mathcal{C}$ denote the class of convex bipartite graphs and $G=(A,B,E)$ be a graph in $\mathcal{C}$. Note that both $G^2[A]$ and $G^2[B]$ are \emph{interval graphs} (a subclass of perfect graphs) in $G^2$~\cite{Le2019hardness}. But, the graphs in $\mathcal{C}^2$ need not be perfect. For example, the graph $G$ given in Figure~\ref{fig:not_perfect} is a convex bipartite graph for which $G^2$ contains  a $C_5$, namely, $(v_1,v_2,v_3,v_4,v_5)$ as an induced subgraph (edges of $C_5$ are shown in red), and therefore, $G^2$ is not perfect. Therefore, it would be interesting to see whether the trivial upper bound of $2\omega(G^2)$ for $\chi(G^2)$ can be improved for $G\in \mathcal{C}$. 
Note that in the literature, we can see similar instances of refining the trivial $\chi$-binding function $2\omega$ to $3\omega/2$ for other special graph classes as well. The \emph{class of circular-arc graphs} (intersection graphs of a set of arcs on a circle) is such an example~\cite{tucker1975coloring,karapetian1980coloring,valencia2003revisiting}.

 
 \medskip
 
 The following is one of the main results of this paper. 
	\begin{theorem} \label{thm:theorem-convex-bipartite}
		Let $G$ be a convex bipartite graph, and let $\omega(G^2)$ denote the size of a maximum clique in $G^2$. Then $\chi(G^2)\leq \big\lfloor \frac{3\omega(G^2)}{2}\big\rfloor$. Moreover, there exists a convex bipartite graph $H$, such that $\chi(H^2)\geq \frac{5\omega(H^2)}{4} - 2$. 
	\end{theorem}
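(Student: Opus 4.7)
My plan splits into the upper bound algorithm and a lower bound construction. Write $\omega := \omega(G^2)$, $\omega_A := \omega(G^2[A])$, $\omega_B := \omega(G^2[B])$. I begin by recording two easy facts: $\omega_A, \omega_B \le \omega - 1$, since for each $b \in B$ the set $\{b\} \cup N_G(b)$ is a clique of size $1 + d_G(b)$ in $G^2$, and symmetrically for $A$-vertices; and by the discussion preceding the theorem, $G^2[A]$ and $G^2[B]$ are interval graphs (hence perfect), so $\chi(G^2[X]) = \omega_X$ for each side. Disjoint palettes then already give $\chi(G^2) \le 2\omega - 2$, so the real task is to reuse $\omega_A + \omega_B - \lfloor 3\omega/2 \rfloor$ colours between the two sides.

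The plan for the upper bound is a two-phase algorithm. Phase~1 colours $G^2[A]$ by the standard left-endpoint greedy on its natural interval representation $\{I_a\}_{a \in A}$ along the convex ordering, arranging that at every $b_j \in B$ the colours of $N_G(b_j)$ form the initial segment $\{1, \ldots, d_G(b_j)\}$. Phase~2 sweeps $B$ left-to-right in the convex ordering and assigns $b_j$ the smallest positive integer that exceeds $d_G(b_j)$ and differs from the colours of its earlier $G^2[B]$-neighbours; these earlier neighbours are precisely $\{b_i : L(j) \le i < j\}$, where $L(j) = \min\{l_a : a \in N_G(b_j)\}$. The main obstacle is to show that the colour used at each $b_j$ stays within $\lfloor 3\omega/2 \rfloor$. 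Two cliques of $G^2$ sit at $b_j$: the set $\{b_j\} \cup N_G(b_j)$ of size $1 + d_G(b_j)$ and the set $\{b_{L(j)}, \ldots, b_j\} \cup \{a \in N_G(b_j) : l_a = L(j)\}$ of size at least $(j - L(j)) + 2$; each is bounded by $\omega$, but these alone only deliver the trivial $2\omega - 3$. The improvement should come from the overlap between the two forbidden-colour sets in the greedy step: when $d_G(b_j)$ is close to $\omega - 1$, the second clique bound forces few $a \in N_G(b_j)$ to have $l_a = L(j)$, which in turn pushes the earlier $G^2[B]$-neighbours to have been coloured already from within $\{1,\ldots,d_G(b_j)\}$. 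Making this quantitative via a case split on the profile of left-endpoints of $N_G(b_j)$, and possibly switching to a right-to-left sweep on "right-heavy" $b_j$, will be the most delicate part of the proof.

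For the lower bound, my plan is to chain vertex-disjoint copies of a small convex bipartite gadget whose square has $\chi$ strictly exceeding $\omega$; the induced $C_5$ of $G^2$ in Figure~\ref{fig:not_perfect} is the prototype. I would refine the gadget so its square has $\omega = 4$ and $\chi = 5$, and then place $t$ such gadgets at mutually non-interacting positions along the convex ordering of an ambient graph $H$. This gives $\omega(H^2) = 4t + O(1)$ while any proper colouring of $H^2$ must use at least $5t$ colours, yielding $\chi(H^2) \ge \tfrac{5}{4}\omega(H^2) - 2$. The technical work is to design the gadget so that both $\omega = 4$ and $\chi = 5$ are forced on its square (a short case analysis on colourings of a small graph) and to pad between consecutive gadgets with isolated $B$-slots so their $G^2$-neighbourhoods do not overlap; I expect this to be straightforward once the gadget is fixed.
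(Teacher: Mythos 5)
Your upper-bound plan has two genuine gaps. First, Phase~1 cannot in general be ``arranged'': it is not true that a proper colouring of the interval graph $G^2[A]$ can make the colours of $N_G(b_j)$ an initial segment $\{1,\ldots,d_G(b_j)\}$ at every $b_j$ simultaneously. Left-endpoint greedy already fails this (take intervals $[0,1],[0,2],[0,10],[5,6]$ coloured $1,2,3,1$; the point $5.5$ sees colours $\{1,3\}$), and the simultaneous requirement at all points of $B$ is generally unsatisfiable, since a long interval forced into colour $\{1,2\}$ at one clique and a short interval forced into $\{3,4\}$ at another can later meet in a small clique that would need both to lie in $\{1,2\}$. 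Second, even granting some fixed colouring of $G^2[A]$, the step you yourself flag as ``the most delicate part'' is exactly where the theorem lives: your two clique bounds only give the trivial $2\omega-3$, and no quantitative argument is supplied for reusing enough colours to reach $\lfloor 3\omega/2\rfloor$. The paper's proof indicates why a static colouring of $A$ followed by one greedy sweep of $B$ is unlikely to work: it processes $B$ in decreasing order and, whenever all $\lfloor 3\omega/2\rfloor$ colours are blocked at $b_i$, it \emph{re-colours} $A$ by a Kempe change on a carefully chosen pivot/partner colour pair, proving (Claims~5 and~6 of Lemma~\ref{lem:main-convex-bipartite}) that the Kempe component stays inside $A$ and that the pivot vertex strictly decreases, so the process terminates. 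The dynamic recolouring of $A$ is the essential idea your proposal is missing.

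The lower-bound plan contains an outright error. If you place $t$ gadgets at mutually non-interacting positions (so that their $G^2$-neighbourhoods do not overlap), then the squares of the gadgets are pairwise non-adjacent, and both $\omega(H^2)$ and $\chi(H^2)$ are the \emph{maxima} over the gadgets, not the sums; you get $\omega(H^2)=4$ and $\chi(H^2)=5$ for every $t$, not $4t$ and $5t$. To make $\chi$ and $\omega$ grow together in the ratio $5/4$ you must \emph{blow up} a single gadget rather than repeat it: the paper realises the blow-up of $C_5$ (each vertex of the $5$-cycle replaced by a clique $Q_i$ of size $q$) as the square of one connected convex bipartite graph $H$ with $B=Q_2\cup\{z_2,z_3\}\cup Q_3$, giving $\omega(H^2)=2q+3$ while $\chi(H^2)=\tfrac{5}{2}q+2$, since colouring the blown-up $C_5$ requires $\tfrac{5}{2}q$ colours. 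That single scalable construction, not a chain of disjoint copies, is what yields $\chi(H^2)\geq \tfrac{5}{4}\omega(H^2)-2$ for arbitrarily large $\omega$.
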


 Note that the squares of convex bipartite graphs may contain $K_{1,t}$ (for arbitrarily large $t$), or arbitrarily large induced \emph{cliques, paths, cycles or wheels} with odd or even parities, structures like \emph{pyramid, diamond, gem, $2K_2$, paraglider, chair} etc. as induced subgraphs. Therefore, existing results (e.g., ~\cite{karthick2018coloring,karthick2019,kloks2009even,valerio,scott2020survey}) that guarantee a $3\omega/2$ $\chi$-binding function for graphs that do not contain some of the above structures as induced subgraphs, cannot be applied directly to prove Theorem~\ref{thm:theorem-convex-bipartite}. In 1879, while attempting to
	prove the Four Colour Theorem, Kempe~\cite{kempe1879} introduced an elementary operation on 
	graph coloring that later became known as a \emph{Kempe change}.  Let $c$ be a $k$-coloring of a graph $G$. Then, a \emph{Kempe chain} is a maximal bichromatic component. A \emph{Kempe change} in $c$ is equivalent to swapping the two colors
	in a Kempe chain to produce a different $k$-coloring of $G$. 
	
	\smallskip
	We use this well-known Kempe change technique as a tool for proving Theorem~\ref{thm:theorem-convex-bipartite}.  Further, for a graph class, $\mathcal{C}$, as noted by Scott and Seymour~\cite{scott2020survey}, the following question is also interesting. 
 \begin{question}
     For a graph class $\mathcal{C}$, is there a polynomial-time algorithm for coloring graphs $G\in \mathcal{C}$ that uses only $f(\omega(G))$ colors, where $f$ is the $\chi$-binding function for $\mathcal{C}$?
 \end{question}
 Recall that we prove in Theorem~\ref{thm:theorem-convex-bipartite} that the class of squares of convex bipartite graphs is $3\omega/2$-colorable. In fact, our proof also yields a $3/2$-approximation algorithm for coloring the squares of convex bipartite graphs and thereby answers the above question for the same class. Further, we note that for a convex bipartite graph $G$, the function relating $\chi(G^2)$ and $\omega(G^2)$ is not the same as the function relating $\chi(G^2)$ and $\Delta(G)$. In particular, if $G$ is a convex bipartite graph with maximum degree $\Delta$, we prove that $\chi(G^2)\leq 2\Delta$. Moreover, there exist convex bipartite graphs with $\chi(G^2)=2\Delta$ (Theorem~\ref{thm:degree}).
	
	\medskip
The study on the structure of squares of convex bipartite graphs motivated us to introduce the notion of \emph{partite testable properties} for the squares of bipartite graphs, which is defined as follows.
\begin{definition}[Partite testable property] \label{def:partitetest}
    Let $\mathcal{C}$ be a class of bipartite graphs and let $P$ be a graph property. We say that $P$ is a partite testable property for $\mathcal{C}^2$, if for any bipartite graph $G=(A,B,E)$ in $\mathcal{C}$ whenever the induced subgraphs $G^2[A]$ and $G^2[B]$ satisfies the property $P$ then $G^2$ also satisfies the property $P$.
\end{definition}
For instance, in Theorem~\ref{thm:oddantihole}, we prove that \emph{the property of not containing odd anti-holes of length greater than five is a partite testable property for the squares of bipartite graphs in general}. i.e. for a bipartite graph $G=(A,B,E)$ if $G^2[A]$ and $G^2[B]$ do not contain odd anti-holes of length greater than five, then $G^2$ does not contain odd anti-holes of length greater than five.

\smallskip

 Now, a natural question is whether \emph{perfectness} is a partite testable property (in general) or not. 
 Note that a graph $G$ is perfect if and only if it does not contain odd holes or odd anti-holes as induced subgraphs. As mentioned above, the property of \emph{not containing odd anti-holes of length greater than five} is a partite testable property for the squares of bipartite graphs. However, the property of being \emph{perfect} fails to be a partite testable property even for the squares of convex bipartite graphs, because \emph{(odd-hole)-freeness} is not a partite testable property 
 for the class of squares of convex bipartite graphs. To see this, recall that for a convex bipartite graph $G=(A,B,E)$, the subgraphs $G^2[A]$ and $G^2[B]$ are both \emph{interval graphs} (a subclass of perfect graphs, and hence, (odd hole)-free)~\cite{Le2019hardness}, whereas the graph $G^2$ is \emph{not necessarily (odd hole)-free} and hence, not always \emph{perfect} (see Figure~\ref{fig:not_perfect}). 

 \medskip

On the other hand, we find some interesting subclasses of squares of convex bipartite graphs for which \emph{perfectness is a partite testable property} (see Theorem~\ref{thm:biconvex} and Theorem~\ref{thm:c4free}).  Such a graph class includes \emph{$C_5$-free squares of convex bipartite graphs} (i.e. graphs $G^2$ such that $G^2$ is $C_5$-free and $G$ is a convex bipartite graph). 
Further, we observe that for the class of \emph{$C_4$-free squares of convex bipartite graphs}, even \emph{chordality is a partite testable property}. The above results are also interesting due to the fact that the above subclasses are not hereditary. For example, an induced subgraph of the square of a convex bipartite graph need not be the square of a convex bipartite graph. Even though the notion of perfectness is not just limited to hereditary graph classes, most of the well-known classes of perfect graphs are hereditary. Theorem~\ref{thm:c4free} contributes the class of graphs, namely, \emph{$C_5$-free squares of convex bipartite graphs} 
to the rare collection of non-hereditary perfect graphs.
	
	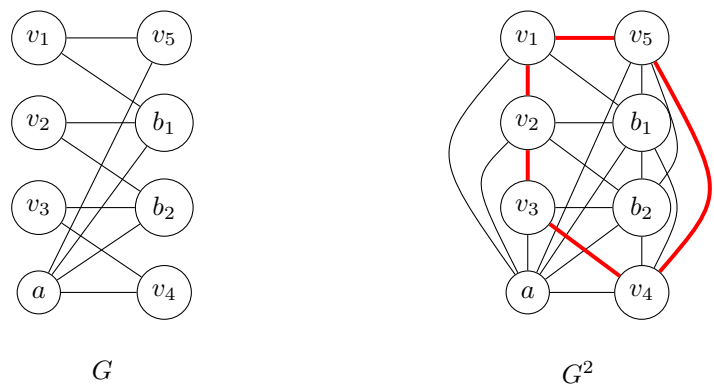
\begin{figure}[h]
		\begin{tabular}{p{.4\textwidth}p{.4\textwidth}}
			\parbox{.4\textwidth}{\centering
				\begin{tikzpicture}[scale=.75]
					\node[draw,circle] (a1) at (0,-4.5) {$a$};
					\node[draw,circle] (a2) at (0,0) {$v_1$};
					\node[draw,circle] (a3) at (0,-1.5) {$v_2$};
					\node[draw,circle] (a4) at (0,-3) {$v_3$};
					\node[draw,circle] (b1) at (2.2,0) {$v_5$};
					\node[draw,circle] (b2) at (2.2,-1.5) {$b_1$};
					\node[draw,circle] (b3) at (2.2,-3) {$b_2$};
					\node[draw,circle] (b4) at (2.2,-4.5) {$v_4$};
					\draw (a1) -- (b1);
					\draw (a1) -- (b2);
					\draw (a1) -- (b3);
					\draw (a1) -- (b4);
					\draw (a2) -- (b1);
					\draw (a2) -- (b2);
					\draw (a3) -- (b2);
					\draw (a3) -- (b3);
					\draw (a4) -- (b3);
					\draw (a4) -- (b4);
			\end{tikzpicture}} &
			\parbox{.4\textwidth}{\centering
				\begin{tikzpicture}[scale=.75]
					\node[draw,circle] (a1) at (0,-4.5) {$a$};
					\node[draw,circle] (a2) at (0,0) {$v_1$};
					\node[draw,circle] (a3) at (0,-1.5) {$v_2$};
					\node[draw,circle] (a4) at (0,-3) {$v_3$};
					\node[draw,circle] (b1) at (2,0) {$v_5$};
					\node[draw,circle] (b2) at (2,-1.5) {$b_1$};
					\node[draw,circle] (b3) at (2,-3) {$b_2$};
					\node[draw,circle] (b4) at (2,-4.5) {$v_4$};
					\draw (a1) -- (b1);
					\draw (a1) -- (b2);
					\draw (a1) -- (b3);
					\draw (a1) -- (b4);
					\draw[color=red,line width=1.5pt] (a2) -- (b1);
					\draw (a2) -- (b2);
					\draw (a3) -- (b2);
					\draw (a3) -- (b3);
					\draw (a4) -- (b3);
					\draw[color=red,line width=1.5pt] (a4) -- (b4);
					\draw (a1) -- (a4);
     \draw (a1).. controls (-1.75,-2).. (a2);
					\draw[color=red,line width=1.5pt] (a2) -- (a3);
					\draw[color=red,line width=1.5pt] (a3) -- (a4);
					\draw (a1).. controls (-1.00,-2.5).. (a3);

					\draw (b1) -- (b2);
					\draw (b2) -- (b3);
					\draw (b3) -- (b4);
					\draw (b1).. controls (2.75,-2).. (b3);
					\draw[color=red,line width=1.5pt] (b1).. controls (3.5,-2.75).. (b4);
					\draw (b2).. controls (2.75,-3).. (b4);
			\end{tikzpicture}}\\
			\vspace{.1in}
			\parbox{.4\textwidth}{\centering $G$} &  \vspace{.1in} \parbox{.4\textwidth}{\centering $G^2$}
		\end{tabular}
		\caption{A convex bipartite graph, whose square is not perfect}
		\label{fig:not_perfect}
	\end{figure}

	\subsection{Preliminaries} \label{sec:prelim}
	Given a graph $G = (V,E)$, a set of vertices $S \subseteq V(G)$ is said to be a \emph{clique} in $G$ if they are pairwise adjacent in $G$. On the other hand, for a set of vertices $S \subseteq V(G)$, if no two vertices in $S$ are adjacent in $G$, then $S$ is said to be an \emph{independent set} in $G$. The square of a graph $G=(V,E)$, denoted as $G^2$, is defined to be the graph with $V(G^2)=V(G)$ and $E(G^2)=\{uv: uv\in E(G)$ or there exists a $w\in V(G)$ such that $uw,wv\in E(G)\}$. A graph $H$ is called an \emph{induced subgraph} of a graph $G$ if $V (H) \subseteq V (G)$ and
	$E(H) = \{uv \in E(G) : u, v \in V (H)\}$; $H$ is then also said to be the subgraph induced by
	$V (H)$. A \emph{cycle} in a graph $G$ is defined as a sequence of disjoint vertices, denoted as $(v_1,\ldots, v_k)$ such that $\{ v_iv_{i+1}: 1 \leq i \leq k-1 \} \cup \{ v_kv_i \} \subseteq E(G)$.  The length of a shortest cycle contained in a graph $G$ is called the \emph{girth} of $G$. Let $C=(v_0,v_1,\ldots,v_{k-1})$ be a cycle in $G$. We say that $P$ is a \emph{sub-path} of $C$ if there exist $i,j\in \{0,1,2,\ldots,k-1\}$  such that $P=(v_i,v_{i+1},\ldots,v_j)$ (indices modulo $k$) and $v_iv_j\notin E(C)$.  
	A \emph{hole} is defined as an induced cycle of length at least four. The complement of a hole is called an \emph{antihole}. A graph $G$ is said to be \emph{$H$-free} if $G$ does not contain $H$ as an induced subgraph. A graph is said to be \emph{chordal} if it is hole-free. Let $H$ be a graph and $\mathcal{C}$ be a graph class. By \emph{$H$-free squares of $\mathcal{C}$}, we mean the graphs $G^2$ such that $G^2$ is $H$-free and $G\in \mathcal{C}$.
	
	\medskip
	A graph $G$ is called a \emph{bipartite graph} if its vertices can be partitioned into two disjoint independent sets $A$ and $B$ such that every edge in $G$ has one endpoint in $A$ and the other endpoint in $B$. A bipartite graph with partite sets $A$ and $B$ is usually denoted as \emph{$G=(A,B,E)$}. A bipartite graph is said to be \emph{chordal bipartite} if it does not contain any induced cycle of length at least six. Note that chordal bipartite graphs form a superclass of \emph{convex bipartite graphs} (see Definition~\ref{def:convex}). A graph $G$ is an \emph{interval graph} if the vertices of $G$ can be represented by intervals on the real line such that two vertices in $G$ are adjacent if and only if the corresponding intervals intersect. The corresponding set of intervals is called an {\em interval representation} of $G$. A \emph{proper interval graph} is an interval graph that has an interval representation in which no interval properly contains another. A graph $G$ is said to be \emph{perfect} if for each induced subgraph $H$ of $G$, the chromatic number $\chi(H)=\omega(H)$, the clique number. The well-known \emph{Strong Perfect Graph Theorem} states that \emph{a graph $G$ is perfect if and only if $G$ is both (odd hole)-free and (odd antihole)-free}~\cite{chudnovsky2006strong}. Many graph families, including bipartite graphs, chordal graphs, etc., are perfect.

	\medskip \noindent \textbf{Notation:} For a vertex $v$ in a graph $G$, we denote by $N_{G}(v)$, the set of vertices adjacent to $v$ in $G$ and $N_{G}[v] = N_{G}(v) \cup \{v\}$. 
 For a set $S\subseteq V(G)$, we denote by $G[S]$, the subgraph induced by $S$. For vertices $u,v\in V(G)$, we denote by $d_G(u,v)$, the distance (length of the shortest path) between $u$ and $v$ in $G$. For a graph $G$ and $S \subseteq V(G)$, we use $G-S$ to denote the graph $G[V(G)-S]$. For two graphs $G$ and $H$, we denote by $G\cong H$ if $G$ and $H$ are isomorphic to each other. Depending upon the context, sometimes we denote a cycle (path) of length $k$ as $(v_1,\ldots, v_k)$. 
	


	\section{Observations on the asymptotic bounds}
	
	In this section, we derive some implications (Observations~\ref{obs:general},~\ref{obs:general_lower},~\ref{obs:bipartite}) of the following upper and lower bounds for the chromatic numbers of squares of graphs in the work of Alon and Mohar~\cite{Alon2002} stated below. 
	
	\begin{theorem}[\cite{Alon2002}]\label{thm:alon}
		\begin{myenumerate}
			\item\label{upperbound} Let $\Delta \geq 2$ and $g\geq 7$ be any integers. There exists an absolute constant $c_1$ such that for any graph $G$ with maximum degree $\Delta$ and girth $g$  we have, $\chi(G^2)\leq c_1\frac{\Delta^2}{\log \Delta}$.
			\item \label{lowerbound}  Let $\Delta\geq 2$ and $g\geq 7$ be any integers. There exist an absolute constant $c_2$ and a graph $G$ with maximum degree $\Delta$, girth $g$, and $\chi(G^2)\geq c_2\frac{\Delta^2}{\log \Delta}$
		\end{myenumerate}
	\end{theorem}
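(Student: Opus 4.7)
The plan is to follow the Alon--Mohar framework, splitting into a sparse-neighborhood coloring argument for the upper bound and a probabilistic extremal construction for the lower bound. For the upper bound, I first observe that $G^2$ has maximum degree at most $\Delta^2$, giving the trivial estimate $\chi(G^2) \leq \Delta^2+1$. To shave the $\log\Delta$ factor I would invoke a local-sparseness coloring theorem in the spirit of Johansson's theorem (or the Alon--Krivelevich--Sudakov refinement): if a graph $H$ has maximum degree $D$ and the neighborhood of every vertex spans at most $D^2/f$ edges, then $\chi(H)=O(D/\log f)$. Applied with $D=\Delta^2$ and $f=\Theta(\Delta)$ this yields exactly $O(\Delta^2/\log\Delta)$.

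The substantive step is therefore to verify the local-sparseness hypothesis for $H=G^2$. Fix a vertex $v$ and look at two vertices $x,y \in N_{G^2}(v)$ that are adjacent in $G^2$. I would argue that the girth assumption $g\geq 7$ forces the structural witnesses of these adjacencies to be nearly unique: any two common neighbors, or any two length-two paths, between a pair of vertices in $N_{G^2}[v]$ would close up into a cycle of length at most $6$ through $v$. A short case analysis of the short paths in $G$ that can produce an edge $xy$ of $G^2$ with both endpoints in $N_{G^2}(v)$ then bounds the number of edges of $G^2$ inside $N_{G^2}(v)$ by $O(\Delta^3)$, which is precisely $D^2/\Omega(\Delta)$. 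Plugging this into the Johansson-type theorem gives the claim.

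For the lower bound I would produce, for each $\Delta$ and $g$, a $\Delta$-regular graph $G$ of girth $\geq g$ whose square has large chromatic number. The standard route is the probabilistic method: take a random $\Delta$-regular graph on $n$ vertices, delete one vertex from each short cycle (the expected number of cycles of length $<g$ is $o(n)$ for $n$ large in terms of $\Delta,g$), leaving a graph of girth $\geq g$ on $\Omega(n)$ vertices. Then an independent set in $G^2$ is a set of vertices pairwise at distance $\geq 3$ in $G$, so a simple sphere-packing / first-moment estimate on such distance-$3$ codes in a high-girth regular graph yields $\alpha(G^2) = O(n\log\Delta/\Delta^2)$, and hence $\chi(G^2) \geq n/\alpha(G^2) = \Omega(\Delta^2/\log\Delta)$.

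The hard part will be the upper bound, specifically quantifying the sparseness of neighborhoods in exactly the form demanded by the Johansson-type theorem; it is not enough to observe qualitatively that $G^2$ is locally sparse when $G$ has large girth, one must produce the clean $O(\Delta^3)$ edge bound above, which is where the girth threshold $g\geq 7$ is used in an essential way (girth $6$ would allow too many length-two paths between pairs in $N_{G^2}(v)$). In the lower bound the analogous difficulty is simultaneously enforcing the girth condition and controlling the independence number of $G^2$, since naively truncating short cycles can destroy the very concentration estimates needed for the $\alpha(G^2)$ upper bound.
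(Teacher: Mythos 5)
This statement is not proved in the paper at all: it is Theorem~\ref{thm:alon}, quoted verbatim from Alon and Mohar~\cite{Alon2002} and used as a black box to derive Observations~\ref{obs:general}, \ref{obs:general_lower} and \ref{obs:bipartite}. So there is no in-paper proof to compare your attempt against; what you have written is a reconstruction of the original Alon--Mohar argument. Your upper-bound sketch is faithful to that argument: one bounds the number of edges of $G^2$ spanned by any neighborhood $N_{G^2}(v)$ by $O(\Delta^3)$ using the girth hypothesis (two distinct short connecting paths would close a cycle of length at most six), and then applies the Alon--Krivelevich--Sudakov / Johansson-type theorem for graphs with sparse neighborhoods with $D=\Delta^2$ and $f=\Theta(\Delta)$. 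That part is the right approach and, modulo the case analysis you defer, would go through.

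The lower bound is where you have a genuine gap. A sphere-packing estimate cannot give $\alpha(G^2)=O(n\log\Delta/\Delta^2)$: packing disjoint radius-one balls around the vertices of a distance-$3$ code only yields $\alpha(G^2)=O(n/\Delta)$, which translates to $\chi(G^2)=\Omega(\Delta)$ and loses an entire factor of $\Delta/\log\Delta$. The actual content of the Alon--Mohar lower bound is a first-moment (union bound) computation over all vertex subsets of size $s=Cn\log\Delta/\Delta^2$ in a suitable random graph, showing that with positive probability no such set is independent in the square; the difficulty is that the events ``$u,v$ are at distance at most $2$'' are not independent across pairs, and controlling these correlations (while simultaneously keeping enough randomness after destroying the short cycles) is precisely the nontrivial step. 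Your proposal names this as ``the analogous difficulty'' but offers no mechanism for resolving it, and the one quantitative tool you do invoke (sphere packing) provably falls short of the required bound. As a reconstruction of the cited theorem, the upper bound half is essentially complete in outline and the lower bound half is not.
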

	
	Since we are interested in bounding $\chi(G^2)$ in terms of $\omega (G^2)$ (unlike in Theorem~\ref{thm:alon} where the bounds are in terms of the maximum degree, $\Delta$ of $G$), for graphs having girth, $g\geq 7$,  in the following lemma we find a relation between the parameters maximum degree and clique number of $G^2$. 
	
	\begin{lemma}\label{lem:omega} 
		Let $G$ be a graph with girth $g\geq 7$ and maximum degree $\Delta\geq 2$. Then $\omega=\omega(G^2)=\Delta+1$.
	\end{lemma}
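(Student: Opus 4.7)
The plan is to prove the two inequalities $\omega(G^2)\geq \Delta+1$ and $\omega(G^2)\leq \Delta+1$ separately. The lower bound is immediate: let $v\in V(G)$ be a vertex with $\deg_G(v)=\Delta$; then $\{v\}\cup N_G(v)$ is a clique in $G^2$ of size $\Delta+1$, since any two vertices in this set are at distance at most $2$ in $G$ (via $v$).

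For the upper bound, I would take a clique $K$ in $G^2$ with $|K|\geq 3$ and aim to show that $K\subseteq N_G[c]$ for some vertex $c\in V(G)$. The argument splits on whether $E(G)$ contains an edge both of whose endpoints lie in $K$. \emph{Case A:} suppose no two vertices of $K$ are adjacent in $G$. Then every pair $\{x,y\}\subseteq K$ has a common neighbor in $G$. I would show that all these common neighbors must coincide, as follows: if $w_{xy}$ is a common neighbor of $x,y$ and $w_{xz},w_{yz}$ are common neighbors for the other pairs, and they are not all equal, then one obtains a closed walk $x\!-\!w_{xy}\!-\!y\!-\!w_{yz}\!-\!z\!-\!w_{xz}\!-\!x$ which, after a short verification that the intermediate vertices are distinct (using that pairs in $K$ are non-adjacent in $G$, so e.g.\ $w_{xy}\neq z$), yields a cycle of length at most $6$, contradicting $g\geq 7$. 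Hence all pairs share a single common neighbor $c$, so $K\subseteq N_G(c)$ and $|K|\leq \Delta$.

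\emph{Case B:} suppose some edge $uv\in E(G)$ has $u,v\in K$. For any other $x\in K$, I would show that exactly one of $xu,xv$ is an edge of $G$: if both are edges, we get a triangle $uvx$; if neither is, then $x,u$ share a common neighbor $w_u$ and $x,v$ share a common neighbor $w_v$, and either $w_u=w_v$ (giving triangle $w_uuv$) or $w_u\neq w_v$ (giving the $5$-cycle $x\!-\!w_u\!-\!u\!-\!v\!-\!w_v\!-\!x$). Partition $K\setminus\{u,v\}$ into $K_u=\{x:xu\in E\}$ and $K_v=\{x:xv\in E\}$. If both parts are nonempty with $x\in K_u,y\in K_v$, then $d_{G}(x,y)\leq 2$, and one checks by the same case analysis (on whether $xy\in E$ and on the possible common neighbors) that a $4$- or $5$-cycle appears, again contradicting $g\geq 7$. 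So one of $K_u,K_v$ is empty; in either case $K\subseteq N_G[u]$ or $K\subseteq N_G[v]$, giving $|K|\leq \Delta+1$.

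The main obstacle, which is really just book-keeping, is handling the common-neighbor sub-cases in both Case A and Case B and ensuring that the closed walks produced are genuine cycles (i.e.\ that the intermediate vertices are distinct, so the girth bound can be invoked). Because $g\geq 7$ forbids cycles of every length from $3$ through $6$, each small sub-case has enough room to be ruled out; combining both directions gives $\omega(G^2)=\Delta+1$.
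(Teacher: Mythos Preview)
Your argument is correct and follows the same high-level strategy as the paper—show that every clique $K$ of $G^2$ lies inside some closed neighborhood $N_G[v]$, whence $|K|\le\Delta+1$—but the case organisation differs. The paper proceeds by a single contradiction: assuming $C\nsubseteq N_G[v]$ for every $v$, it picks $a\in C$, a vertex $b\in C\setminus N_G[a]$, their common $G$-neighbour $c$, and then some $x\in C\setminus N_G[c]$; a four-way split on which of $ax,bx$ lie in $E(G)$ produces a cycle of length $4$, $5$, or $6$. You instead split first on whether $K$ contains a $G$-edge (your Cases A and B) and handle each separately. The paper's single pass is a little more economical, while your dichotomy makes the two structurally different situations (an independent set in $G$ versus a clique containing a $G$-edge) explicit and yields the slightly sharper $|K|\le\Delta$ in Case~A. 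One minor wording issue in Case~A: when exactly two of $w_{xy},w_{yz},w_{xz}$ coincide, the six vertices of your closed walk are not all distinct, but a $4$-cycle still appears; your phrase ``cycle of length at most $6$'' already covers this, though the parenthetical about distinctness is slightly misleading for that sub-case.
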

	\begin{proof}
		Let $C$ be any maximum clique in $G^2$. Clearly, $|C|\geq \Delta+1$, as for any vertex $v$ in $G$, the set $N_G[v]$ form a clique in $G^2$. Our goal is to prove that $|C|\leq \Delta+1$ as well, which then proves the lemma. Note that if there exists a vertex $v$ in $G$ such that $C\subseteq N_G[v]$, then we are done. Suppose that no such vertex exists in $G$. i.e. $C\nsubseteq N_G[v]$ for any vertex $v$ in $G$. Now, consider a vertex $a\in C$. Then, by our assumption, there exists a vertex $b\in C$ such that $b\notin N_G[a]$. Since $ab\in E(G^2)$ (as $a,b\in C$), we have that there exists a vertex $c\in V(G)$ such that $ac,bc\in E(G)$. As we also have $C\nsubseteq N_G[c]$, there exists a vertex $x\in C$ such that $x\notin N_G[c]$. This implies that $x\neq a,b,c$. Since $ax,bx\in E(G^2)$ (as $a,b,x\in C$), at least one of the following conditions should hold:
		\begin{enumerate}
			\item $ax,bx\in E(G)$.
			\item $bx\in E(G)$, but $ax\notin E(G)$ and there exists a vertex $z$ in $G$ such that $az,zx\in E(G)$. 
			\item $ax\in E(G)$, but $bx\notin E(G)$ and there exists a vertex $y$ in $G$ such that $by,yx\in E(G)$.
		\sloppy	\item $ax,bx\notin E(G)$, and there exist two vertices $y,z$ (possibly, $y=z$) in $G$ such that $by,yx,az,zx\in E(G)$.
		\end{enumerate}
		
		If (1) holds, we have a 4-cycle, namely, $(b,x,a,c)$ in $G$. Note that as $cx\notin E(G)$, we have $c\neq y,z$. If either (2) or (3) hold, we have 5-cycles, namely, $(a,z,x,b,c)$ or $(b,y,x,a,c)$, respectively, in $G$ (note that since $ab\notin E(G)$, we have $z\neq b$ and $y\neq a$). If (4) holds, then we have a 6-cycle, namely, $(b,y,x,z,a,c)$ in $G$ when $y\neq z$, and a 4-cycle in $G$, namely $(b,y=z,a,c)$ when $y=z$. Therefore, in any case, we have a contradiction to the fact that the girth, $g\geq 7$. Thus, we can conclude that our assumption is not true. Hence the lemma.
	\end{proof}
	
	
	Now Observation~\ref{obs:general} (respectively, Observation~\ref{obs:general_lower}) follows from Theorem~\ref{thm:alon}\ref{upperbound} (respectively, Theorem~\ref{thm:alon}\ref{lowerbound}) and Lemma~\ref{lem:omega}.
	
	\begin{observation}\label{obs:general}
		Let $\omega \geq 3$ and $g\geq 7$ be any integers. There exists an absolute constant $c_1$ such that for any graph $G$ with girth $g$ and $\omega=\omega(G^2)$, we have  $\chi(G^2)\leq c_1\frac{(\omega -1)^2}{\log(\omega-1)}$.    
	\end{observation}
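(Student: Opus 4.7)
The plan is to derive the bound as a direct consequence of the two results already in hand: Theorem~\ref{thm:alon}\ref{upperbound} gives an upper bound on $\chi(G^2)$ in terms of the maximum degree $\Delta$, while Lemma~\ref{lem:omega} translates $\Delta$ into the clique parameter $\omega = \omega(G^2)$ whenever the girth is at least $7$. The task is therefore just to verify that the hypotheses line up and to substitute.

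First I would check the admissibility conditions. Since $\omega \geq 3$ by hypothesis, Lemma~\ref{lem:omega} (which requires $\Delta \geq 2$) is applicable whenever $G$ has girth $g \geq 7$, and it yields the identity $\Delta(G) = \omega - 1$. In particular $\Delta \geq 2$, so the hypothesis of Theorem~\ref{thm:alon}\ref{upperbound} is also satisfied. Next I would plug $\Delta = \omega - 1$ into the inequality $\chi(G^2) \leq c_1 \frac{\Delta^2}{\log \Delta}$ furnished by Theorem~\ref{thm:alon}\ref{upperbound}, obtaining
\[
\chi(G^2) \;\leq\; c_1 \, \frac{(\omega-1)^2}{\log(\omega - 1)},
\]
with the same absolute constant $c_1$ as in Theorem~\ref{thm:alon}\ref{upperbound}. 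This is exactly the claimed bound, so the proof concludes.

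Because the argument is just a substitution, there is no real obstacle; the only point that requires a moment of care is confirming that the case $\omega = 3$ (equivalently $\Delta = 2$) is covered by both invoked results, and that the absolute constant in the conclusion can be taken to be the same $c_1$ from Theorem~\ref{thm:alon}\ref{upperbound} without any rescaling.
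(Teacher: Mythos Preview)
Your proposal is correct and follows exactly the same approach as the paper, which simply states that the observation ``follows from Theorem~\ref{thm:alon}\ref{upperbound} and Lemma~\ref{lem:omega}.'' The only minor wrinkle is that your check that $\Delta \geq 2$ is phrased as a consequence of Lemma~\ref{lem:omega} rather than as a precondition for it; a cleaner justification is to note directly that $\Delta \leq 1$ forces $\omega(G^2) \leq 2$, so $\omega \geq 3$ already guarantees $\Delta \geq 2$ before the lemma is invoked.
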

	
	\begin{observation}
		\label{obs:general_lower}
		Let $\omega \geq 3$ and $g\geq 7$ be any integers. There exist an absolute constant $c_2$ and a graph $G$ such that $G$ has girth $g$, $\omega=\omega(G^2)$, and $\chi(G^2)\geq c_2\frac{(\omega -1)^2}{\log(\omega-1)}$.
	\end{observation}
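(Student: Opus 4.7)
The plan is to obtain the desired $G$ by directly combining the lower-bound half of Theorem~\ref{thm:alon} with the clique-size formula proved in Lemma~\ref{lem:omega}. Given $\omega\geq 3$ and $g\geq 7$, I would set $\Delta:=\omega-1$, which satisfies $\Delta\geq 2$, and invoke Theorem~\ref{thm:alon}\ref{lowerbound} with parameters $\Delta$ and $g$. This yields an absolute constant $c_2$ and a graph $G$ of maximum degree $\Delta$ and girth $g$ with $\chi(G^2)\geq c_2\,\Delta^2/\log\Delta$. I would then use $G$ itself as the witness graph for the observation.

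The only thing that remains is to convert the $\Delta$-bound into an $\omega(G^2)$-bound. Since $G$ has girth $g\geq 7$ and $\Delta\geq 2$, Lemma~\ref{lem:omega} applies and forces $\omega(G^2)=\Delta+1$. Setting this equal to $\omega$ gives $\Delta=\omega-1$, so substituting into the lower bound from Theorem~\ref{thm:alon}\ref{lowerbound} produces $\chi(G^2)\geq c_2\,(\omega-1)^2/\log(\omega-1)$, with the same constant $c_2$, as required.

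There is no real obstacle here: the statement is essentially a rewording of Theorem~\ref{thm:alon}\ref{lowerbound} in the language of $\omega(G^2)$ instead of $\Delta(G)$, and Lemma~\ref{lem:omega} is precisely the translation tool. The only minor sanity checks are that the girth hypothesis $g\geq 7$ needed by both results is present by assumption, and that the requirement $\Delta\geq 2$ in Lemma~\ref{lem:omega} holds because $\omega\geq 3$. Hence the proof will be only a line or two long.
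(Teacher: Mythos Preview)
Your proposal is correct and matches the paper's approach exactly: the paper simply states that Observation~\ref{obs:general_lower} follows from Theorem~\ref{thm:alon}\ref{lowerbound} together with Lemma~\ref{lem:omega}, which is precisely the argument you outline (set $\Delta=\omega-1$, invoke the lower bound, and use Lemma~\ref{lem:omega} to identify $\omega(G^2)=\Delta+1$).
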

	
	Interestingly, the tight example graph in Observation~\ref{obs:general_lower} can be converted to a bipartite graph by using the following simple reduction. This further helps us to prove the existence of bipartite graphs $B$ such that the chromatic number of $B^2$ is $\Omega(\frac{\omega^2}{\log\omega})$, where $\omega$ is the clique number of $B^2$.
	
	\medskip
	\noindent\textbf{Reduction:} Given a graph $G$, we define a bipartite graph $B_G$ (which is obtained by splitting each vertex of $G$) as follows:
	\begin{align*}
		V(B_G)&=A\cup B, \text{ where } A=\{u':u\in V(G\} \text{ and } B=\{u'':u\in V(G)\} \\
		E(B_G)&=\{u'v'': u'\in A, v''\in B, \text{ such that either } u=v \text{ or } uv\in E(G)\}
	\end{align*}
	
	We have the following lemma for the bipartite graph $B_G$, constructed by the above reduction.
	
	\begin{lemma}\label{claim: isomorphic}
		For any graph $G$, we have $B_G^2[A]\cong G^2$ and $B_G^2[B]\cong G^2$.   
	\end{lemma}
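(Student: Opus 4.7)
The plan is to exhibit an explicit isomorphism $\phi\colon V(G)\to A$ given by $\phi(u)=u'$ (and symmetrically $\psi(u)=u''$ for $B$) and show it preserves adjacency in $G^2$ and $B_G^2[A]$ in both directions. The whole argument reduces to a careful unpacking of the definitions; the only subtlety is remembering that each ``split'' edge $u'u''$ (arising from the $u=v$ clause in the definition of $E(B_G)$) makes every edge $uv\in E(G)$ appear as a length-$2$ path $u'\textnormal{-}v''\textnormal{-}v'$ (and also $u'\textnormal{-}u''\textnormal{-}v'$) in $B_G$.

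First I would verify the forward direction: assume $uv\in E(G^2)$ with $u\neq v$. If $uv\in E(G)$, then by construction $u'v''\in E(B_G)$; since also $v'v''\in E(B_G)$ (the split edge), $v''$ is a common neighbor of $u'$ and $v'$ in $B_G$, so $u'v'\in E(B_G^2[A])$. Otherwise there is some $w\in V(G)$ with $uw,wv\in E(G)$, and then $u'w'',v'w''\in E(B_G)$, giving $u'v'\in E(B_G^2[A])$ via the common neighbor $w''$.

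Second I would verify the reverse direction: suppose $u'v'\in E(B_G^2[A])$ with $u'\neq v'$. Since $B_G$ is bipartite and both endpoints lie in $A$, there must exist $w''\in B$ with $u'w'',v'w''\in E(B_G)$. By the definition of $E(B_G)$, each of these edges forces either $u=w$ or $uw\in E(G)$, and either $v=w$ or $vw\in E(G)$. A straightforward four-case check (excluding the impossible $u=w=v$) shows that in each surviving case $uv\in E(G)\subseteq E(G^2)$ or $u,v$ share the common neighbor $w$ in $G$, hence $uv\in E(G^2)$.

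Combining the two directions, $\phi$ is an isomorphism $G^2\to B_G^2[A]$. The argument for $B_G^2[B]\cong G^2$ is verbatim the same with the roles of $A$ and $B$ interchanged (the construction of $B_G$ is symmetric in its two parts, since $uv\in E(G)$ puts both $u'v''$ and $v'u''$ into $E(B_G)$, and the split edges $u'u''$ serve the same purpose from either side). I do not anticipate a real obstacle here; the only thing to be careful about is not to overlook the $u=w$ and $v=w$ sub-cases, which are precisely what the split edges $u'u''$ of $B_G$ encode.
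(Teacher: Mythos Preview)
Your proposal is correct and follows essentially the same approach as the paper's proof: both use the obvious bijection $u\mapsto u'$, handle the forward direction by splitting into the cases $uv\in E(G)$ versus the existence of a common neighbor $w$, and handle the reverse direction by picking a common neighbor $w''\in B$ of $u',v'$ in $B_G$ and then case-splitting on whether $w''\in\{u'',v''\}$ (your four-case check). The paper likewise notes that the $B$-side is symmetric and omits it.
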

	\begin{proof}
		We only give the proof for $B_G^2[A]\cong G^2$, as similar arguments can be used to prove $B_G^2[B]\cong G^2$. By the definition of $B_G$, every vertex $u\in V(G^2)$ corresponds to a vertex $u'\in V(B_G^2[A])$ (this provides us a bijective mapping from $V(G^2)$ to $V(B_G^2[A])$). It is now enough to show that for any two vertices $u,v\in V(G^2)$, we have $uv\in E(G^2)$ if and only if $u'v'\in E(B_G^2[A])$. Suppose that $uv\in E(G^2)$. If $uv\in E(G)$, we then have $u'v'',v''v'\in E(B_G)$, and therefore, $u'v'\in E(B_G^2[A])$. If $uv\notin E(G)$, this implies that there exists a vertex $w\in V(G)$ such that $uw,wv\in E(G)$. This further implies that, $u'w'',w''v'\in E(B_G)$, and therefore $u'v'\in E(B_G^2[A])$. This proves the if part. To prove the converse, assume that $u'v'\in E(B_G^2[A])$. Since $B_G$ is a bipartite graph and both the vertices $u',v'\in A$, this implies that there exists a vertex $w''\in B$ such that $u'w'',w''v'\in E(B_G)$. If either $w''=u''$ or $w''=v''$, by the definition of $B_G$, we then have $uv\in E(G)\subseteq E(G^2)$. Therefore, we can assume that $w''\neq u'',v''$. Then the fact that $u'w'',w''v'\in E(B_G)$ implies that $uw,wv\in E(G)$. This further implies that $uv\in E(G^2)$, and we are done. 
	\end{proof}
	
	We note the following observation.
	\begin{observation}\label{obs:clique_BG}
		For any graph $G$, we have $\omega(G^2)\leq \omega(B_G^2)\leq 2\omega(G^2)$ and $\chi(G^2)\leq \chi(B_G^2)\leq 2\chi(G^2)$.
	\end{observation}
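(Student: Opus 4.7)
The observation should follow almost directly from Lemma~\ref{claim: isomorphic}, which tells us that both $B_G^2[A]$ and $B_G^2[B]$ are isomorphic to $G^2$. The plan has two symmetric parts, one for $\omega$ and one for $\chi$, and each consists of a trivial lower-bound step together with a partition-based upper-bound step.

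For the lower bounds, I would simply note that $B_G^2[A]$ is an induced subgraph of $B_G^2$ that is isomorphic to $G^2$. Since the clique number and chromatic number are both monotone under taking induced subgraphs, this gives $\omega(G^2) = \omega(B_G^2[A]) \leq \omega(B_G^2)$ and $\chi(G^2) = \chi(B_G^2[A]) \leq \chi(B_G^2)$ at once.

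For the upper bound on $\omega(B_G^2)$, the plan is to take any maximum clique $K$ in $B_G^2$ and split it as $K = (K \cap A) \cup (K \cap B)$. Each part induces a clique inside $B_G^2[A]$ or $B_G^2[B]$ respectively, so by Lemma~\ref{claim: isomorphic} each has size at most $\omega(G^2)$, giving $|K| \leq 2\omega(G^2)$. For the upper bound on $\chi(B_G^2)$, I would combine an optimal coloring of $B_G^2[A]$ using a palette $\{1,\dots,\chi(G^2)\}$ with an optimal coloring of $B_G^2[B]$ using a disjoint palette $\{\chi(G^2)+1,\dots,2\chi(G^2)\}$; again both colorings exist and use only $\chi(G^2)$ colors by Lemma~\ref{claim: isomorphic}. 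Every edge of $B_G^2$ is either inside $A$, inside $B$, or crosses between $A$ and $B$: the first two cases are handled by the respective proper colorings, and the crossing case is automatic because the two palettes are disjoint. This yields a proper $2\chi(G^2)$-coloring of $B_G^2$, completing the proof.

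There is no real obstacle here; the whole content is bookkeeping on top of Lemma~\ref{claim: isomorphic}. The only point to be careful about is the disjoint-palette argument in the chromatic-number bound, since one might be tempted to hope for a tighter $\chi(G^2)$ bound rather than $2\chi(G^2)$; however, edges in $E(B_G)$ run between $A$ and $B$, so the two sides genuinely need distinct colors, and the factor of $2$ is intrinsic to this straightforward argument.
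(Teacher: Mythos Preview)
Your proposal is correct and follows essentially the same approach as the paper: both the lower and upper bounds are derived directly from Lemma~\ref{claim: isomorphic}, using that $G^2$ sits as an induced subgraph of $B_G^2$ for the lower bounds and the partition $V(B_G^2)=A\cup B$ for the upper bounds. The paper's write-up is simply a terser version of the same clique-splitting and disjoint-palette arguments you spell out.
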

	
	\begin{proof}
		Clearly, $\omega(G^2)\leq \omega(B_G^2)$ and $\chi(G^2)\leq \chi(B_G^2)$, as $G^2$ is present as an induced subgraph in $B_G^2$ by Lemma~\ref{claim: isomorphic}. Since we have $B_G^2[A]\cong G^2$, $B_G^2[B]\cong G^2$ (again, by Lemma~\ref{claim: isomorphic}), and $V(B_G^2)=V(B_G^2[A])\cup V(B_G^2[B])$, we have the remaining inequalities, $\omega(B_G^2)\leq \omega(B_G^2[A]) + \omega(B_G^2[B]) = 2\omega(G^2)$ and $\chi(B_G^2)\leq \chi(B_G^2[A]) + \chi(B_G^2[B]) = 2\chi(G^2)$.  
	\end{proof}
	
	We now have Observation~\ref{obs:bipartite}, which proves the existence of bipartite graphs whose squares have a sub-quadratic lower bound for the chromatic number. 
	\begin{observation} 
		\label{obs:bipartite} 
		Let $\omega'\geq 3$. There exists an absolute constant $c$ and a bipartite graph $H$ with $\omega(H^2)\leq \omega'$ and $\chi(H^2)\geq c\frac{(\omega(H^2) -2)^2}{\log(\omega(H^2)-1)}$.
	\end{observation}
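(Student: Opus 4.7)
The plan is to reduce the statement to the non-bipartite lower bound in Observation~\ref{obs:general_lower} via the splitting reduction $G \mapsto B_G$ introduced above, using Observation~\ref{obs:clique_BG} to transfer the bounds. Given $\omega' \geq 3$, I would first choose an integer $\omega$ roughly of order $\omega'/2$ (concretely, $\omega = \lfloor \omega'/2 \rfloor$, with the handful of small cases $\omega' \in \{3,4,5\}$ handled by an explicit small bipartite graph so that the absolute constant $c$ can be taken uniformly). Then I invoke Observation~\ref{obs:general_lower} with girth $g = 7$ and this value of $\omega$ to obtain a graph $G$ with girth at least $7$, $\omega(G^2) = \omega$, and
\[
\chi(G^2) \;\geq\; c_2 \,\frac{(\omega - 1)^2}{\log(\omega - 1)}.
\]

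Next, I would set $H := B_G$, which is bipartite by the construction of the reduction. By Observation~\ref{obs:clique_BG} we immediately get $\omega(H^2) \leq 2\omega(G^2) = 2\omega \leq \omega'$ and $\chi(H^2) \geq \chi(G^2)$, and also $\omega(H^2) \geq \omega(G^2) = \omega$. In particular $H$ already satisfies the required upper bound on $\omega(H^2)$, and inherits the sub-quadratic lower bound on chromatic number (up to constants) from $G^2$.

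Finally, it remains to rewrite the lower bound in the form prescribed by the statement, namely with $(\omega(H^2)-2)^2/\log(\omega(H^2)-1)$ on the right-hand side. From $\omega \leq \omega(H^2) \leq 2\omega$ I would deduce the two inequalities $\log(\omega - 1) \leq \log(\omega(H^2) - 1)$ and $\omega - 1 \geq (\omega(H^2) - 2)/2$. Substituting these into the bound from Observation~\ref{obs:general_lower} yields
\[
\chi(H^2) \;\geq\; \frac{c_2}{4}\cdot \frac{(\omega(H^2) - 2)^2}{\log(\omega(H^2) - 1)},
\]
so the statement holds with $c = c_2/4$.

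There is no real obstacle here beyond ensuring a uniform constant: all the heavy lifting (the probabilistic construction of sparse graphs with large $\chi(G^2)$) is already packaged in Observation~\ref{obs:general_lower}, and the splitting reduction together with Observation~\ref{obs:clique_BG} guarantees that passing to a bipartite graph only costs a factor of $2$ in the clique number and nothing in the chromatic number. The mildly delicate point is the small-$\omega'$ regime where $\lfloor \omega'/2\rfloor < 3$, but this can be absorbed into the constant by exhibiting a fixed witness bipartite graph.
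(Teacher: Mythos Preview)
Your proposal is correct and follows essentially the same argument as the paper: apply Observation~\ref{obs:general_lower} with $\omega \approx \omega'/2$, pass to $H=B_G$, use Observation~\ref{obs:clique_BG} to bound $\omega(H^2)$ and $\chi(H^2)$, and then rewrite the lower bound in terms of $\omega(H^2)$ via the inequalities $\omega-1\ge (\omega(H^2)-2)/2$ and $\log(\omega-1)\le \log(\omega(H^2)-1)$, arriving at $c=c_2/4$. If anything, you are slightly more careful than the paper in worrying about integrality and the small-$\omega'$ regime.
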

	\begin{proof} 
 Let $g\geq 7$ and $\omega=\frac{\omega'}{2}$. By Observation~\ref{obs:general_lower}, there exists an absolute constant $c_2$ and a graph $G$, with girth $g$ and $\omega=\omega(G^2)$ such that $\chi(G^2)\geq c_2\frac{(\omega -1)^2}{\log(\omega-1)}$. Let $H=B_G$ be the bipartite graph obtained from $G$ by applying the above reduction. Then by Observation~\ref{obs:clique_BG}, we have that  $\omega\leq \omega(H^2)\leq 2\omega=\omega'$. Again, by Observation~\ref{obs:clique_BG} and from the above inequalities, by setting $c=c_2/4$, we have the following:
		\begin{align*}
			\chi(H^2)\geq\chi(G^2)&\geq  c_2\frac{(\omega -1)^2}{\log(\omega-1)}\\
			& \geq c_2\frac{(\frac{\omega(H^2)}{2} -1)^2}{\log(\omega(H^2)-1)}\\
			& \geq c\frac{(\omega(H^2) -2)^2}{\log(\omega(H^2)-1)} 
		\end{align*}
	\end{proof}
	
	\section{Proof of Theorem~\ref{thm:theorem-convex-bipartite}}\label{sec:convex_main}

	\medskip
	
	Let $G=(A,B,E)$ be a convex bipartite graph. Then by Definition~\ref{def:convex}, we have an ordering, say $<_B$ for the vertices in $B$ such that, for each vertex $a\in A$, the vertices in $N_G(a)$ appear consecutively with respect to $<_B$. 
	Now, consider the vertices in $B$ (with respect to the ordering $<_B$) as points on the real line. Then, for each vertex $a\in A$, 
	we define an interval $I_a$ with $l(I_a)=\min_{<_B}\{b:b\in N_G(a)\}$ and $r(I_a)=\max_{<_B}\{b:b\in N_G(a)\}$. 
	Note that, for any two vertices $a,a'\in A$, $aa'\in E(G^2[A])$ if and only if $I_a\cap I_{a'}\neq \emptyset$. Hence, the collection of intervals, $\{I_a\}_{a\in A}$ is an interval representation of $G^2[A]$. 
	We then have the following observation, which is also noted in~\cite{Le2019hardness}. 
	\begin{observation}[\cite{Le2019hardness}]\label{obs:interval}
		Let $G=(A,B,E)$ be a convex bipartite graph. Then the subgraph $G^2[A]$ is an interval graph.
	\end{observation}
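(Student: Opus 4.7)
The plan is to verify that the collection $\{I_a\}_{a\in A}$ defined in the paragraph immediately preceding the observation is indeed an interval representation of $G^2[A]$. The setup is already in place: fix the convex ordering $<_B$ on $B$, embed $B$ as points on the real line respecting $<_B$, and for each $a \in A$ take $I_a = [l(I_a), r(I_a)]$ where $l(I_a)$ and $r(I_a)$ are the $<_B$-minimum and $<_B$-maximum neighbors of $a$. So the task reduces to proving the equivalence $aa' \in E(G^2[A]) \iff I_a \cap I_{a'} \neq \emptyset$ for all $a, a' \in A$.

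For the forward direction, I would observe that since $a, a' \in A$ and $G$ is bipartite, the fact that $aa' \in E(G^2)$ (for distinct $a, a'$) forces $d_G(a,a') = 2$, so there exists $b \in B$ with $b \in N_G(a) \cap N_G(a')$. By definition of $l$ and $r$, this $b$ satisfies $l(I_a) \leq_B b \leq_B r(I_a)$ and $l(I_{a'}) \leq_B b \leq_B r(I_{a'})$, so $b$ (viewed as a point on the real line) lies in $I_a \cap I_{a'}$, which is therefore non-empty.

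For the reverse direction, suppose $I_a \cap I_{a'} \neq \emptyset$. Without loss of generality assume $l(I_a) \leq_B l(I_{a'})$; then the non-empty intersection forces $l(I_{a'}) \leq_B r(I_a)$. The point $l(I_{a'})$ corresponds to a vertex $b^\star \in B$, which by definition is a neighbor of $a'$. Since $b^\star$ lies in the $<_B$-interval $[l(I_a), r(I_a)]$, the convexity of $N_G(a)$ with respect to $<_B$ (Definition~\ref{def:convex}) guarantees $b^\star \in N_G(a)$ as well. Thus $a$ and $a'$ share the neighbor $b^\star$, giving $aa' \in E(G^2[A])$.

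There is no real obstacle here: the only non-trivial step is the use of convexity in the reverse direction, which is precisely what Definition~\ref{def:convex} is designed for. Combining the two directions shows $\{I_a\}_{a \in A}$ realizes $G^2[A]$ as an intersection graph of intervals on the real line, so $G^2[A]$ is an interval graph.
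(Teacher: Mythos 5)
Your proposal is correct and follows exactly the paper's approach: the paper constructs the same intervals $I_a$ from the convex ordering $<_B$ in the paragraph preceding the observation and asserts that $aa'\in E(G^2[A])$ if and only if $I_a\cap I_{a'}\neq\emptyset$ (citing~\cite{Le2019hardness} rather than spelling out the verification). You simply fill in the two directions of that equivalence, using bipartiteness for the forward implication and convexity of $N_G(a)$ for the reverse, which is precisely the intended argument.
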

	Moreover, it is also known that $G^2[B]$ is an interval graph~\cite{Le2019hardness}.
	But in the following lemma,  we prove a stronger observation for $G^2[B]$, which is useful in proving some of our results. Note that for a graph $G$, an ordering $<$ of $V(G)$ is called a \emph{proper vertex ordering} if for any three vertices $u,v,w$ in $G$ such that $u<v<w$, $uw\in E(G)$ implies that $uv,vw\in E(G)$. It is a well-known fact that the proper interval graphs are exactly those graphs whose vertex set admits a proper vertex ordering. 
	\begin{lemma}\label{obs:properinterval}
		Let $G=(A,B,E)$ be a convex bipartite graph. Then, the ordering $<_B$ of the vertices in $B$ is a proper vertex ordering of $G^2[B]$. Consequently, the subgraph $G^2[B]$ is a proper interval graph.
	\end{lemma}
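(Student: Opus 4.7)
The plan is to verify the proper vertex ordering property directly from the convexity of $G$, and then invoke the standard characterization of proper interval graphs.

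First I would take three vertices $b_1 <_B b_2 <_B b_3$ in $B$ with $b_1 b_3 \in E(G^2[B])$, and aim to show $b_1 b_2, b_2 b_3 \in E(G^2[B])$. Since $G$ is bipartite, $b_1 b_3 \notin E(G)$, so the edge $b_1 b_3$ in $G^2$ must arise from a common neighbor: there exists $a \in A$ with $a b_1, a b_3 \in E(G)$. This is the only nontrivial point to check, and it is essentially a one-line argument.

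The key step is to invoke Definition~\ref{def:convex}: since $N_G(a)$ appears consecutively in the ordering $<_B$, and since both $b_1$ and $b_3$ lie in $N_G(a)$ with $b_1 <_B b_2 <_B b_3$, the intermediate vertex $b_2$ also belongs to $N_G(a)$. Hence $a b_2 \in E(G)$, which together with $a b_1 \in E(G)$ and $a b_3 \in E(G)$ yields $b_1 b_2, b_2 b_3 \in E(G^2[B])$ via the common neighbor $a$. This establishes that $<_B$ is a proper vertex ordering of $G^2[B]$.

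Finally, I would conclude by citing the standard characterization that a graph is a proper interval graph if and only if its vertex set admits a proper vertex ordering (as already recalled in the paragraph preceding the lemma). Therefore $G^2[B]$ is a proper interval graph. The main ``obstacle'' is essentially negligible here: the whole argument hinges on noting that any witness of an edge in $G^2[B]$ is a single vertex $a \in A$, after which convexity forces the containment of the intermediate vertex in $N_G(a)$. No Kempe changes, interval-representation constructions, or case analyses are required.
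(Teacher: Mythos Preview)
Your proposal is correct and follows essentially the same approach as the paper's own proof: pick three vertices $u<_B v<_B w$ with $uw\in E(G^2[B])$, find a common neighbor $a\in A$ of $u$ and $w$, use the consecutive property of $N_G(a)$ to get $av\in E(G)$, and conclude $uv,vw\in E(G^2[B])$; then invoke the proper-vertex-ordering characterization of proper interval graphs. The only cosmetic difference is that you explicitly note $b_1b_3\notin E(G)$ because $G$ is bipartite, which the paper leaves implicit.
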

	\begin{proof}
		Consider any three vertices, say $u,v,w\in B$, such that $u<_B v <_B w$. Suppose that $uw\in E(G^2)$. This implies that there exists a vertex $a\in A$ such that $au,aw\in E(G)$. By the definition of $<_B$, $N_G(a)$ is consecutive. Thus we have $av\in E(G)$. As $au,aw\in E(G)$, we can therefore conclude that $uv,vw\in E(G^2)$. This proves that $<_B$ is a proper vertex ordering of vertices in $G^2[B]$. Hence, the observation.
	\end{proof}

	Now we define an ordering $<_A$ for the vertices in $A$ as follows: for any pair of vertices, say $a_i,a_l\in A$, we say that $a_i<_A a_l$ if and only if $r(I_{a_i})\leq r(I_{a_l})$ (where $\{I_a\}_{a\in A}$ is the same interval representation of $G^2[A]$ defined earlier, and if $r(I_{a_i})=r(I_{a_l})$ we can have either $a_i<_A a_l$ or $a_l<_A a_i$). In the remainder of the section, for a convex bipartite graph $G=(A,B,E)$, we assume that the vertices of the sets $A$ and $B$ follow the orderings $<_A$ and $<_B$ respectively. 
	If $|A|=m$ and $|B|=n$, we then denote $A=\{a_1,a_2,\ldots,a_m\}$, where $a_1<_A\cdots<_A a_m$ and $B=\{b_1,b_2,\ldots,b_n\}$, where $b_1<_B\cdots<_B b_n$. Throughout the section, we denote by $\omega$, the size of the maximum clique of $G^2$. For vertices $x,y\in A$ (respectively, $x,y\in B$), the notation $x\leq_A y$ (respectively, $x\leq_B y$) includes the possibility that $x=y$. We now infer the following observations.
 \begin{observation}\label{obs:bipsquare}
    Let $G=(A,B,E)$ be a bipartite graph. Let $a\in A$ and $b\in B$ be such that $ab\in E(G^2)$ then $ab\in E(G)$.
 \end{observation}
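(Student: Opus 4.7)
The plan is to unpack the definition of $G^2$ and use the bipartite structure to rule out the only other possibility. By definition of the square, the hypothesis $ab\in E(G^2)$ means either $ab\in E(G)$ already, or else there exists some vertex $w\in V(G)$ with $aw,wb\in E(G)$. The first case is what we want, so I would focus on deriving a contradiction from the second.

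The key observation in the second case is a parity argument: since $G=(A,B,E)$ is bipartite, every edge has one endpoint in $A$ and one in $B$. From $aw\in E(G)$ and $a\in A$, I would conclude $w\in B$. From $wb\in E(G)$ and $b\in B$, I would conclude $w\in A$. Since $A$ and $B$ are disjoint in the bipartition, this is impossible, so no such intermediate vertex $w$ exists. Hence we must be in the first case, i.e., $ab\in E(G)$.

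There is no real obstacle here; the statement is essentially a direct consequence of the definitions of $G^2$ and of a bipartite graph. The only thing to be careful about is making sure the walk of length two witnessing adjacency in $G^2$ is handled correctly, namely that the intermediate vertex must simultaneously belong to both partite sets, which is the contradiction that forces $ab$ to be an actual edge of $G$.
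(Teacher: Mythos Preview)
Your proof is correct and is essentially the same argument as the paper's: the paper compresses your parity argument into the single remark that $d_G(a,b)\neq 2$ for $a\in A$, $b\in B$, which is exactly what your analysis of the intermediate vertex $w$ establishes.
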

 \begin{proof}
    Note that for $a\in A$ and $b\in B$, $d_G(a,b)\neq 2$. Therefore, by the definition of $G^2$, $ab\in E(G^2)$ implies that $ab\in E(G)$.
 \end{proof}
 
 \begin{observation}\label{obs:orderprop}
     Let $a,a'\in A$ be such that $a<_A a'$, and $b,b'\in B$ be such that $b'<_B b$. If $b\in N_G(a)$ and $b'\in N_G(a')$ then $b\in N_G(a')$.
 \end{observation}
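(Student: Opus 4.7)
The plan is to unpack the definitions of the orderings $<_A$ and $<_B$ together with the interval representation $\{I_a\}_{a\in A}$ of $G^2[A]$ introduced just before Observation~\ref{obs:interval}. Recall that for each $a\in A$, convexity of $G$ implies that $N_G(a)\subseteq B$ consists exactly of those $b\in B$ satisfying $l(I_a)\leq_B b\leq_B r(I_a)$. Moreover, the ordering $<_A$ was defined so that $a<_A a'$ forces $r(I_a)\leq_B r(I_{a'})$. To show $b\in N_G(a')$, it therefore suffices to verify the two endpoint conditions $l(I_{a'})\leq_B b\leq_B r(I_{a'})$.

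First I would establish the upper bound $b\leq_B r(I_{a'})$: since $b\in N_G(a)$ we have $b\leq_B r(I_a)$, and then $a<_A a'$ gives $r(I_a)\leq_B r(I_{a'})$, so chaining the two inequalities yields $b\leq_B r(I_{a'})$. Next I would establish the lower bound $l(I_{a'})\leq_B b$: since $b'\in N_G(a')$ we have $l(I_{a'})\leq_B b'$, and combining with the hypothesis $b'<_B b$ gives $l(I_{a'})\leq_B b$. Putting both bounds together, $l(I_{a'})\leq_B b\leq_B r(I_{a'})$, so by the consecutiveness of $N_G(a')$ with respect to $<_B$ we conclude $b\in N_G(a')$.

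There is no substantive obstacle here; the statement is essentially a restatement of the convex bipartite hypothesis in the language of the two orderings, and the proof reduces to a short chain of inequalities in $<_B$. The only mild subtlety is that ties in the definition of $<_A$ are broken arbitrarily, but this does not affect the argument since only the non-strict inequality $r(I_a)\leq_B r(I_{a'})$ is invoked.
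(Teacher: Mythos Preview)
Your argument is correct and matches the paper's approach: both use $a<_A a'$ to deduce $r(I_a)\leq_B r(I_{a'})$ and then invoke the consecutiveness of $N_G(a')$ with respect to $<_B$ to sandwich $b$ inside $N_G(a')$. Your version is simply a more explicit unpacking of the same inequalities.
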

 \begin{proof}
     Since $a<_A a'$, we have $r(I_{a})\leq r(I_{a'})$. Now as $b\in N_G(a)$, $b'\in N_G(a')$, $b'<_B b$, and the vertices in $N_G(a')$ appear consecutive with respect to the ordering $<_B$, we have $b\in N_G(a')$.
 \end{proof}

	\begin{observation}\label{obs:clique}
	\sloppy	For each $i\in \{1,2,\ldots,m\}$ and $j\in \{1,2,\ldots,n\}$, the sets $N_{G^2}(a_i)\cap \{a_{i+1},\ldots,a_m\}$ and $N_{G^2}(b_j)\cap \{b_{j+1},\ldots,b_n\}$ are both cliques in $G^2$.
	\end{observation}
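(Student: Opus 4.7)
The plan is to handle the two sides of the claim separately, exploiting the interval-graph structure already established: the subgraph $G^2[A]$ comes equipped with the interval representation $\{I_a\}_{a\in A}$ used to define $<_A$, and by Lemma~\ref{obs:properinterval} the ordering $<_B$ is even a proper vertex ordering of $G^2[B]$.

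For the $B$-side I would pick any two vertices $b_k, b_l \in N_{G^2}(b_j) \cap \{b_{j+1},\ldots,b_n\}$ with $k<l$. Since $b_l \in N_{G^2}(b_j)$, we have $b_j b_l \in E(G^2[B])$ with $b_j <_B b_k <_B b_l$, and then the proper vertex ordering property supplied by Lemma~\ref{obs:properinterval} yields $b_k b_l \in E(G^2[B])$ immediately, showing the set is a clique in $G^2$.

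For the $A$-side I would argue with the endpoints of intervals. For $a_k, a_l \in N_{G^2}(a_i) \cap \{a_{i+1},\ldots,a_m\}$ with $k<l$, the fact that $I_{a_i}$ and $I_{a_l}$ intersect gives $l(I_{a_l}) \leq r(I_{a_i})$, while the definition of $<_A$ applied to $a_i <_A a_k <_A a_l$ gives $r(I_{a_i}) \leq r(I_{a_k}) \leq r(I_{a_l})$. Combining these yields $l(I_{a_l}) \leq r(I_{a_k})$ and $l(I_{a_k}) \leq r(I_{a_l})$, so $I_{a_k} \cap I_{a_l} \neq \emptyset$ and hence $a_k a_l \in E(G^2[A])$.

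There is no serious obstacle to this argument; the only subtlety worth flagging is that one should resist treating $<_A$ as a proper vertex ordering of $G^2[A]$ (which it need not be, since $G^2[A]$ is only known to be an interval graph, not a proper interval graph), and instead use the right-endpoint characterization of $<_A$ to deduce intersection of $I_{a_k}$ and $I_{a_l}$ directly from the endpoint inequalities, as above.
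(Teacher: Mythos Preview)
Your proof is correct. The $A$-side argument is essentially identical to the paper's: both use the right-endpoint inequalities coming from the definition of $<_A$ together with the interval representation $\{I_a\}_{a\in A}$; the paper phrases it as a contradiction (assume $a_ka_l\notin E(G^2)$, deduce $r(I_{a_i})\leq r(I_{a_k})<l(I_{a_l})$, contradicting $a_ia_l\in E(G^2)$), while you argue the intersection directly, but the content is the same.

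On the $B$-side your route is slightly more economical than the paper's. The paper does not invoke Lemma~\ref{obs:properinterval}; instead it picks the largest-indexed $b_k$ in $N_{G^2}(b_j)\cap\{b_{j+1},\ldots,b_n\}$, finds the common neighbour $a\in A$ witnessing $b_jb_k\in E(G^2)$, and then uses consecutiveness of $N_G(a)$ to conclude that every $b_l$ with $j\le l\le k$ lies in $N_G(a)$, so all such vertices are pairwise adjacent in $G^2$. Your appeal to Lemma~\ref{obs:properinterval} short-circuits this by using the already-established proper vertex ordering of $G^2[B]$: from $b_j<_Bb_k<_Bb_l$ and $b_jb_l\in E(G^2[B])$ you get $b_kb_l\in E(G^2[B])$ in one step. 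Both arguments are equally valid; yours just avoids repeating the consecutiveness reasoning that was already packaged into the lemma.
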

	\begin{proof}
		Let  $b_k$ be the largest indexed vertex in $N_{G^2}(b_j)\cap \{b_{j+1},\ldots,b_n\}$. Since $b_jb_k\in E(G^2)$, there exists a vertex $a\in A$, such that $ab_j,ab_k \in E(G)$. Then, as vertices in $N_G(a)$ appear consecutively with respect to the ordering $<_B$, we have that $ab_l\in E(G)$, for each  $l$ such that $j \leq l \leq k$. This implies that for any $l,l'$ such that $j \leq l,l' \leq k$, we have $b_lb_{l'}\in E(G^2)$. Therefore, $N_{G^2}(b_j)\cap \{b_{j+1},\ldots,b_n\}$ is a clique in $G^2$.
		
		\smallskip
		
		Now suppose that $N_{G^2}(a_i)\cap \{a_{i+1},\ldots,a_m\}$ is not a clique in $G^2$. Then there exist vertices, $a_k,a_l\in N_{G^2}(a_i)\cap \{a_{i+1},\ldots,a_m\}$ such that $a_ka_l\notin E(G^2)$. Without loss of generality, we can assume that $a_k<_A a_l$. Then, by the definition of $<_A$, we have that $r(I_{a_k})\leq r(I_{a_l})$. Recall that $\{I_a\}_{a\in A}$ is an interval representation of $G^2[A]$. Since $a_ka_l\notin E(G^2)$ and $r(I_{a_k})\leq r(I_{a_l})$, we can infer that $r(I_{a_k})<l(I_{a_l})$. As $a_i<_A a_k$, we then have  $r(I_{a_i})\leq r(I_{a_k})<l(I_{a_l})$. This further implies that $a_ia_l\notin E(G^2)$. This contradicts the fact that $a_l\in N_{G^2}(a_i)$. Hence, the observation.
	\end{proof}
	
	To prove the upper bound in Theorem~\ref{thm:theorem-convex-bipartite}, our idea is to recursively construct subgraphs of $G^2$, namely, $H_j$, for each integer $j$ down from $n$ to $1$, and show that the chromatic number of each of these subgraphs is bounded by $\left\lfloor\frac{3\omega}{2}\right\rfloor$ (Lemma~\ref{lem:main-convex-bipartite}). 
	
	\medskip
	\noindent\textbf{Subgraphs $H_j$:}
	For each integer $j$ down from $n$ to $1$, we define an induced subgraph $H_j=G^2[A\cup \{b_j,b_{j+1},\ldots,b_n\}]$, where the vertices in $A$ follows the ordering $<_A$, and $b_j<_B b_{j+1}<_B\cdots<_B b_n$. Clearly, $H_1=G^2$, and $G^2[A]$ is an induced subgraph of $H_j$ for each $j\in \{1,2,\ldots,n\}$. Now, to prove the upper bound in Theorem~\ref{thm:theorem-convex-bipartite}, it is enough to prove the following lemma.
	\begin{lemma}\label{lem:main-convex-bipartite}
		For each $j\in \{1,2,\ldots,n\}$, we have $\chi(H_j)\leq \lfloor \frac{3\omega}{2}\rfloor$. 
	\end{lemma}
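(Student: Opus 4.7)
The plan is to prove Lemma~\ref{lem:main-convex-bipartite} by reverse induction on $j$, starting from the base case $j=n$ and working down to $j=1$; at each step the aim is to extend a proper $\lfloor 3\omega/2 \rfloor$-coloring of $H_{j+1}$ to one of $H_j$ by assigning a color to the single new vertex $b_j$.

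For the base case $j=n$ we have $H_n = G^2[A \cup \{b_n\}]$. By Observation~\ref{obs:interval} the graph $G^2[A]$ is an interval graph and hence perfect, so it admits an $\omega(G^2[A])$-coloring, which uses at most $\omega$ colors. By Observation~\ref{obs:bipsquare} the only neighbors of $b_n$ in $H_n$ lie in $N_G(b_n) \subseteq A$, and $N_G(b_n) \cup \{b_n\}$ is a clique in $G^2$ of size at most $\omega$; hence $|N_G(b_n)| \leq \omega-1$ and some one of the $\omega$ colors is free for $b_n$, giving $\chi(H_n) \leq \omega \leq \lfloor 3\omega/2 \rfloor$.

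For the inductive step, fix a proper coloring $c$ of $H_{j+1}$ using at most $\lfloor 3\omega/2 \rfloor$ colors. The neighbors of $b_j$ in $H_j$ split into $X := N_G(b_j) \subseteq A$ and $Y := N_{G^2}(b_j) \cap \{b_{j+1},\ldots,b_n\} \subseteq B$ (using Observation~\ref{obs:bipsquare} for the $A$-side). Both $X \cup \{b_j\}$ and $Y \cup \{b_j\}$ are cliques in $G^2$ --- the former because $b_j$ is a common $G$-neighbor of $X$, the latter by Observation~\ref{obs:clique} --- so $|X|, |Y| \leq \omega - 1$. Setting $C_X := c(X)$ and $C_Y := c(Y)$, if $C_X \cup C_Y$ misses any of the $\lfloor 3\omega/2 \rfloor$ colors, that missing color can be assigned to $b_j$ and we are done.

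The main difficulty is the remaining case, in which $C_X \cup C_Y = \{1,\ldots,\lfloor 3\omega/2 \rfloor\}$ and no color is immediately free on $X \cup Y$. Here the plan is to apply a Kempe change to modify $c$ so that a color becomes free on $X \cup Y$. Concretely, one picks $i \in C_X \setminus C_Y$ and $i' \in C_Y \setminus C_X$ (both nonempty, since $|C_X|, |C_Y| \leq \omega-1 < \lfloor 3\omega/2 \rfloor$), and swaps the $\{i,i'\}$-bichromatic component containing the unique vertex of $X$ colored $i$, aiming to remove $i$ from $C_X$ without introducing it into $C_Y$. The hard part will be showing that for some judicious choice of $(i,i')$ such a swap is harmless --- a priori the chain could drag color $i$ onto a $Y$-vertex or color $i'$ onto another $X$-vertex, ruining the argument. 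To force the existence of a good swap I expect to use the convex-bipartite structure heavily: the ordering $<_A$ by right endpoints of intervals, the proper vertex ordering $<_B$ of $G^2[B]$ from Lemma~\ref{obs:properinterval}, the consecutive-neighborhood property, and Observation~\ref{obs:orderprop}. Tracing bichromatic components through these orderings, together with a case analysis on which vertices of $X$ and $Y$ carry the relevant colors, will be the technical heart of the proof.
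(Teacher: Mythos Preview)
Your overall framework---reverse induction on $j$ and a Kempe-change argument when no color is immediately free---matches the paper exactly, and your base case is fine. But the Kempe-change part of your plan has real gaps, not just missing details.

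First, you record only $|Y|\le\omega-1$, whereas the paper needs and proves $|Y|\le\omega-2$ (Observation~\ref{obs:A_jB_j}): the common $G$-neighbor $a\in A$ witnessing the longest edge $b_jb_{j_l}$ of $Y\cup\{b_j\}$ is adjacent in $G$ to all of $Y\cup\{b_j\}$ by consecutiveness, so $Y\cup\{b_j,a\}$ is a clique. This extra unit is what later forces at least $\lfloor\omega/2\rfloor+2$ vertices of $X$ to carry colors unique in $X\cup Y$, and that count is essential.

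Second, your proposed swap colors are not the right ones. Taking $i'\in C_Y\setminus C_X$ is exactly the dangerous direction: the $i'$-colored vertex of $Y$ may sit in the same bichromatic component, and then the swap drags $i$ onto $Y$. The paper instead fixes the \emph{pivot} $a_c$ to be the $<_A$-leftmost vertex of $X$ whose color is unique in $X\cup Y$, sets $x=c(a_c)$, and chooses the partner color $y$ to be absent from the set
\[
S=(N_{H_{j+1}}(a_c)\cap B)\ \cup\ \{a\in N_{H_{j+1}}(a_c)\cap A : a_c<_A a\},
\]
i.e.\ absent from all $B$-neighbors and all \emph{right} $A$-neighbors of $a_c$. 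That such a $y$ exists is nontrivial: it uses Observation~\ref{obs:chain} together with the $\lfloor\omega/2\rfloor+2$ count to bound $|N_{H_{j+1}}(a_c)\cap B|$. This choice of $y$ is precisely what forces the entire Kempe component to lie in $A$ and to step strictly leftward in $<_A$ at every hop (this is the content of the paper's Claim~\ref{claim:V_j}).

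Third, a single swap need not free a color. After the swap some vertex of $X$ strictly left of $a_c$ may now carry $x$, so $x$ is still blocked. The paper handles this by showing the new pivot is strictly $<_A$-smaller than the old one, giving a progress measure and termination after finitely many iterations; your proposal does not anticipate this loop.

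In short, the leftmost-pivot choice, the definition of the partner color via $S$, the proof that the Kempe component is confined to $A$, and the strictly-decreasing pivot argument are the substance of the lemma; your outline has the right scaffolding but none of these load-bearing pieces.
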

	
	\subsection{Some observations on subgraphs $H_j$}\label{subsec:main}
	
	First, the following observation is immediate from the definition of $H_j$.
	
	\begin{observation} \label{obs:endpoint}
		Let $xy \in E(H_j) \setminus E(H_{j+1})$ for some $j\in \{1,2,\ldots,n-1\}$. Then either $x=b_j$ or $y=b_j$.
	\end{observation}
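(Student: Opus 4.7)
The plan is to argue directly from the definitions of the induced subgraphs $H_j$ and $H_{j+1}$: they differ in their vertex sets by exactly one vertex, so any edge present in one but not the other must be incident to that vertex.

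First, I would observe that by definition
$$V(H_j) = A \cup \{b_j, b_{j+1}, \ldots, b_n\} \quad \text{and} \quad V(H_{j+1}) = A \cup \{b_{j+1}, \ldots, b_n\},$$
so $V(H_j) \setminus V(H_{j+1}) = \{b_j\}$, while $V(H_{j+1}) \subseteq V(H_j)$.

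Next, I would invoke the fact that both $H_j$ and $H_{j+1}$ are defined as \emph{induced} subgraphs of $G^2$. Consequently, for any two vertices $x, y \in V(H_{j+1})$, we have the chain of equivalences $xy \in E(H_j) \Longleftrightarrow xy \in E(G^2) \Longleftrightarrow xy \in E(H_{j+1})$. Taking the contrapositive, any edge $xy \in E(H_j) \setminus E(H_{j+1})$ must have at least one endpoint outside $V(H_{j+1})$; since the only such vertex of $V(H_j)$ is $b_j$, we conclude that $x = b_j$ or $y = b_j$, which is the desired statement.

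There is no substantive obstacle here; the claim is essentially a bookkeeping fact about how the sequence $\{H_j\}$ is set up. Its intended role in the sequel is presumably to pinpoint exactly which new edges have to be handled when the vertex $b_j$ is re-introduced to $H_{j+1}$ in the recursive coloring argument that proves Lemma~\ref{lem:main-convex-bipartite}.
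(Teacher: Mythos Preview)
Your argument is correct and is exactly the unpacking the paper has in mind; the paper itself gives no proof, merely noting that the observation ``is immediate from the definition of $H_j$.'' Your write-up simply makes that immediacy explicit.
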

	
	We have the following observation due to the definitions of $<_A$ and $<_B$.
	\begin{observation} \label{obs:chain}
		For $j\in \{1,2,\ldots,n\}$, let $N_{H_j}(b_j)\cap A = \{a_{i_1},a_{i_2},\ldots,a_{i_k}\}$, where  $a_{i_1}<_A a_{i_2}<_A\cdots<_A a_{i_k}$. Then $N_{H_j}(a_{i_1})\cap B\subseteq N_{H_j}(a_{i_2})\cap B\subseteq \ldots \subseteq N_{H_j}(a_{i_k})\cap B$.
	\end{observation}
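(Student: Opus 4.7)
The plan is to show the chain of containments by picking any $p < q$, any vertex $b \in N_{H_j}(a_{i_p}) \cap B$, and verifying $b \in N_{H_j}(a_{i_q}) \cap B$. First I would reduce the claim to $G$-adjacency using Observation~\ref{obs:bipsquare}: since $a_{i_p} \in A$ and $b \in B$ with $a_{i_p}b \in E(G^2)$, we have $a_{i_p}b \in E(G)$; similarly, since $a_{i_q}b_j \in E(H_j) \subseteq E(G^2)$, we have $a_{i_q}b_j \in E(G)$. So the problem reduces to proving $a_{i_q}b \in E(G)$, because then $a_{i_q}b \in E(G^2)$ and, as both endpoints lie in $V(H_j)$, we get $b \in N_{H_j}(a_{i_q}) \cap B$.

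Next I would split into two cases based on whether $b = b_j$. If $b = b_j$, then $b \in N_{H_j}(a_{i_q}) \cap B$ holds by the very definition of the set $\{a_{i_1}, \ldots, a_{i_k}\} = N_{H_j}(b_j) \cap A$. Otherwise $b \in \{b_{j+1}, \ldots, b_n\}$, so $b_j <_B b$. Now I would apply Observation~\ref{obs:orderprop} with the roles $a := a_{i_p}$, $a' := a_{i_q}$ (noting $a_{i_p} <_A a_{i_q}$ since $p < q$), $b := b$, $b' := b_j$: the hypotheses $b \in N_G(a_{i_p})$, $b_j \in N_G(a_{i_q})$, and $b_j <_B b$ are all in hand, so the conclusion $b \in N_G(a_{i_q})$ follows, finishing the case.

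I do not foresee any significant obstacle: the whole argument is essentially a one-line application of the ``shift'' property encapsulated in Observation~\ref{obs:orderprop}, preceded by a routine translation between $E(G^2)$ and $E(G)$ for edges with one endpoint in each part (Observation~\ref{obs:bipsquare}). The only mild care needed is to treat $b = b_j$ separately, since Observation~\ref{obs:orderprop} requires a strict inequality $b' <_B b$ which fails in that boundary case.
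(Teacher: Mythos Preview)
Your proof is correct and is essentially what the paper has in mind: the paper simply states the observation ``due to the definitions of $<_A$ and $<_B$'' without further argument, and your write-up unpacks exactly that, via Observation~\ref{obs:orderprop} (which itself is just the definitions of $<_A$ and $<_B$). The case split on $b=b_j$ and the use of Observation~\ref{obs:bipsquare} to pass from $G^2$-edges to $G$-edges are the right details to fill in.
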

	
	We also note the following.
	\begin{observation} \label{obs:A_jB_j}
		For $j\in \{1,2,\ldots,n\}$, let $A_j=N_{H_j}(b_j)\cap A$ and $B_j= N_{H_j}(b_j)\cap B$. Then $|A_j|\leq \omega-1$ and $|B_j|\leq \omega-2$.
	\end{observation}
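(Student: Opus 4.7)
The plan is to show that each of $A_j$ and $B_j$ can be augmented to a clique of $G^2$ by adding one or two extra vertices, and then invoke the definition of $\omega$. This bypasses any intricate combinatorial counting.

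For the bound $|A_j|\leq \omega-1$, I would argue that $A_j\cup\{b_j\}$ itself is a clique in $G^2$. By Observation~\ref{obs:bipsquare}, every $a\in A_j$ (with $a\in A$ and $b_j\in B$) satisfies $ab_j\in E(G)\subseteq E(G^2)$. Moreover, any two vertices $a,a'\in A_j$ share the common neighbor $b_j$ in $G$, so $d_G(a,a')\leq 2$, and hence $aa'\in E(G^2)$. Therefore $|A_j|+1\leq \omega$.

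For the bound $|B_j|\leq \omega-2$, I would first observe that $B_j = N_{G^2}(b_j)\cap\{b_{j+1},\ldots,b_n\}$, so by Observation~\ref{obs:clique} the set $B_j$ is already a clique in $G^2$, and thus $B_j\cup\{b_j\}$ is a clique. To gain the extra $-1$, I plan to extend this clique by one vertex of $A$. Let $b_k$ be the largest-indexed element of $B_j$ (under $<_B$). Since $b_jb_k\in E(G^2)$ and both endpoints lie in $B$, there exists $a\in A$ with $ab_j,ab_k\in E(G)$. Convexity of $G$ (the neighborhood $N_G(a)$ is a consecutive block in $<_B$) then forces $ab_\ell\in E(G)$ for every $\ell$ with $j\leq \ell\leq k$, and in particular for every $b_\ell\in B_j$. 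Hence $\{a\}\cup\{b_j\}\cup B_j$ is a clique of size $|B_j|+2$ in $G^2$, giving $|B_j|+2\leq \omega$.

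The whole argument is short; the only step that requires any care is the convexity step that promotes a common neighbor of the two extreme vertices $b_j,b_k$ to a common neighbor of the entire clique $B_j\cup\{b_j\}$, which I do not anticipate as a real obstacle. I would present the two bounds in this order as two short paragraphs.
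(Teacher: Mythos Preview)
Your proposal is correct and follows essentially the same argument as the paper's proof: both show $A_j\cup\{b_j\}$ is a clique for the first bound, and both augment $B_j\cup\{b_j\}$ by a vertex $a\in A$ adjacent to the extremes $b_j,b_k$ and hence (by convexity) to every vertex in between, yielding the second bound. Your write-up is in fact slightly more explicit than the paper's (e.g., invoking Observation~\ref{obs:bipsquare} and spelling out why any two elements of $A_j$ are adjacent in $G^2$), but the ideas are identical.
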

	\begin{proof}
		Let $A_j=N_{H_j}(b_j)\cap A$. It is easy to see that $A_j\cup \{b_j\}$ is a clique in $G^2$. This implies that $|A_j|\leq \omega-1$. Let $B_j= N_{H_j}(b_j)\cap B=\{b_{j_1}, b_{j_2},\cdots,b_{j_l}\}$, where $b_{j_1}<_B b_{j_2}<_B\cdots<_B b_{j_l}$. By Observation~\ref{obs:clique}, we have that $B_j$ is a clique in $G^2$. Therefore, $B_j\cup\{b_j\}$ is also a clique in $G^2$. Now, consider the vertex $b_{j_l}\in B_j$. As $b_jb_{j_l}\in E(G^2[B])$, there exists a vertex $a\in A_j$ such that $ab_j,ab_{j_l}\in E(G)\subseteq E(G^2)$. Then, by the definition of $<_B$, we have that neighbors of $a$ are consecutive in $G$. Therefore, we have $B_j \cup \{b_j\} \subseteq N_{G}(a)\subseteq N_{G^2}(a)$. This implies that $B_j \cup \{b_j,a\}$ is a clique in $G^2$. This further implies that $|B_j|\leq \omega-2$.
	\end{proof}
	
	\subsection{Proof of Lemma~\ref{lem:main-convex-bipartite}}
	\noindent\textbf{Comment:} We encourage the reader to go through the following proof jointly with the algorithm described in Section~\ref{sec:algo} for a better understanding.
	\begin{proof}
        The proof is based on reverse induction on the index $i$ of the subgraphs $H_i$, $i\in \{1,2,\ldots,n\}$ defined above. For 
        each $i\in \{1,2,\ldots,n\}$, let $$A_i=N_{H_i}(b_i)\cap A$$ $$B_i= N_{H_i}(b_i)\cap B$$
		Recall that $G^2[A]$ is an interval graph (by Observation~\ref{obs:interval}), and the size of the maximum clique in $G^2[A]$ is at most $\omega$, where $\omega=\omega(G^2)$. Therefore, $G^2[A]$ is $\omega$-colorable. Consider the base case, $i=n$. Note that $H_n=G^2[A\cup \{b_n\}]$ and 
		$B_n=\emptyset$. Also, by Observation~\ref{obs:A_jB_j}, we have that $|A_n|\leq \omega-1$. Therefore, as $G^2[A]$ is $\omega$-colorable, and $|N_{H_n}(b_n)|=|A_n|\leq \omega-1$, we can easily extend any $\omega$-coloring of $G^2[A]$ to an $\omega$-coloring of $H_n$, and we are done. Now, assume to the induction hypothesis that $\chi(H_j)\leq \lfloor\frac{3\omega}{2}\rfloor$ for any $j>i$. Consider $H_i=G^2[A\cup \{b_n,b_{n-1},\ldots,b_i\}]$. Our goal is to prove that $\chi(H_i)\leq \lfloor\frac{3\omega}{2}\rfloor$.
		
		\begin{definition}[Special coloring]\label{def:special}
			Any proper coloring of $H_{i+1}$ that uses at most $\lfloor\frac{3\omega}{2}\rfloor$ colors is a special coloring of $H_{i+1}$.
		\end{definition}
		
		Due to our induction hypothesis, a special coloring of $H_{i+1}$ always exists. For a special coloring $c$ of $H_{i+1}$, if there exists a color, say $x$, which is not used to color any vertex in $A_i\cup B_i$, then we can extend the coloring $c$ to $H_i$ by assigning the color $x$ to $b_i$. This would have proved the lemma.
		
		\smallskip
		
		Otherwise, if a special coloring $c$ of $H_{i+1}$ is not extendable, i.e. all the $\lfloor\frac{3\omega}{2}\rfloor$ colors have been used up by the vertices $A_i \cup B_i$, our intention is to convert $c$ to another special coloring of $H_{i+1}$ by \emph{Kempe changes}. i.e. \emph{swapping} the colors of some vertices. Before that, we introduce some definitions and prove some properties for \emph{``non-extendable special colorings''} of $H_{i+1}$.
		
		\begin{definition}[Non-extendable special coloring]\label{def:extend}
			A special coloring $c$ of $H_{i+1}$ is ``\emph{non-extendable}'' if all the $\lfloor\frac{3\omega}{2}\rfloor$ colors have been used up by the vertices in $A_i \cup B_i$. Otherwise, $c$ is an extendable special coloring. 
		\end{definition}
		
		\begin{myclaim}\label{claim:unique}
			\textit{Let $c$ be a non-extendable special coloring. Then there exist $\lfloor\frac{\omega}{2}\rfloor+2$ vertices in $A_i$ whose color (with respect to $c$) is not assigned to any other vertex in $A_i\cup B_i$.}
		\end{myclaim}
		\begin{proof}
			Note that both the sets $A_i$ and $B_i$ are cliques in $H_i$ (due to the fact $A_i=N_G(b_i)\cap A$, and Observation~\ref{obs:clique}). By Observation~\ref{obs:A_jB_j}, we have $|B_i|\leq \omega-2$. Since $c$ uses $\lfloor\frac{3\omega}{2}\rfloor$ colors in $A_i\cup B_i$, we then have that the vertices in $A_i$ use at least $\lfloor\frac{\omega}{2}\rfloor+2$ extra colors which are not used for coloring any of the vertices in $B_i$. This proves the claim (since $A_i$ is a clique, no two vertices in $A_i$ can have the same color).
   \renewcommand{\qedsymbol}{$\blacksquare$}
		\end{proof}

		\begin{definition}\label{def:a(c)}
			For a non-extendable special coloring $c$ of $H_{i+1}$, let $a_{c}=\min_{<A} \{a\in A_i:$ the color $c(a)$ is not used by any other vertex in $A_i\cup B_i\}$. We call $a_{c}$ the pivot vertex with respect to $c$.
		\end{definition}
		
		Note that the pivot vertex $a_{c}$ is well defined for any non-extendable special coloring $c$ of $H_{i+1}$, by Claim~\ref{claim:unique}. Now Claim~\ref{claim:A_i'} is immediate from Claim~\ref{claim:unique} and the definition of $a_{c}$.
		
		\begin{myclaim}\label{claim:A_i'}
			\textit{For a non-extendable special coloring $c$ of $H_{i+1}$, let $A_i'=\{a\in A_i: a_{c} \leq_A a$\}. Then $|A_i'| \geq \Big\lfloor\frac{\omega}{2}\Big\rfloor+2$.}
		\end{myclaim}
		
		\begin{myclaim}\label{claim:pivot}  
			\textit{For a non-extendable special coloring $c$ of $H_{i+1}$, one of the following holds:
				\begin{myenumerate}
					\item\label{it:a} There are $\Big\lfloor\frac{3\omega}{2}\Big\rfloor$ vertices in $N_{H_{i+1}}[a_{c}]$, all receiving distinct colors with respect to $c$, or
					\item \label{it:b} There exists an extendable special coloring $c'$ of $H_{i+1}$.
			\end{myenumerate}}
		\end{myclaim}
		\begin{proof}
			Suppose \ref{it:a} is not true, then there exists a color, say $x\in \{1,2,\ldots,\lfloor\frac{3\omega}{2}\rfloor\}$, such that $x\neq c(a_{c})$, and the color $x$ is not used by any neighbor of $a_{c}$. Then, let $c'$ be a coloring of $H_{i+1}$ such that $c'(v)=c(v)$ for each vertex $v\neq a_{c}$, and $c'(a_{c})=x$. Clearly, $c'$ is a special coloring of $H_{i+1}$ and it is also extendable, since with respect to the coloring $c'$, the vertex $b_i$ can now be assigned with the color $c(a_{c})$ to obtain a $\Big\lfloor\frac{3\omega}{2}\Big\rfloor$-coloring of $H_i$. This implies \ref{it:b} is true, and hence the claim.
   \renewcommand{\qedsymbol}{$\blacksquare$}
		\end{proof}
		
		For a non-extendable special coloring $c$ of $H_{i+1}$, if Claim~\ref{claim:pivot}\ref{it:b} is true, then we are done. Therefore, the difficult case is when Claim~\ref{claim:pivot}\ref{it:a} is true, but not \ref{claim:pivot}\ref{it:b}.
		
		\begin{myclaim}\label{claim:pivotcolor}
			\textit{For a non-extendable special coloring $c$ of $H_{i+1}$, let \ref{claim:pivot}\ref{it:a} is true. Then there exists a color $y \in \Big\{1,2,\ldots,\Big\lfloor\frac{3\omega}{2}\Big\rfloor\Big\}$ such that $y\neq c(a_{c})$, and the color $y$ is not assigned to any vertex of the set, $$ S=(N_{H_{i+1}}(a_{c})\cap B)  \cup \{a\in (N_{H_{i+1}}(a_{c})\cap A): a_{c}<_A a\}$$
				(Note that $S$ is exactly the set obtained from $N_{H_{i+1}}(a_{c})$, by deleting the vertices that lies before $a_{c}$ in the ordering $<_A$)}
		\end{myclaim}
		\begin{proof}
			Let $S_1= N_{H_{i+1}}(a_{c})\cap B$ and $S_2=\{a\in (N_{H_{i+1}}(a_{c})\cap A): a_{c}<_A a\}$. Then $S=S_1\cup S_2$. Let $A_i'=\{a\in A_i: a_{c} \leq_A a$\} 
			(as in the statement of Claim~\ref{claim:A_i'}). Then by Observation~\ref{obs:chain}, we have that $S_1\subseteq (N_{H_i}(a)\cap B)$ for each vertex $a\in A_i'$. This implies that $A_i'\cup S_1\cup\{b_i\}$  form a clique in $H_i$. This further implies that $|A_i'\cup S_1 \cup \{b_i\}|\leq \omega$. Since $|A_i'| \geq \Big\lfloor\frac{\omega}{2}\Big\rfloor+2$ (by Claim~\ref{claim:A_i'}), we then have $|S_1\cup \{bi\}|\leq \lceil \frac{\omega}{2}\rceil -2$. This implies that $|S_1|\leq \lceil \frac{\omega}{2}\rceil -3$. 
            Note that $\lfloor\frac{3\omega}{2}\rfloor-1$ colors are used in $N_{H_{i+1}}(a_{c})$ (since \ref{claim:pivot}\ref{it:a} is true). i.e. the set $N_{H_{i+1}}(a_{c})=(N_{H_{i+1}}(a_{c})\cap A)\cup S_1$ uses $\lfloor\frac{3\omega}{2}\rfloor-1$ colors. Therefore, as $|S_1|\leq \lceil \frac{\omega}{2}\rceil -3$, we can conclude that the set $N_{H_{i+1}}(a_{c})\cap A$ uses at least $\omega+1$ colors that is not used in $S_1$. Now as the set $\{a_{c}\}\cup S_2$ form a clique (by Observation~\ref{obs:clique}), we have that the set $S_2$ can use only at most $\omega -1$ colors. Therefore, the fact that $N_{H_{i+1}}(a_{c})\cap A$ uses at least $\omega+1$ colors implies that there exists a vertex, say $a\in (N_{H_{i+1}}(a_{c})\cap A)$ such that $a<_A a_{c}$ and $y=c(a)\neq c(v)$ for any vertex $v\in S_1\cup S_2\cup \{a_{c}\}$. This proves the claim.
            \renewcommand{\qedsymbol}{$\blacksquare$}
		\end{proof}
		
		\medskip
		
		Let $c$ be a non-extendable special coloring of $H_{i+1}$ for which \ref{claim:pivot}\ref{it:a} is true. Then by Claim~\ref{claim:pivotcolor}, there exists a color $y \in \{1,2,\ldots,\lfloor\frac{3\omega}{2}\rfloor\}$ such that $y\neq c(a_{c})$, and the color $y$ is not assigned to any vertex of the set $$ S=(N_{H_{i+1}}(a_{c})\cap B)  \cup \{a\in (N_{H_{i+1}}(a_{c})\cap A): a_{c}<_A a\}$$ Let $x=c(a_{c})$. Recall that $a_c$ is called as the \emph{pivot vertex} with respect to $c$. In the rest of the section, with respect to the coloring $c$, we call $x$ the \emph{pivot color}, and $y$ the \emph{partner color}. Define, $X=\{v\in V(H_{i+1}): c(v)=x\}$ and $Y=\{v\in V(H_{i+1}): c(v)=y\}$.  Let $D$ denote the component of the induced subgraph $H_{i+1}[X\cup Y]$ that contains the pivot vertex $a_{c}$. We call $D$ the \emph{Kempe component (defined by the pivot and partner colors) containing the pivot} with respect to the coloring $c$.  In the claim below, we evaluate the properties of this Kempe Component. In particular, we intend to prove that $D$  is completely contained in the set $A$. Moreover, we will see that the vertices in the Kempe component containing the pivot vertex and partners are arranged in $A$ in a specific order. 
		
		\begin{myclaim}\label{claim:V_j}
			\textit{Let $c$ be a non-extendable special coloring of $H_{i+1}$, for which \ref{claim:pivot}\ref{it:a} is true and let $D$ be the Kempe component (defined by the pivot and partner colors) containing the pivot with respect to $c$. For each $l\geq 0$, let $V_l=\{v\in V(D): d_{H_i}(v,a_{c} )=l\}$, where $V_0=\{ a_{c} \}$. Then, the following conditions hold.
				\begin{myenumerate}
					\item \label{claim:subset_A} For each $l\geq 0$, we have $V_l\subseteq A$. Moreover, for any pair of vertices $v\in V_{l+1}$, $u\in V_{l}$ such that $uv\in E(H_{i+1})$, we have $v<_A u$. 
					\item \label{claim:neighbor_B} For each $l\geq 2$, we have $N(b)\cap V_l=\emptyset$ for any $b\in (V(H_i)\cap B)$, .
			\end{myenumerate}}
		\end{myclaim}
		
		\begin{proof}
			For the sake of contradiction, assume that the claim is not true. Let $k=\min\{l:l\geq0\}$, such that our claim is not true (i.e. for any $l<k$, the Part~\ref{claim:subset_A} holds if $l\geq 0$, and the Part~\ref{claim:neighbor_B} holds if $l\geq 2$).
			
			\medskip
			
			Note that $k\neq 1$, since $V_0$ has only the pivot vertex $a_c$ in it, and by the definition of partner color $y$, we have $V_1\subseteq \{a\in (N_{H_{i+1}}(a_{c})\cap A): a<_A a_c\}$.
			
			\medskip
			
			Suppose that $k=2$.
			
			\medskip
			
			\noindent\textbf{Part~\ref{claim:subset_A} for $k=2$:} Let $v\in V_2$. Then there exists a vertex $u\in V_1$ such that $uv\in E(H_{i+1})$, $c(u)=y$, and $c(v)=c(a_{c})=x$. (Note that the colors on the vertices belonging to the sets $V_l$ and $V_{l+1}$ alternate between $x$ and $y$ for each $l\geq 0$.) Recall that $x=c(a_{c})$ is not assigned to any other vertex in $N_{H_i}(b_i)=A_i\cup B_i$. 
			As $u<_A a_{c}$, $a_{c}\in N_{H_i}(b_i)$, and $b_i$ is the least indexed vertex in $V(H_i)\cap B$, we then have by the definition of $<_A$ that, $(N_{H_i}(u)\cap B)\subseteq (N_{H_i}(a_{c})\cap B)\subseteq B_i$.  Now, since $v\in N_{H_i}(u)$ and $c(v)=x$, we can conclude that $v\notin B$. Therefore, $v\in A$. To show that the Part~\ref{claim:subset_A} is true for $k=2$, now it is enough to prove that $v<_A u$. Assume to the contrary that $u<_A v$. Since $a_{c},v\in N_{H_i}(u)$ and $u<_A a_{c},v$, we then have by Observation~\ref{obs:clique} that $va_{c}\in E(H_{i+1})$. As $c(v)=c(a_{c})=x$, this contradicts the fact that $c$ is a proper coloring of $H_{i+1}$. Thus, we can conclude that $v<_A u<_A a_{c}$. Therefore, Part~\ref{claim:subset_A} is true for $k=2$.
			
			\medskip
			
			\noindent\textbf{Part~\ref{claim:neighbor_B}  for $k=2$:} Recall that $v<_A a_{c}$, $a_{c}\in N_{H_i}(b_i)$, and $b_i$ is the least indexed vertex in $V(H_i)\cap B$. We then have that $(N_{H_i}(v)\cap B)\subseteq N_{H_i}(a_{c})$ (by the definition of $<_A)$. Thus, if $vb\in E(H_i)$ for some $b\in V(H_i)\cap B$, we then have $va_{c}\in E(H_{i})$ and therefore, $va_{c}\in E(H_{i+1})$ by Observation~\ref{obs:endpoint}.  As $c(v)=c(a_{c})=x$, this again contradicts the fact that $c$ is a proper coloring of $H_{i+1}$. Therefore, Part~\ref{claim:neighbor_B} is also true for $k=2$ .
			
			\smallskip
			Thus, we can conclude that our claim is true for $k=2$, and hence we can assume that $k>2$.
			
			\smallskip
			\noindent\textbf{Part~\ref{claim:subset_A}  for $k>2$:} Let $v\in V_k$. Then there exists a vertex $u\in V_{k-1}$, $w\in V_{k-2}$ such that $vu,uw\in E(H_{i+1})$, and $c(w)=c(v)$. Note that $k-1\geq 2$, and therefore, by the minimality of $k$, we have that~\ref{claim:neighbor_B} is true for $k-1$. Since $u\in V_{k-1}$, we then have that $ub\notin E(H_i)$ for any $b\in  V(H_i)\cap B$. As $uv\in E(H_{i+1})\subseteq E(H_i)$, this implies that $v\notin B$. Therefore, we can conclude that $V_k\subseteq A$. Now, suppose that $u<_A v$. Again, by the minimality of $k$, we have that~\ref{claim:subset_A} is true for $k-1$. Since $u\in V_{k-1}$ and $w\in V_{k-2}$ are such that $uw\in E(H_{i+1})$, this implies that $u<_A w$. This further implies that $u<_A v,w$. By Observation~\ref{obs:clique}, we then have that $vw\in E(H_{i+1})$. As $c(v)=c(w)=x$, this contradicts the fact that $c$ is a proper coloring of $H_{i+1}$. Thus, we can conclude that $v<_A u$. This proves that the Part~\ref{claim:subset_A} is true for $k$. 
			
			\medskip
			
			\noindent\textbf{Part~\ref{claim:neighbor_B}  for $k>2$:} Since  $v<_A u$, we also have $r(I_v)\leq r(I_{u})$ (by the definition of $<_A$). Recall that $u\in V_{k-1}$ and Part~\ref{claim:neighbor_B} holds for $k-1$. Therefore, the fact that $ub\notin E(H_i)$ for any $b\in  V(H_i)\cap B$ implies that $vb\notin E(H_i)$ for any $b\in  V(H_i)\cap B$. This proves that Part~\ref{claim:neighbor_B} is also true for $k$.  
			\smallskip
			
			Our claim is true as we have a contradiction for the existence of $k$.
   \renewcommand{\qedsymbol}{$\blacksquare$}
		\end{proof}
		
		\medskip
		Now define a new proper coloring, say $\phi_c$ of $H_{i+1}$ obtained from $c$ by \emph{swapping the pivot color and partner color on the vertices belonging to the Kempe component $D$}. Formally, it can be defined as follows: 
		
		\[
		\phi_c(v)= 
		\begin{cases}
			c(v),&  v\in V(H_{i+1})\setminus D\\
			x,     & v\in D\cap Y \\
			y,     & v\in D\cap X 
		\end{cases}
		\]
		Clearly,  $\phi_c$ is a special coloring of $H_{i+1}$, and for every non-extendable special coloring, $c$ of $H_{i+1}$ for which \ref{claim:pivot}\ref{it:a} is true,  $\phi_c$ exists. If $\phi_c$ is an extendable special coloring for $H_{i+1}$, then we are done. Otherwise, in the following claim, we prove \emph{a strictly decreasing property of the pivot vertex in the modified special coloring $\phi_c$}. As the number of vertices in the graph is finite, this claim guarantees that we will finally end up having a coloring, say $c^*$, for which $\phi_{c^*}$ is an extendable special coloring. 
		
		
		\begin{myclaim}\label{claim:main}
			\textit{Let $c$ be a non-extendable special coloring of $H_{i+1}$ for which \ref{claim:pivot}\ref{it:a} is true. If $\phi_c$ is non-extendable in $H_{i+1}$ then $a_{\phi_c(v)}<_A a_{c}$}.
		\end{myclaim}
		
		\begin{proof}
			Recall that $\phi_c$ is a special coloring of $H_{i+1}$ and $\phi_c(a_c)=y$. 
           If $\phi_c$ is  non-extendable in $H_{i+1}$, we then have that all the $\Big\lfloor\frac{3\omega}{2}\Big\rfloor$ colors are used in the set $A_i\cup B_i$ with respect to the special coloring $\phi_c$. In particular, now there exists a vertex, say $v\in (A_i\cup B_i)\setminus \{a_{c}\}$ that has the color $x$ on it. i.e. $\phi_c(v)=x$ but $c(v)\neq x$. Since we have recolored only the vertices in the Kempe component $D$ (containing the pivot and partners with respect to $c$) to obtain the new coloring $\phi_c$ from $c$, this implies that $v\in D\subseteq A$, $v<_A a_{c}$ (by Claim~\ref{claim:V_j}\ref{claim:subset_A}), and $c(v)=y$. 
           Moreover, with respect to the coloring $\phi_c$, no vertex in $B_i$ has been colored $x$ (since $D\subseteq A$ by Claim~\ref{claim:V_j}\ref{claim:subset_A}). Also, since $A_i$ is a clique, $v$ is the only vertex in $A_i\cup B_i$ that has the color $x$ on it. Therefore, we now have a coloring $\phi_c$ of $H_{i+1}$ with the property that there exists a vertex $v\in A_i$, $v<_A a_{c}$ such that $\phi_c(v)=x$, and the color $x$ is not assigned to any other vertex in $A_i\cup B_i$.  By the definition of the pivot vertex with respect to $\phi_c$, we then have $a_{\phi_c}\leq _A v$. Further, $v<_A a_{c}$ implies that $a_{\phi_c}<_Aa_{c}$. Therefore, our claim is true.   
           \renewcommand{\qedsymbol}{$\blacksquare$}
		\end{proof}
		
		Now, we are ready to conclude the proof of Lemma~\ref{lem:main-convex-bipartite}. If there is an extendable special coloring for $H_{i+1}$, then we are done. Suppose that there does not exist an extendable special coloring for $H_{i+1}$. Let $c^*$ be a non-extendable special coloring of $H_{i+1}$ having the property that, for any non-extendable special coloring $c$ of $H_{i+1}$ different from $c^*$, we have $a_{c^*}\leq_A a_{c}$. Clearly, \ref{claim:pivot}\ref{it:a} is true for $c^*$. But then by Claim~\ref{claim:main}, we have $a_{\phi_{c^*}(v)}<_A a_{c^*}$. This is a contradiction to the choice of $c^*$. This completes the proof of Lemma~\ref{lem:main-convex-bipartite}.
	\end{proof}
	
	The existence of convex bipartite graphs $H$ with $\chi(H^2)\geq \frac{5\omega(H^2)}{4} - 2$ is proved below. This completes the proof of Theorem~\ref{thm:theorem-convex-bipartite}. 
	
	\subsubsection{Lower bound construction} \label{app:lower}
	For the graph $H$ in Figure \ref{fig:Example}, 
	for each $i\in \{1,2,\ldots,5\}$, the set $Q_i$ represents a clique on $q$ vertices. For distinct $i,j\in \{1,2,\ldots,5\}$ an edge connecting the sets $Q_i$ and $Q_j$ indicates the presence of all edges of the form $q_iq_j$, where $q_i\in Q_i$ and $q_j\in Q_j$, and for $k\in \{1,2,3\}$, an edge connecting a vertex $z_k$ and a set $Q_i$ indicates the presence of all edges of the form $z_kq_i$, where $q_i\in Q_i$. The graph $H$ shown in Figure \ref{fig:Example} is a convex bipartite graph because, as shown in Figure~\ref{fig:Example}, the vertices in $B$ with an ordering $<_B$: $Q_2<_B z_2<_B z_3 <_B Q_3$ (where vertices in the sets $Q_2$ and $Q_3$ can be ordered arbitrarily among themselves) satisfies the consecutive property in the definition of a convex bipartite graph. Since the set $B=Q_2\cup \{z_2,z_3\}\cup Q_3$ form a clique in $H^2$, we have $\omega(H^2)=2q + 3$. The structure of the induced subgraph, $H'=H^2[\bigcup_{i=1}^5 Q_i]$ is commonly known as the \textit{``blow-up"} of a 5-cycle. It can be seen that $\chi(H')=\frac{5}{2}q$, and therefore it is not difficult to verify that, $\chi(H^2)=\frac{5}{2}q+2 =\frac{5(\omega(H^2)-3)}{4}+2\geq \frac{5}{4}\omega(H^2)-2$.
	
	\medskip
	\definecolor{myblue}{RGB}{80,80,160}
	\definecolor{mygreen}{RGB}{80,160,80}
	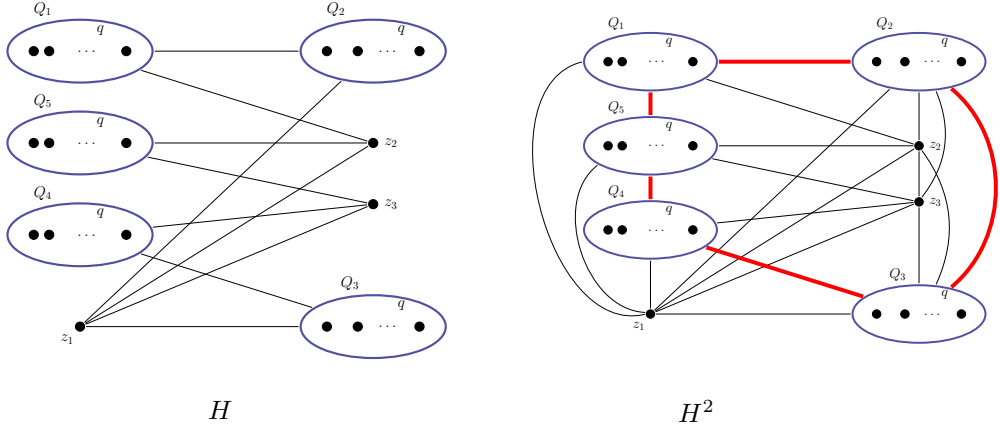
\begin{figure}
         \begin{tabular}{p{.4\textwidth}p{.4\textwidth}}
				\parbox{.4\textwidth}{\centering
					
		\scalebox{.5}{
			\resizebox{.8\textwidth}{!}{
				\begin{tikzpicture}[thick, auto,
					every node/.style={draw,circle,inner sep=3pt},
					fsnode/.style={fill=black},
					ssnode/.style={fill=black},
					every fit/.style={ellipse,draw,inner sep=16pt,text width=2.2cm},shorten >= 1pt,shorten <= 1pt
					]
					
					\node[fill=black, circle] (f1) at (0, 0) {};
					\node[fill=black, circle] (f2) at (0.5, 0) {};
					\node[fill=black, circle] (f6) at (3, 0) {};
					\path (f2) -- node[draw=none,auto=false][label=70: \Large $q$]{\Large\ldots} (f6);
										
					\node[fill=black, circle] (g1) at (0, -3) {};
					\node[fill=black, circle] (g2) at (0.5, -3) {};
					\node[fill=black,  ] (g6) at (3, -3) {};
					\path (g2) -- node[draw=none,auto=false][label=70: \Large $q$]{\Large\ldots} (g6);
					
					\node[fill=black, circle] (h1) at (0, -6) {};
					\node[fill=black, circle] (h2) at (0.5, -6) {};
					\node[fill=black, circle] (h6) at (3, -6) {};
					\path (h2) -- node[draw=none,auto=false][label=70: \Large $q$]{\Large\ldots} (h6);
    \node[fsnode,label=below left: \Large $z_1$](n1) at (1.5, -9)  {};
					\node[fill=black, circle] (s1) at (9.5, 0) {};
					\node[fill=black, circle] (s2) at (10.5, 0) {};
					\node[fill=black, circle] (s6) at (12.5, 0) {};
					\path (s2) -- node[draw=none,auto=false][label=70: \Large $q$]{\Large\ldots} (s6);
					\node[fsnode,xshift=11cm,yshift=-3cm,label=right: \Large $z_2$](n2) {};
					\node[fsnode,xshift=11cm,yshift=-5cm,label=right: \Large $z_3$](n3) {};
					
					\node[fill=black, circle] (t1) at (9.5, -9) {};
					\node[fill=black, circle] (t2) at (10.5, -9) {};
					\node[fill=black, circle] (t6) at (12.5, -9) {};
					\path (t2) -- node[draw=none,auto=false][label=70: \Large $q$]{\Large\ldots} (t6);
					
					\node [label=130: \Large\textbf{$Q_1$}] [myblue,fit=(f1) (f6), line width=2pt](c1) {};
					\node  [label=130: \Large\textbf{$Q_2$}] [myblue,fit=(s1) (s6), line width=2pt](c2) {};
					
					\node  [label=130: \Large\textbf{$Q_5$}][myblue,fit=(g1) (g6), line width=2pt](c3) {};
					\node  [label=130: \Large\textbf{$Q_4$}] [myblue,fit=(h1) (h6), line width=2pt](c4) {};
					\node  [label=110: \Large\textbf{$Q_3$}] [myblue,fit=(t1) (t6), line width=2pt](c5){};
					\path[every node/.style={font=\sffamily\small}]
					(n1) edge node {} (c2)
					edge node {} (n2)
					edge  node {} (n3)
					edge  node {} (c5);
					\draw (c1) -- (c2);
					\draw (c1) -- (n2);
					\draw (c3) -- (n2);
					\draw (c3) -- (n3);
					\draw (c4) -- (c5);
					\draw  (c4) -- (n3);
				\end{tikzpicture}}}} &

  \parbox{.4\textwidth}{\centering
					
		\scalebox{.6}{
		 	\resizebox{.8\textwidth}{!}{
				\begin{tikzpicture}[thick, auto,
					every node/.style={draw,circle,inner sep=3pt},
					fsnode/.style={fill=black},
					ssnode/.style={fill=black},
					every fit/.style={ellipse,draw,inner sep=16pt,text width=2.2cm},shorten >= 1pt,shorten <= 1pt
					]
					
					\node[fill=black, circle] (f1) at (0, 0) {};
					\node[fill=black, circle] (f2) at (0.5, 0) {};
					\node[fill=black, circle] (f6) at (3, 0) {};
					\path (f2) -- node[draw=none,auto=false][label=70: \Large $q$]{\Large\ldots} (f6);
										
					\node[fill=black, circle] (g1) at (0, -3) {};
					\node[fill=black, circle] (g2) at (0.5, -3) {};
					\node[fill=black,  ] (g6) at (3, -3) {};
					\path (g2) -- node[draw=none,auto=false][label=70: \Large $q$]{\Large\ldots} (g6);
					
					\node[fill=black, circle] (h1) at (0, -6) {};
					\node[fill=black, circle] (h2) at (0.5, -6) {};
					\node[fill=black, circle] (h6) at (3, -6) {};
					\path (h2) -- node[draw=none,auto=false][label=70: \Large $q$]{\Large\ldots} (h6);
    \node[fsnode,label=below left: \Large $z_1$](n1) at (1.5, -9)  {};
					\node[fill=black, circle] (s1) at (9.5, 0) {};
					\node[fill=black, circle] (s2) at (10.5, 0) {};
					\node[fill=black, circle] (s6) at (12.5, 0) {};
					\path (s2) -- node[draw=none,auto=false][label=70: \Large $q$]{\Large\ldots} (s6);
					\node[fsnode,xshift=11cm,yshift=-3cm,label=right: \Large $z_2$](n2) {};
					\node[fsnode,xshift=11cm,yshift=-5cm,label=right: \Large $z_3$](n3) {};
					
					\node[fill=black, circle] (t1) at (9.5, -9) {};
					\node[fill=black, circle] (t2) at (10.5, -9) {};
					\node[fill=black, circle] (t6) at (12.5, -9) {};
					\path (t2) -- node[draw=none,auto=false][label=70: \Large $q$]{\Large\ldots} (t6);
					
					\node [label=130: \Large\textbf{$Q_1$}] [myblue,fit=(f1) (f6), line width=2pt](c1) {};
					\node  [label=130: \Large\textbf{$Q_2$}] [myblue,fit=(s1) (s6), line width=2pt](c2) {};
					
					\node  [label=130: \Large\textbf{$Q_5$}][myblue,fit=(g1) (g6), line width=2pt](c3) {};
					\node  [label=130: \Large\textbf{$Q_4$}] [myblue,fit=(h1) (h6), line width=2pt](c4) {};
					\node  [label=110: \Large\textbf{$Q_3$}] [myblue,fit=(t1) (t6), line width=2pt](c5){};
					\path[every node/.style={font=\sffamily\small}]
					(n1)edge[bend left=90] node {} (c1)
					edge node {} (c2)
					edge node {} (n2)
					edge  node {} (n3)
					edge [bend left=70] node {} (c3)
					edge node {} (c4)
					edge  node {} (c5)
					(c2)edge node {} (n2)
					edge [bend left=50,red, line width=4pt] node {} (c5)
					edge [bend left] node {} (n3)
					(n2)edge node {} (n3)
					edge [bend left] node {} (c5);
					\draw (n3) -- (c5);
					\draw[red, line width=4pt] (c1) -- (c3);
					\draw[red, line width=4pt] (c3) -- (c4);
					\draw [red, line width=4pt](c1) -- (c2);
					\draw (c1) -- (n2);
					\draw (c3) -- (n2);
					\draw (c3) -- (n3);
					\draw [red, line width=4pt] (c4) -- (c5);
					\draw  (c4) -- (n3);
				\end{tikzpicture}}}} \\
   
    \vspace{.1in}
				\parbox{.4\textwidth}{\centering $H$} &  \vspace{.1in} \parbox{.4\textwidth}{\centering $H^2$}
			\end{tabular}
		\caption{A convex bipartite graph $H$ with $\omega(H^2)=2q + 3$ and $\chi(H^2)=\frac{5}{2}q+2$} \label{fig:Example}
	\end{figure}
	
	\medskip
	
	\section{A $\frac{3}{2}$-approximation algorithm}\label{sec:algo}
	Here, we propose a polynomial-time algorithm to find a proper coloring for squares of convex bipartite graphs using at most $3\omega/2$ colors (which is the same as the $\chi$-binding function proved in Theorem~\ref{thm:theorem-convex-bipartite}).
	
	\smallskip
	
	Let $G=(A,B,E)$ be a convex bipartite graph. 
 Recall the definitions of orderings $<_A$ and $<_B$. Note that the convexity property of $G$ is the well-known \emph{``consecutive-ones"} property when $G$ is represented as an adjacency matrix. Hence, the ordering $<_B$ of the vertices of $B$ can be obtained by using an $O(|V|+|E|)$-time algorithm proposed by Booth and Lueker~\cite{booth1976}. For ordering the vertices of $A$ with respect to $<_A$, we can use the \emph{``compact representation"} of $G$~\cite{klemz2022}, which can be computed in $O(|V|+|E|)$ time. In the compact representation of $G$, for each vertex $i \in A$, we have a triple $(i, \leftof(i),\rightof(i))$, where $\leftof(i)$ and $\rightof(i)$ are the least indexed and highest indexed neighbors of
	$i$ in $B$. Consequently, the collection of intervals $\{[\leftof(i),\rightof(i)]:i\in A\}$ forms a valid interval representation of the interval graph $G^2[A]$ (by Observation~\ref{obs:interval}). For $j\in \{1,2,\ldots,n\}$, recall the definitions of subgraphs $H_j$ of $G^2$. Our algorithm consists of two phases. First, we find a proper coloring of the interval graph $G^2[A]$ using the well-known greedy coloring algorithm~\cite{golumbic}. In the second phase, for each $j$ down from $n$ to $1$, we iteratively find a proper coloring for the subgraphs $H_j$ of $G^2$ using at most $3\omega/2$ colors. Since $H_1=G^2$, we then have the desired coloring of $G^2$. 


\subsection*{The algorithm}

 \noindent\textbf{Phase~I.} Find a proper coloring, say $c$ of $G^{2}[A]$ using the greedy algorithm.

  \medskip

  \noindent\textit{\textbf{Comment:}} Clearly, $c$ uses at most $\omega$ colors. Let $H_{n+1}=G^2[A]$ and $j\in \{1,2,\ldots,n\}$. For a non-extendable special coloring $c$ of $H_{j+1}$, recall the definitions of \emph{the pivot vertex, the pivot color, the partner color, the Kempe component containing pivot and partners, and the recoloring $\phi_c$} from the previous section. 
  Moreover, in the following phase of the algorithm, we say that $w$ is a \emph{free color} in  $S\subseteq V(G)$ with respect to a coloring $c$ of $H_{j+1}$ if $w\in \{1,2,\ldots,\lfloor\frac{3\omega}{2}\rfloor\}$ and $c(u)\neq w$ for any $u\in S$.  
\begin{tabbing}
    \noindent\textbf{Phase~II.}  \= For \=  each  \=  $j$ \= down from $n$ to $1$  do the \=following: 
    \\[.05in]\> 1. Consider the coloring $c$ of $H_{j+1}$.
    \\[.05in]\> 2. If there is a free color, say $w$ in $N_{H_j}(b_j)\subseteq V(H_{j+1})$ with respect to $c$, \\\>\> then assign $c(b_j)=w$.
    \\[0.05in]\> 3. Else, find the \emph{pivot vertex} $a_c$ with respect to $c$. Let $c(a_c)=x$.
    \\[.05in]\>\> 3.1. If there is a free color, say $z$ in $N_{H_{j+1}}[a_c]$, \\\>\>\> then assign  $c(a_c)=z$ and $c(b_j)=x$.
    \\[.05in]\>\> 3.2. Else, let $y$ be the \emph{partner color} of $a_c$. 
    \\[.05in]\>\>\hspace{0.6cm}(i) Find the lowest indexed vertex, say $a'$ in $N_{H_{j+1}}(a_c) \cap A$ such that \\\>\>\>\> \hspace{0.6cm}  $a' <_{A} a_c$ and $c(a')=y$.
    \\[.05in]\>\>\hspace{0.6cm}(ii) Swap the colors in the \emph{Kempe component (defined by the pivot and} \\\>\>\>\> \hspace{0.6cm} \emph{partner colors) containing the pivot}; i.e. set $c=\phi_c$.
    \\[.05in]\>\>\hspace{0.55cm} (iii) If $a'\notin N_{H_j}(b_j)$ then $c(b_j)=x$.
    \\[.05in]\>\>\hspace{0.55cm} (iv) Else, $a'$ is the new pivot vertex with respect to $c$ with $c(a')=x$;\\\>\>\>\> \hspace{0.6cm} set $a_c=a'$ and goto 3.1.
\end{tabbing}

		\medskip
  The above algorithm may not produce an optimal coloring, but the proof of Lemma~\ref{lem:main-convex-bipartite} guarantees that the algorithm terminates after finite steps and outputs a proper coloring of $G^2$ using at most $3\omega/2$ colors in polynomial time. To be specific, we note the following:

  \smallskip
  
  For each $j\in \{1,2,\ldots,n\}$, $c$ is a special coloring of $H_{j+1}$. Therefore, \emph{if step~2 is not true} then it implies that $c$ is a non-extendable special coloring of $H_{j+1}$ (see Definition~\ref{def:extend}). Thus, by Claim~\ref{claim:unique}, the pivot vertex $a_c$ exists, and it is easy to see that $a_c$ can be found in polynomial time. Now, \emph{if step~3.1 is not true} then it implies that Claim~\ref{claim:pivot}\ref{it:a} is true. Therefore, by Claim~\ref{claim:pivotcolor}, the partner color $y$ exists. Also, the steps 3.2.(i) and 3.2.(ii) can be executed in polynomial time. Note that for fixed $j$, the repetition of step 3.1 only happens when step~3.2.(iii) is not true (i.e. $a'\in N_{H_j}(b_j)$). Moreover, in this case, the new pivot vertex $a'$ obtained in step~3.2.(iv) has the property that $a'<a_c$, where $a_c$ is the current pivot vertex. Therefore, step 3 will be executed at most $|N_{H_j}(b_j)\cap A|$ times. 

  \medskip
  
  Let ${OPT}$ denote the number of colors used in an optimal coloring of $G^2$. Clearly, ${OPT} \geq \omega$. 
        Since our algorithm uses only at most $3\omega/2\leq \frac{3}{2} OPT$ colors, we have the following theorem.
		\begin{theorem}
			There exists a polynomial-time algorithm to find a proper coloring of squares of convex bipartite graphs with approximation ratio $\frac{3}{2}$.
		\end{theorem}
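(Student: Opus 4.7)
The plan is to establish three properties of the algorithm described above: (i) it outputs a proper coloring of $G^2$, (ii) this coloring uses at most $\lfloor 3\omega/2 \rfloor$ colors, and (iii) it runs in polynomial time. The approximation guarantee then follows immediately because ${OPT}\geq \omega=\omega(G^2)$, so the algorithm uses at most $\lfloor 3\omega/2\rfloor\leq \frac{3}{2}\,{OPT}$ colors. Properties (i) and (ii) are essentially encoded in the proof of Lemma~\ref{lem:main-convex-bipartite}; my task is therefore to verify that the algorithm implements exactly the case analysis of that proof and to bound its running time carefully.

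For correctness, I would proceed by reverse induction on $j$, maintaining the invariant that at the start of iteration $j$ of Phase~II the current coloring $c$ is a special coloring of $H_{j+1}$ (Definition~\ref{def:special}). The base case $j=n$ holds because Phase~I greedily colors the interval graph $G^2[A]=H_{n+1}$ (Observation~\ref{obs:interval}) with at most $\omega\leq \lfloor 3\omega/2 \rfloor$ colors. For the inductive step, the branches of Phase~II mirror the cases in the proof of Lemma~\ref{lem:main-convex-bipartite}: step~2 handles the extendable case (Definition~\ref{def:extend}); if step~2 fails then $c$ is non-extendable, Claim~\ref{claim:unique} guarantees that the pivot $a_c$ exists, and step~3.1 corresponds to Claim~\ref{claim:pivot}\ref{it:b}; if step~3.1 also fails, then Claim~\ref{claim:pivot}\ref{it:a} applies, Claim~\ref{claim:pivotcolor} provides the partner color $y$, and step~3.2 performs the Kempe swap on the component containing the pivot. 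After the swap, Claim~\ref{claim:V_j} ensures that $c$ remains a proper special coloring of $H_{j+1}$, and one then branches on whether $a'\in N_{H_j}(b_j)$ exactly as in steps~3.2.(iii)--(iv).

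For the running time, Phase~I costs $O(|V|+|E|)$ using~\cite{booth1976} for the ordering $<_B$ together with the compact representation of~\cite{klemz2022} for $<_A$, followed by a standard greedy interval coloring. In Phase~II each elementary operation (locating a free color, identifying $a_c$ via Definition~\ref{def:a(c)}, identifying $y$, computing and recoloring the Kempe component) is polynomial in $|V|+|E|$. The only non-trivial bound is on the number of times step~3.1 is re-entered for a fixed $j$: by Claim~\ref{claim:main}, each Kempe swap strictly decreases the current pivot in $<_A$, so for a fixed $j$ step~3 is re-entered at most $|N_{H_j}(b_j)\cap A|\leq \omega-1$ times. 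Summing over the $n$ outer iterations gives a polynomial total. I expect the main obstacle to be purely bookkeeping: one must verify that the pivot $a'$ chosen in step~3.2.(iv) really coincides with the pivot $a_{\phi_c}$ of the updated coloring (so that the strict decrease from Claim~\ref{claim:main} genuinely applies and termination follows), and that the Kempe swap preserves properness in $H_{j+1}$, both of which are exactly what Claim~\ref{claim:V_j} delivers.
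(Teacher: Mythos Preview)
Your proposal is correct and follows essentially the same approach as the paper: both argue that Phase~II implements the case analysis from the proof of Lemma~\ref{lem:main-convex-bipartite} (steps~2, 3.1, and 3.2 corresponding respectively to the extendable case, Claim~\ref{claim:pivot}\ref{it:b}, and Claim~\ref{claim:pivot}\ref{it:a} together with Claims~\ref{claim:pivotcolor}--\ref{claim:main}), and both bound the inner loop by observing that each pass through step~3.2 strictly decreases the current pivot in $<_A$, so step~3 is executed at most $|N_{H_j}(b_j)\cap A|$ times per $j$. The paper's justification is in fact terser than yours and does not explicitly raise the $a'$ versus $a_{\phi_c}$ bookkeeping point you flag at the end.
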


					

		\section{Relation with maximum degree}
		In this section, we prove the following theorem.
		
		\begin{theorem}\label{thm:degree}
			Let $G$ be a convex bipartite graph with maximum degree, $\Delta\geq 1$. Then $\chi(G^2)\leq 2\Delta$. Moreover, there exist convex bipartite graphs $G$ with $\chi(G^2)=2\Delta$.
		\end{theorem}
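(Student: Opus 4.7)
The plan is to prove the upper bound by independently bounding $\chi(G^2[A])$ and $\chi(G^2[B])$ by $\Delta$, and then combining them with disjoint colour sets. First, I would recall from Observation~\ref{obs:interval} that $G^2[A]$ is an interval graph with representation $\{I_a\}_{a\in A}$, where $I_a$ spans the leftmost to rightmost neighbour of $a$ in $B$. By convexity, every $b\in B$ lying in $I_a$ is actually a neighbour of $a$. Hence a maximum clique in $G^2[A]$ corresponds to intervals through a common point $b$, i.e.\ to the set $\{a\in A: b\in N_G(a)\}=N_G(b)$, giving $\omega(G^2[A])\leq \Delta$. Since interval graphs are perfect, $\chi(G^2[A])\leq \Delta$.

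For $G^2[B]$, I would use Lemma~\ref{obs:properinterval}, which says $<_B$ is a proper vertex ordering of the proper interval graph $G^2[B]$. Any clique $C\subseteq B$ in $G^2[B]$ consists of vertices consecutive in $<_B$; let $b$ and $b'$ be its endpoints. Since $bb'\in E(G^2)$, there exists $a\in A$ with $ab,ab'\in E(G)$, and convexity of $N_G(a)$ forces $C\subseteq N_G(a)$, so $|C|\leq \Delta$. Thus $\chi(G^2[B])\leq \omega(G^2[B])\leq \Delta$. To combine, I would use two disjoint palettes of $\Delta$ colours for $A$ and $B$: by Observation~\ref{obs:bipsquare} the only edges of $G^2$ between $A$ and $B$ are edges of $G$, and these are automatically properly coloured since their endpoints lie in different colour classes. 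This yields $\chi(G^2)\leq 2\Delta$.

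For the tight example, I would take $G=K_{\Delta,\Delta}$, which is trivially convex bipartite with maximum degree $\Delta$. Any two vertices of $G$ are at distance at most $2$, so $G^2\cong K_{2\Delta}$ and $\chi(G^2)=2\Delta$. There is no real obstacle in this argument; the only slightly delicate point is justifying that a clique in $G^2[B]$ is bounded by $\Delta$ rather than just by $\omega(G^2)$, which is handled cleanly by exploiting the proper interval structure together with the convexity of $N_G(a)$.
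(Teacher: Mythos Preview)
Your proposal is correct and follows essentially the same approach as the paper: bound the clique numbers of $G^2[A]$ and $G^2[B]$ by $\Delta$ via the interval structure (the paper packages this as Observation~\ref{obs:ABclique}), invoke perfectness, combine with disjoint palettes, and exhibit $K_{\Delta,\Delta}$ for tightness. The one minor slip---that \emph{any} clique in $G^2[B]$ is consecutive in $<_B$, which holds only for maximal cliques---is harmless, since your endpoint argument via $b,b'$ and convexity of $N_G(a)$ works regardless.
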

		Let $G=(A,B,E)$ be a complete bipartite graph with $|A|=|B|=n$. Clearly, $G$ is a convex bipartite graph with maximum degree $n$, and $G^2$ is a clique on $2n$ vertices. Thus we have $\chi(G^2)=2n$. This shows that the class of regular complete bipartite graphs form an instance of the graphs for which $\chi(G^2)=2\Delta$, where $\Delta$ is the maximum degree of $G$. This proves the latter part of Theorem~\ref{thm:degree}. In the remaining part of this section, we work towards proving the former part of Theorem~\ref{thm:degree}.
		
		\medskip
		
		First, we note the following observation.
		
		\begin{observation} \label{obs:ABclique}
			Let $G$ be a convex bipartite graph with orderings $<_A$ and $<_B$ (as defined earlier). Let $C_A$ be a clique in $G^2[A]$ and let $C_B$ be a clique in $G^2[B]$. Then there exists vertices $a\in A$ and $b\in B$, such that $C_A\subseteq N_G(b)$ and $C_B\subseteq N_G(a)$.     
		\end{observation}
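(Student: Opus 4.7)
The plan is to establish the two containments $C_A \subseteq N_G(b)$ and $C_B \subseteq N_G(a)$ independently, exploiting the interval structures on each partite set of $G^2$ developed earlier in the section. The argument for $C_B$ will rely on Lemma~\ref{obs:properinterval}, while the argument for $C_A$ will appeal to the interval representation $\{I_a\}_{a\in A}$ of $G^2[A]$ together with the Helly property for intervals on the line.

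For $C_B \subseteq N_G(a)$, let $b_p$ and $b_q$ be respectively the smallest and largest vertex of $C_B$ with respect to $<_B$ (if $C_B$ is a singleton the claim is immediate). Since $C_B$ is a clique in $G^2$, $b_pb_q\in E(G^2)$, and as $b_p,b_q\in B$ with $G$ bipartite, there exists a vertex $a\in A$ such that $ab_p,ab_q\in E(G)$. By convexity of $G$, $N_G(a)$ is consecutive with respect to $<_B$, so every $b\in B$ with $b_p \leq_B b \leq_B b_q$ belongs to $N_G(a)$. Since every vertex of $C_B$ lies in this range by the choice of $b_p$ and $b_q$, we obtain $C_B \subseteq N_G(a)$.

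For $C_A \subseteq N_G(b)$, recall that the intervals $I_a$ used to represent $G^2[A]$ have endpoints $l(I_a), r(I_a) \in B$. Because $C_A$ is a clique in $G^2[A]$, the intervals $\{I_a : a \in C_A\}$ are pairwise intersecting, and hence by Helly's property for intervals on the line they share a common point. The only mild subtlety is that this common point must itself be realized as a vertex of $B$ rather than an arbitrary real number; setting $b = \max_{<_B}\{l(I_a) : a \in C_A\}$ handles this, since each $l(I_a)$ is by construction a vertex of $B$, and from the pairwise intersection property one checks that $l(I_a) \leq_B b \leq_B r(I_a)$ for every $a \in C_A$. By convexity of $G$, every vertex of $B$ strictly between $l(I_a)$ and $r(I_a)$ lies in $N_G(a)$, so $b \in N_G(a)$ for each $a \in C_A$, giving $C_A \subseteq N_G(b)$.

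I do not anticipate any substantive obstacle here; the whole argument is a direct application of convexity combined with the proper-interval and interval representations of $G^2[B]$ and $G^2[A]$ respectively, both of which are already available. The single point requiring care is the realization step in the Helly argument, where one must make sure the common point is actually a vertex of $B$, and this is resolved by the explicit choice of $b$ above.
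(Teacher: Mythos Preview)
Your proof is correct and follows essentially the same approach as the paper. For $C_B$ the arguments are identical; for $C_A$ both proofs exploit the Helly property of the interval representation $\{I_a\}_{a\in A}$, the only cosmetic difference being that the paper takes $b=r(I_{a_{i_1}})$ (the right endpoint of the $<_A$-minimum vertex, i.e.\ the minimum right endpoint) whereas you take the maximum left endpoint---both standard witnesses for the common intersection. One tiny wording fix: your phrase ``strictly between $l(I_a)$ and $r(I_a)$'' should be ``between $l(I_a)$ and $r(I_a)$ (inclusive)'', since $b$ may coincide with an endpoint; the conclusion $b\in N_G(a)$ still holds because $l(I_a),r(I_a)\in N_G(a)$ by definition.
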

		\begin{proof}
			Let $C_A$ be a clique in $G^2[A]$ with $|C_A|=p$ for some integer $p\geq 1$. Then, we can denote
			$C_A=\{a_{i_1},a_{i_2},\ldots,a_{i_p}\}$, where $a_{i_1}<_A a_{i_2}<_A\cdots<_A a_{i_p}$. Let $b\in B$ be the neighbor of $a_{i_1}$ in $G$ that has a maximum index with respect to the ordering $<_B$. i.e. $b=\max_{<_B}\{b_l: b_l\in N_G(a_{i_1})\}$. Let $a\in C_A\setminus \{a_{i_1}\}$. Since $C_A$ is a clique in $G^2[A]$, we have $a_{i_1}a\in E(G^2)$. This implies that $N_{G}(a_{i_1})\cap N_G(a)\neq \emptyset$. Since $a_{i_1}<_A a$, by the definitions of $<_A$ and $b$, we then have $b\in N_G(a)$. This implies that $C_A\subseteq N_G(b)$. 
			
			\smallskip
			
			Let $C_B$ be a clique in $G^2[B]$ with $|C_B|=q$ for some integer $q\geq 1$. Then we can denote $C_B=\{b_{j_1},b_{j_2},\ldots,b_{j_q}\}$, where $b_{j_1}<_B b_{j_2}<_B\cdots<_B b_{j_q}$. Since $b_{j_1}b_{j_q}\in E(G^2)$, there exists a vertex $a\in A$ such that $b_{j_1},b_{j_q}\in N_G(a)$. By the consecutive property of the ordering $<_B$, we then have $C_B=\{b_{j_1},b_{j_2},\ldots,b_{j_q}\}\subseteq N_G(a)$. 
		\end{proof}
		
		We are now ready to prove Theorem~\ref{thm:degree}.
		
		\medskip
		
		\noindent{\textbf{Proof of Theorem~\ref{thm:degree}} }
		
		\begin{proof}
			Let $\omega_A$ and $\omega_B$ denote the size of maximum cliques in $G^2[A]$ and $G^2[B]$ respectively. Then by Observation~\ref{obs:ABclique}, we have that $\omega_A\leq \Delta$ and $\omega_B\leq \Delta$. Recall that $G^2[A]$ is an interval graph by Observation~\ref{obs:interval}, and $G^2[B]$ is a proper interval graph by Lemma~\ref{obs:properinterval}. As both the subgraphs $G^2[A]$ and $G^2[B]$ are perfect, we then have $\chi(G^2[A])=\omega_A\leq \Delta$ and $\chi(G^2[B])=\omega_B\leq \Delta$. Therefore, we can conclude that $\chi(G^2)\leq \chi(G^2[A])+\chi(G^2[B])\leq 2\Delta$. This completes the proof of Theorem~\ref{thm:degree}.
			
		\end{proof}

		\section{On partite testable properties}
		Here, we present a few structural observations on the squares of bipartite graphs when some special properties are satisfied. In particular, we investigate some \emph{partite testable properties} (see Definition~\ref{def:partitetest}) for the squares of bipartite graphs.  
		
		\begin{theorem}\label{thm:oddantihole}
			The property of not containing odd anti-holes of size larger than five is a partite testable property for the squares of bipartite graphs.  
   \end{theorem}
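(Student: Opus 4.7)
The plan is to proceed by contradiction. Suppose $G^2$ contains an induced odd anti-hole $H$ of size $k \geq 7$, with vertices labelled $v_1, v_2, \ldots, v_k$ so that the non-edges of $H$ form the cycle $v_1 v_2 \cdots v_k v_1$. Set $V_A = V(H) \cap A$ and $V_B = V(H) \cap B$. If either set is empty, then $H$ is already an induced odd anti-hole of size $k > 5$ in $G^2[A]$ or $G^2[B]$, immediately contradicting the hypothesis. Otherwise, since the maximum independent set of the odd cycle $C_k$ has size $\lfloor k/2 \rfloor < k/2$ and $|V_A| + |V_B| = k$, the sets $V_A$ and $V_B$ cannot both be independent in the non-edge cycle; without loss of generality, $V_A$ contains a cyclically consecutive pair $v_i, v_{i+1}$.

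The key local step is the following. Since $v_i v_{i+1} \notin E(G^2)$ and $v_i, v_{i+1} \in A$, these two vertices share no common neighbor in $B$ under $G$. Now suppose some $v_j \in V_B$ satisfies $j \notin \{i-1, i+2\} \pmod{k}$. Then $v_j$ is cyclically non-consecutive to both $v_i$ and $v_{i+1}$ in the non-edge cycle, so $v_j v_i, v_j v_{i+1} \in E(H) \subseteq E(G^2)$; by Observation~\ref{obs:bipsquare}, these adjacencies lie in $E(G)$, making $v_j$ a common $B$-neighbor of $v_i$ and $v_{i+1}$, contradiction. Hence $V_B \subseteq \{v_{i-1}, v_{i+2}\}$.

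I would then iterate this local step to empty $V_B$. The hypothesis $k \geq 7$ guarantees that $k$ does not divide any of $3, 4, 5, 6$, so $v_{i+3}$ and $v_{i+4}$ are distinct from both $v_{i-1}$ and $v_{i+2}$ and therefore lie in $V_A$, making $(v_{i+3}, v_{i+4})$ another cyclically consecutive pair in $V_A$. Applying the key step to this pair yields $V_B \subseteq \{v_{i+2}, v_{i+5}\}$, and intersecting with the previous constraint gives $V_B \subseteq \{v_{i+2}\}$. Next, $v_{i+4}$ and $v_{i+5}$ both lie in $V_A$ (each is distinct from $v_{i+2}$), so applying the key step once more to $(v_{i+4}, v_{i+5})$ produces $V_B \subseteq \{v_{i+3}, v_{i+6}\}$. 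The intersection $\{v_{i+2}\} \cap \{v_{i+3}, v_{i+6}\}$ is empty (using $k \nmid 1$ and $k \nmid 4$), so $V_B = \emptyset$, contradicting our non-emptiness assumption.

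The main thing to be careful about is the mod-$k$ arithmetic bookkeeping: the hypothesis $k \geq 7$ is used precisely to rule out the small divisibility coincidences $k \mid 3, 4, 5, 6$ that would conflate the vertices involved in the three consecutive applications of the key step. This matches the boundary behaviour, since $C_5^c = C_5$ does arise as an induced subgraph of the square of a convex bipartite graph (Figure~\ref{fig:not_perfect}), so the analogous statement genuinely fails at $k = 5$.
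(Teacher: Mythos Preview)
Your proof is correct and follows essentially the same approach as the paper's: both arguments find a cyclically consecutive pair $v_i,v_{i+1}$ in one partite set (you via the independence-number bound for odd cycles, the paper via a direct parity argument), use the common-neighbor observation to confine $V_B$ to $\{v_{i-1},v_{i+2}\}$, and then reapply the same step to the pairs $(v_{i+3},v_{i+4})$ and $(v_{i+4},v_{i+5})$ to force $V_B=\emptyset$. The only cosmetic difference is that the paper phrases the key step as ``common $H$-neighbors of a consecutive same-side pair lie in that side'' and then checks directly that $v_{i-1}$ and $v_{i+2}$ are such common neighbors, whereas you phrase it contrapositively as a containment $V_B\subseteq\{v_{i-1},v_{i+2}\}$ and intersect; the arithmetic checks on $k\geq 7$ are identical.
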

		\begin{proof}
			Let $G=(A,B,E)$ be a bipartite graph with partite independent sets $A$ and $B$. Suppose that both the subgraphs $G^2[A]$  and $G^2[B]$ do not contain odd anti-holes of size larger than five. We need to prove that $G^2$ also satisfies the same property.  For the sake of contradiction, suppose that $G^2$ has an odd anti-hole, say $H$, of size larger than five. Let $V(H)=\{v_0,v_1,\ldots,v_k\}$, where $k$ is even, $k\geq 6$; here, the vertices are labeled with respect to the cyclic order of the vertices in its complement, which is a hole. (Throughout the proof, we consider the indices modulo $k+1$). Note that by the definition of an anti-hole, for each $i\in \{0,1,2,\ldots,k\}$, we have $N_H(v_i)=V(H)\setminus \{v_{i-1},v_{i+1}\}$. In other words, any two vertices that are non-adjacent in $H$ are consecutive with respect to their labeling.
			
			\smallskip
			We now have the following claims.
			
			\smallskip
			\begin{myclaim}\label{claim:nonadj}
				\textit{There exists $i\in \{0,1,\ldots,k\}$ such that $v_i,v_{i+1}\in A$ or $v_i,v_{i+1}\in B$}
			\end{myclaim}  
			\begin{proof}
                Suppose not. Without loss of generality, we can assume that $v_0\in A$. For $i\in \{0,1,\ldots,k\}$, we then have $v_i\in A$, if $i$ is even and $v_i\in B$, if $i$ is odd. As $k$ is even, this implies that $v_0,v_k\in A$. Since $v_{k+1}=v_0$, this is a contradiction to our assumption. Hence, the claim.
			\end{proof}
			\begin{myclaim}\label{claim:commonneighbors}
				\textit{For $i\in \{0,1,\ldots,k\}$, if $v_i,v_{i+1}\in A$ (respectively, $B$) then $N_H(v_i)\cap N_H(v_{i+1}) \subseteq A$ (respectively, $B$).} 
			\end{myclaim}
			\begin{proof}
				It is enough to prove the claim for the case in which $v_i,v_{i+1}\in A$ (as the proof for the respective case is similar). If possible, assume that there exists a vertex $b\in N_H(v_i)\cap N_H(v_{i+1})$ such that $b\in B$. 
				This implies that $v_ib,v_{i+1}b\in E(G)$ (since $v_i,v_{i+1}\in A$ and by Observation~\ref{obs:bipsquare}). But we then have  $v_iv_{i+1}\in E(G^2)$,  a contradiction. Therefore, we can conclude that $N_H(v_i)\cap N_H(v_{i+1})\subseteq A$. Hence, the claim.
			\end{proof}
			\medskip
			Without loss of generality, by Claim~\ref{claim:nonadj}, we can assume that $v_i,v_{i+1} \in A$, for some $i\in \{0,1,\ldots,k\}$. Then by Claim~\ref{claim:commonneighbors}, we have $V(H)\setminus \{v_{i-1},v_{i+2}\}=(N_H(v_i)\cap N_H(v_{i+1})) \cup \{v_i,v_{i+1}\} \subseteq A$. Now, we will show that even the remaining vertices $v_{i-1}$ and $v_{i+2}$ are also in $A$. Since $k\geq 6$, we can find distinct vertices $v_{i+3},v_{i+4},v_{i+5}\in  V(H)\setminus \{v_i,v_{i+1},v_{i-1},v_{i+2}\}$. Since $V(H)\setminus \{v_{i-1},v_{i+2}\} \subseteq A$, we then have $v_{i+3},v_{i+4},v_{i+5} \in A$. Note that $v_{i-1}\in N_H(v_{i+3})\cap N_H(v_{i+4})$ and $v_{i+3},v_{i+4} \in A$. Similarly, $v_{i+2} \in N_H(v_{i+4})\cap N_H(v_{i+5})$ and $v_{i+4},v_{i+5} \in A$. Therefore, again by Claim~\ref{claim:commonneighbors}, we have $v_{i-1},v_{i+2}\in A$. This together with the fact that $V(H)\setminus \{v_{i-1},v_{i+2}\} \subseteq A$ implies that $V(H)\subseteq A$. Recall that $H$ is an anti-hole of size larger than five. This implies that $G^2[A]$ contains an odd anti-hole of size larger than five as an induced subgraph. As this is a contradiction, we can conclude that $G^2$ does not contain odd anti-holes of length larger than five. Hence, the theorem.
		\end{proof}

		Let $G=(A,B,E)$ be a chordal bipartite graph. By a result in~\cite{Le2019hardness}, both the subgraphs, $G^2[A]$ and $G^2[B]$ (which they refer to as \emph{half squares} in~\cite{Le2019hardness}) are chordal and therefore, perfect. Since perfect graphs are (odd antihole)-free, we then have the following corollary due to Theorem~\ref{thm:oddantihole}.
		\begin{corollary}\label{corr:chordalbip}
			Squares of chordal bipartite graphs do not contain odd anti-holes of size larger than five.
		\end{corollary}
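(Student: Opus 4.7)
The plan is to deduce this directly from Theorem~\ref{thm:oddantihole} together with the known structural result from~\cite{Le2019hardness} on half squares of chordal bipartite graphs. Concretely, let $G = (A, B, E)$ be a chordal bipartite graph; I would first invoke the cited result to assert that both half squares $G^2[A]$ and $G^2[B]$ are chordal graphs.

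Next, I would use the well-known fact that chordal graphs are perfect, and by the Strong Perfect Graph Theorem (already quoted in the preliminaries) every perfect graph is (odd antihole)-free. In particular, $G^2[A]$ and $G^2[B]$ contain no induced odd antihole of any length $\geq 5$, so in particular none of length larger than five. Thus both induced subgraphs satisfy the property $P$ of ``not containing odd antiholes of size larger than five''.

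Finally, since $P$ has just been established to be a partite testable property for squares of bipartite graphs (Theorem~\ref{thm:oddantihole}), applying that theorem to $G$ immediately yields that $G^2$ itself satisfies $P$, i.e., $G^2$ contains no induced odd antihole of size larger than five.

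I do not anticipate any real obstacle here: the corollary is essentially a one-line deduction obtained by checking the hypothesis of Theorem~\ref{thm:oddantihole} for the chordal bipartite case. The only nontrivial ingredient is the external result that half squares of chordal bipartite graphs are chordal, which is cited from~\cite{Le2019hardness}; everything else (chordal $\Rightarrow$ perfect $\Rightarrow$ odd-antihole-free) is standard.
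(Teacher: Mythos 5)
Your proposal is correct and matches the paper's own argument exactly: the paper also cites the result from~\cite{Le2019hardness} that the half squares $G^2[A]$ and $G^2[B]$ of a chordal bipartite graph are chordal, hence perfect and (odd antihole)-free, and then applies Theorem~\ref{thm:oddantihole}. No differences worth noting.
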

		In the remarks below, we note that the properties, namely, being \emph{(even anti-hole)-free or (hole)-free} are not partite testable properties for the squares of bipartite graphs in general. 
		\begin{remark}\textbf{\textit{(Even anti-hole)-free property is not partite testable in general:}}
			Let $G=(A,B,E)$ be a bipartite graph with partite independent sets $A$ and $B$. Even if both the induced subgraphs $G^2[A]$  and $G^2[B]$ do not contain even anti-holes of size larger than four, $G^2$ may still contain even anti-holes of any size larger than four. See Figure~\ref{antihole} for an illustration (the edges of an even anti-hole in $G^2$ are shown in red). To be precise, here, the induced subgraphs $G^2[A]$ and $G^2[B]$ do not contain any anti-holes (as both the subgraphs induce cliques in $G^2$). But in $G^2$, the set of vertices $\{a_2,b_2,a_3,b_3,a_4,b_4\}$ induces an even anti-hole of size six. 
		\end{remark}
		
		\begin{figure}[h]
			\begin{tabular}{p{.4\textwidth}p{.4\textwidth}}
				\parbox{.4\textwidth}{\centering
					\begin{tikzpicture}[scale=.75]
						\node[draw,circle] (a1) at (0,0) {$a_1$};
						\node[draw,circle] (a2) at (0,-1.5) {$a_2$};
						\node[draw,circle] (a3) at (0,-3) {$a_3$};
						\node[draw,circle] (a4) at (0,-4.5) {$a_4$};
						\node[draw,circle] (b1) at (2,0) {$b_1$};
						\node[draw,circle] (b2) at (2,-1.5) {$b_2$};
						\node[draw,circle] (b3) at (2,-3) {$b_3$};
						\node[draw,circle] (b4) at (2,-4.5) {$b_4$};
						\draw (a1) -- (b2);
						\draw (a1) -- (b3);
						\draw (a1) -- (b4);
						\draw (a2) -- (b1);
						\draw (a2) -- (b3);
						\draw (a3) -- (b1);
						\draw (a3) -- (b4);
						\draw (a4) -- (b1);
						\draw (a4) -- (b2);
				\end{tikzpicture}} &
				\parbox{.4\textwidth}{\centering
					\begin{tikzpicture}[scale=.75]
						\node[draw,circle] (a1) at (0,0) {$a_1$};
						\node[draw,circle] (a2) at (0,-1.5) {$a_2$};
						\node[draw,circle] (a3) at (0,-3) {$a_3$};
						\node[draw,circle] (a4) at (0,-4.5) {$a_4$};
						\node[draw,circle] (b1) at (2,0) {$b_1$};
						\node[draw,circle] (b2) at (2,-1.5) {$b_2$};
						\node[draw,circle] (b3) at (2,-3) {$b_3$};
						\node[draw,circle] (b4) at (2,-4.5) {$b_4$};
						\draw (a1) -- (b2);
						\draw (a1) -- (b3);
						\draw (a1) -- (b4);
						\draw (a2) -- (b1);
						\draw[color=red,line width=1.5pt] (a2) -- (b3);
						\draw (a3) -- (b1);
						\draw[color=red,line width=1.5pt] (a3) -- (b4);
						\draw (a4) -- (b1);
						\draw[color=red,line width=1.5pt] (a4) -- (b2);
						\draw (a1) -- (a2);
						\draw[color=red,line width=1.5pt](a2) -- (a3);
						\draw[color=red,line width=1.5pt](a3) -- (a4);
						\draw(a1).. controls (-.75,-2).. (a3);
						\draw(a1).. controls (-1.25,-2.75).. (a4);
						\draw[color=red,line width=1.5pt] (a2).. controls (-.75,-3).. (a4);
						\draw (b1) -- (b2);
						\draw[color=red,line width=1.5pt] (b2) -- (b3);
						\draw [color=red,line width=1.5pt](b3) -- (b4);
						\draw (b1).. controls (2.75,-2).. (b3);
						\draw(b1).. controls (3.25,-2.75).. (b4);
						\draw[color=red,line width=1.5pt] (b2).. controls (2.75,-3).. (b4);
				\end{tikzpicture}}\\
				\vspace{.1in}
				\parbox{.4\textwidth}{\centering $G$} &  \vspace{.1in} \parbox{.4\textwidth}{\centering $G^2$}
			\end{tabular}
			\caption{An illustration to show that \emph{(even-hole)-free} is not a partite testable property}
			\label{antihole}
		\end{figure}
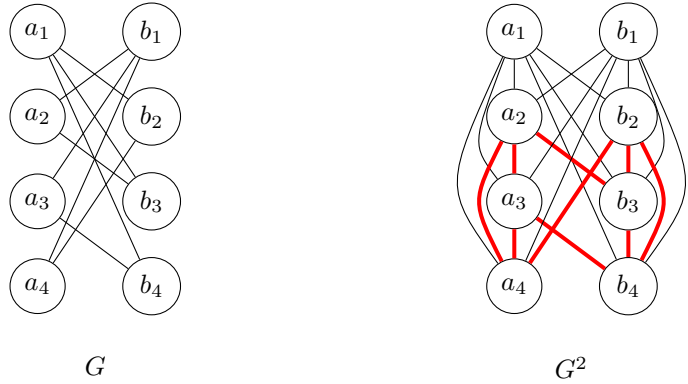
		
		\begin{remark}\textbf{\textit{(Even hole)-free or (odd hole)-free properties are not partite testable in general:}}
			Let $G=(A,B,E)$ be a convex bipartite graph with partite independent sets $A$ and $B$. Even if both the subgraphs $G^2[A]$ and $G^2[B]$ do not contain holes of length greater than three, $G^2$ may contain holes of any length greater than three (see Figure~\ref{fig:not_perfect} for an example). 
		\end{remark}
		\medskip
		
		Note that the structure of the graph $G^2$ given in Figure~\ref{fig:not_perfect} is particularly interesting. Later, we will see in Theorem~\ref{cycle:structure} that any induced cycle of length at least four that is present in the square of a convex bipartite graph follows the same structure as the one in Figure~\ref{fig:not_perfect}; see the details in Theorem~\ref{cycle:structure} for a precise statement. 

        \medskip
        
        Given a convex bipartite graph $G=(A,B,E)$, recall the definitions of orderings $<_A$ and $<_B$ for the vertices in $A$ and $B$, respectively. In the rest of the section, for $b_i,b_j\in B$ with $b_i<_B b_j$, we denote by $[b_i,b_j]$, the set $\{b\in B: b_i\leq_B b \leq_B b_j\}$. Further, we denote by $(b_i,b_j)$ the set $[b_i,b_j]\setminus \{b_i,b_j\}$. Let $G=(A,B,E)$ be a bipartite graph. For $k\geq 2$, we call an induced path $P=(a_1,a_2,\ldots,a_k)$ in $G^2$ an \emph{$(A,b,b')$-path}, if $V(P)\subseteq A$ and there exist distinct vertices $b,b'\in B$ such that $b\in N_G(a_1)\setminus \Big(\bigcup\limits_{j=2}^k N_G(a_j)\Big)$ and $b'\in N_G(a_k)\setminus \Big(\bigcup\limits_{j=1}^{k-1} N_G(a_j)\Big)$. Note that if $P$ is an $(A,b,b')$-path, then $V(P)\cup\{b,b'\}$ may induce either a path or cycle in $G^2$. First, we note some properties of \emph{$(A,b,b')$-paths} that are present in the square of a convex bipartite graph.
	\begin{observation}\label{obs:path}
		Let $G=(A,B,E)$ be a convex bipartite graph with the orderings $<_A$ and $<_B$. For $k\geq 2$, let $P=(a_1,a_2,\ldots,a_k)$ be an $(A,b,b')$-path in $G^2$ with $a_1<_A a_k$. Then $b<_B b'$ and $a_1=\min_{<_A}\{a_j\in V(P)\}$.
	\end{observation}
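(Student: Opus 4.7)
The plan is to prove the two conclusions separately: $b <_B b'$ drops out of Observation~\ref{obs:orderprop}, while $a_1 = \min_{<_A}\{a_j \in V(P)\}$ needs the interval representation $\{I_a\}_{a \in A}$ of $G^2[A]$ from Observation~\ref{obs:interval} together with the induced-path hypothesis on $P$. For the first, I apply Observation~\ref{obs:orderprop} with the pair $(a,a') = (a_1, a_k)$: we have $a_1 <_A a_k$, $b \in N_G(a_1)$, $b' \in N_G(a_k)$, and crucially $b \notin N_G(a_k)$ since $b \in N_G(a_1) \setminus \bigcup_{j=2}^{k} N_G(a_j)$. Were $b' <_B b$, Observation~\ref{obs:orderprop} would force $b \in N_G(a_k)$, a contradiction; since $b \neq b'$, this gives $b <_B b'$.

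For $a_1 = \min_{<_A}\{a_j \in V(P)\}$, the case $k=2$ is immediate from $a_1 <_A a_k$, so assume $k \geq 3$ and, for contradiction, that some $a_j$ with $2 \leq j \leq k-1$ satisfies $a_j <_A a_1$. I would first dispose of any such $j \geq 3$: since $P$ is induced, $I_{a_1} \cap I_{a_j} = \emptyset$, and $r(I_{a_j}) \leq r(I_{a_1})$ then forces $r(I_{a_j}) < l(I_{a_1})$, so $I_{a_j}$ lies strictly to the left of $I_{a_1}$; similarly $I_{a_1} \cap I_{a_k} = \emptyset$ together with $r(I_{a_1}) \leq r(I_{a_k})$ places $I_{a_k}$ strictly to the right of $I_{a_1}$. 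A short induction along the sub-path $a_j, a_{j+1}, \ldots, a_k$ then produces a contradiction: consecutive intervals intersect, so $l(I_{a_{i+1}}) \leq r(I_{a_i}) < l(I_{a_1})$, and $I_{a_1} \cap I_{a_{i+1}} = \emptyset$ (from induced-ness, since $i+1 \geq 4$) forces $r(I_{a_{i+1}}) < l(I_{a_1})$; hence every $I_{a_i}$ in the sub-path stays strictly to the left of $I_{a_1}$, contradicting the position of $I_{a_k}$.

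The remaining subcase, $j=2$, is the main obstacle, because the \emph{entirely-to-the-left} argument breaks down as soon as $I_{a_1}$ and $I_{a_2}$ overlap. By the previous step we may now assume $a_i \geq_A a_1$ for every $i \geq 3$, and in particular $a_3 \geq_A a_1$. Since $P$ is induced, $I_{a_1} \cap I_{a_3} = \emptyset$, and combined with $r(I_{a_3}) \geq r(I_{a_1})$ this gives $l(I_{a_3}) > r(I_{a_1})$. But $a_2 a_3 \in E(G^2)$ forces $I_{a_2} \cap I_{a_3} \neq \emptyset$, whence $r(I_{a_2}) \geq l(I_{a_3}) > r(I_{a_1})$, contradicting $a_2 <_A a_1$ and completing the proof.
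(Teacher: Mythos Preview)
Your proof of $b <_B b'$ is exactly the paper's argument via Observation~\ref{obs:orderprop}. Your proof of $a_1 = \min_{<_A}\{a_j\in V(P)\}$ is correct but takes a longer, more hands-on route through the interval representation than the paper does. The paper's argument is a two-line application of Observation~\ref{obs:clique}: if the minimum were some interior $a_j$ (with $2\le j\le k-1$), then both its path-neighbours $a_{j-1},a_{j+1}$ lie in $N_{G^2}(a_j)\cap\{a\in A:a_j<_A a\}$, which is a clique by Observation~\ref{obs:clique}; hence $a_{j-1}a_{j+1}\in E(G^2)$, contradicting that $P$ is induced. Your two-subcase interval analysis (propagating intervals to the left for $j\ge 3$, then handling $j=2$ via $I_{a_2}\cap I_{a_3}$) is a valid substitute, and has the minor advantage of being self-contained---it re-derives, in the specific situation needed, the content of Observation~\ref{obs:clique} rather than quoting it---but at the cost of noticeably more bookkeeping.
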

\begin{proof}
 It is easy to see that $b<_B b'$, as otherwise, by Observation~\ref{obs:orderprop} (with $a=a_1$ and $a'=a_k$) we have
 $b\in N_G(a_k)$, and this contradicts the definition of $(A,b,b')$-path.
Now to prove the latter part, let $a_i=\min_{<_A}\{a_j\in V(P)\}$. Note that $a_1<_A a_k$. Suppose that $a_i=a_j$, for some $j\in \{2,3,\ldots,k-1\}$. Then the vertices $a_{j-1},a_{j+1}$ exist, and are such that $a_j<_A a_{j-1},a_{j+1}$, and $a_{j-1},a_{j+1} \in N_P(a_j)\subseteq N_{G^2[A]}(a_j)$. By Observation~\ref{obs:clique}, we then have $a_{j-1}a_{j+1}\in E(G^2[A])$. This contradicts the fact that $P$ is an induced path. Therefore, we can conclude that $a_i=a_1$.

\end{proof}
		
		\begin{lemma}\label{lem:path}
			Let $G=(A,B,E)$ be a convex bipartite graph with the orderings $<_A$ and $<_B$. For $k\geq 2$, let $P=(a_1,a_2,\ldots,a_k)$ be an $(A,b,b')$-path in $G^2$ with $a_1<_A a_k$. Then the following hold:
			\begin{myenumerate}
				\item \label{item:neighbor} If $k=2$, we have $N_G(a_1)\cap N_G(a_k)\subseteq (b,b')$. If $k>2$, for each $j\in \{2,3,\ldots,k-1\}$,  we have $N_G(a_j)\subseteq (b,b')$. 
				\item \label{item:span} For each $\Tilde{b} \in [b,b']$, there exists $j\in \{1,2,\ldots,k\}$ such that $\Tilde{b}\in N_G(a_j)$.
			\end{myenumerate}
		\end{lemma}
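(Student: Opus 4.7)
The plan is to work with each $N_G(a_j)$ as an interval $[l_j, r_j]$ in $<_B$ (guaranteed by the convexity of $G$), and first dispose of part \ref{item:span}. Consecutive vertices $a_j, a_{j+1}$ of the path are adjacent in $G^2$ and both lie in $A$, so they share a common neighbor in $B$, making the intervals $[l_j, r_j]$ and $[l_{j+1}, r_{j+1}]$ overlap. Hence $\bigcup_{j=1}^{k} N_G(a_j)$ is a single interval in $<_B$ with endpoints $\min_j l_j$ and $\max_j r_j$. Observation~\ref{obs:path} gives $a_1 = \min_{<_A} V(P)$, so $r_1 \leq r_j$ for all $j$; combined with $b \in [l_1, r_1]$ and $b \notin [l_j, r_j]$ for $j \geq 2$, this forces $l_j > b \geq l_1$, so $\min_j l_j = l_1 \leq b$. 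Since $b' \leq r_k \leq \max_j r_j$ as well, $[b, b'] \subseteq \bigcup_j N_G(a_j)$, proving \ref{item:span}.

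For part \ref{item:neighbor} with $k = 2$, take any $\tilde{b} \in N_G(a_1) \cap N_G(a_2)$; clearly $\tilde{b} \neq b, b'$. If $\tilde{b} <_B b$, Observation~\ref{obs:orderprop} applied to $a_1 <_A a_2$ with $b \in N_G(a_1)$ and $\tilde{b} \in N_G(a_2)$ gives $b \in N_G(a_2)$, a contradiction; if $\tilde{b} >_B b'$, the convexity of $N_G(a_1)$ combined with $b, \tilde{b} \in N_G(a_1)$ and $b <_B b' <_B \tilde{b}$ gives $b' \in N_G(a_1)$, again a contradiction. For $k > 2$ and $j \in \{2, \ldots, k-1\}$, the lower bound $l_j > b$ follows by the same Observation~\ref{obs:orderprop} argument, using $a_1 <_A a_j$ from Observation~\ref{obs:path}.

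The main obstacle for $k > 2$ is the upper bound $r_j < b'$, and to handle it I plan to first prove the subclaim that $a_k = \max_{<_A} V(P)$, by induction on $k$. In the inductive step, I observe that the subpath $P'' = (a_2, \ldots, a_k)$ is an $(A, b_1, b')$-path for some $b_1 \in N_G(a_2) \setminus N_G(a_3)$: such a $b_1$ exists because if $N_G(a_2) \subseteq N_G(a_3)$, a common neighbor of $a_1, a_2$ would also lie in $N_G(a_3)$, forcing $a_1 a_3 \in E(G^2)$ and contradicting the induced structure of $P$. If $a_2 <_A a_k$, the induction hypothesis applied to $P''$ gives $a_k = \max_{<_A} V(P'')$ and hence of $V(P)$. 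In the hard case $a_2 >_A a_k$, I pick $x \in N_G(a_1) \cap N_G(a_2)$ (a common neighbor since $a_1 a_2 \in E(G^2)$) and apply Observation~\ref{obs:orderprop} twice: first with $a_k <_A a_2$ to force $x >_B b'$ (otherwise it would yield $b' \in N_G(a_2)$), and then with $a_1 <_A a_k$ and $b' <_B x$ to conclude $x \in N_G(a_k)$; this places $x \in N_G(a_1) \cap N_G(a_k)$, giving $a_1 a_k \in E(G^2)$ and contradicting the induced path structure since $k \geq 3$.

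With the subclaim in hand, $r_j \leq r_k$. If $r_j \geq b'$, then $b' \notin [l_j, r_j]$ together with $r_j \geq b'$ forces $b' <_B l_j$; combined with $l_k \leq b'$ and $r_j \leq r_k$, this gives $N_G(a_j) \subseteq N_G(a_k)$. For $j \leq k - 2$, the non-adjacency of $a_j$ and $a_k$ in $G^2$ yields $N_G(a_j) \cap N_G(a_k) = \emptyset$, contradicting the non-empty inclusion. For $j = k - 1$, the inclusion propagates to $N_G(a_{k-2}) \cap N_G(a_{k-1}) \subseteq N_G(a_{k-2}) \cap N_G(a_k) = \emptyset$ (using the non-adjacency of $a_{k-2}, a_k$ for $k \geq 4$, with the analogous argument on $a_1, a_2, a_3$ for $k = 3$), contradicting that consecutive vertices of $P$ share a common neighbor. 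This completes the upper bound $r_j < b'$ and hence part \ref{item:neighbor}.
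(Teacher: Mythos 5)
Your proof is correct, but it is organized quite differently from the paper's. The paper proves \ref{item:neighbor} first by a minimal-counterexample sweep: if some interior $N_G(a_j)$ escapes $(b,b')$ it must lie entirely before $b$, which forces $j=2$, and then the containment $N_G(a_l)\subseteq[b_1,b-1]$ propagates along every edge of the path until it contradicts $b'\in N_G(a_k)$; part \ref{item:span} is then proved by a second propagation of the same kind, using \ref{item:neighbor}. You instead prove \ref{item:span} directly and independently of \ref{item:neighbor}, from the fact that consecutive neighborhoods overlap so their union is a single interval containing $[b,b']$ --- a cleaner, more conceptual argument. For \ref{item:neighbor} you isolate a genuinely new structural subclaim, namely that $a_k=\max_{<_A}V(P)$, which is the mirror of Observation~\ref{obs:path} and is \emph{not} obtainable from it by reversing the path (the hypothesis $a_1<_A a_k$ points the wrong way); your induction on $k$, with the two applications of Observation~\ref{obs:orderprop} to rule out $a_k<_A a_2$, is sound, and the subclaim is of independent interest since the paper only recovers the stronger monotonicity $v_1<_A\cdots<_A v_{k-2}$ much later (Theorem~\ref{cycle:structure}\ref{P_2}) as a consequence of Lemma~\ref{lem:path}. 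Two small points you should make explicit: (i) when you declare $P''=(a_2,\ldots,a_k)$ to be an $(A,b_1,b')$-path, you verify $b_1\notin N_G(a_3)$ but you also need $b_1\notin N_G(a_j)$ for $j\geq 4$; this is immediate because otherwise $a_2a_j\in E(G^2)$ would violate the inducedness of $P$, but it should be said; (ii) your endpoint computation $\min_j l_j=l_1$ in part \ref{item:span} silently reuses the lower-bound argument $l_j>_B b$ for $j\geq 2$, which is fine but worth flagging since you present \ref{item:span} before \ref{item:neighbor}. Neither affects correctness.
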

		\begin{proof}
		  

			
			
			\noindent \ref{item:neighbor} Note that $b<_B b'$ by Observation~\ref{obs:path}. If $k=2$, then as $a_1a_k\in E(G^2)$, $N_G(a_1)\cap N_G(a_k)\neq \emptyset$. Further, by the consecutive property of the ordering $<_B$, and the definitions of $b$ and $b'$, we have $N_G(a_1)\cap N_G(a_k) \subseteq (b,b')$. Suppose that $k>2$. If possible, let $a_j$, where $j\in \{2,3,\ldots,k-1\}$ be the first vertex along the path $P$ for which \ref{item:neighbor} is not true. i.e. let $j= \min \{2,3,\ldots,k-1\}$ be such that $N_G(a_j)\nsubseteq (b,b')$.

   \smallskip
			
			
		 Recall that $b,b'\notin N_G(a_j)$. Thus by the definition of $<_B$, if $a_j$ has a neighbor $b_l\in (b,b')$ then we have $N_G(a_j)\subseteq (b,b')$, a contradiction, and we are done. Therefore we can assume that  $N_G(a_j)\cap [b,b']=\emptyset$. But then it should be the case that $j=2$. (As otherwise, $j-1\geq 2$ and by the choice of $j$ we have  $N_G(a_{j-1})\subseteq (b,b')$. This implies that $N_G(a_{j-1})\cap N_G(a_j)=\emptyset$, and contradicts the fact that $a_{j-1}a_j\in E(G^2)$.) Now since $b'\notin N_G(a_1)$ and $a_1a_{2}\in E(G^2)$, $N_G(a_2)\nsubseteq [b,b']$ implies that $N_G(a_2)\subseteq [b_1,b-1]$, where  $b_1=\min_{<_B}\{b\in B\}$. Since $a_2a_{3}\in E(G^2)$ and $b\notin N_G(a_{3})$, by the consecutive property of $<_B$ we then have $N_G(a_{3})\subseteq [b_1,b-1]$. For each $l\geq 3$, by repeating the same arguments along the edges $a_la_{l+1}$ in $G^2$,  we finally get $N_G(a_k)\subseteq [b_1,b-1]$. As  $b<_B b'$, the consecutive property of $<_B$ implies that $b'\notin N_G(a_k)$, a contradiction. Therefore, \ref{item:neighbor} is true.
			
			\medskip
			
			\noindent\ref{item:span} Suppose not. Let $x \in [b, b']$ be such that for any $j\in \{1,2,\ldots,k\}$, $x\notin N_G(a_j)$. Consider the vertex $a_2$. By \ref{item:neighbor}, $N_G(a_2)\subseteq (b,b')$. Since $a_1a_2\in E(G^2)$ there exists a vertex $y\in B$ such that $a_1y,a_2y\in E(G)$. As $b\in N_G(a_1)$, $y\in N_G(a_1)\cap N_G(a_2)$ and $x\notin N_G(a_1)\cup N_G(a_2)$, by the consecutive property of the ordering $<_B$, we have $N_G(a_2)\subseteq (b,x)$. Now for each $l\geq 2$, by repeating the same arguments along the edges $a_la_{l+1}$ in $G^2$, we finally get $N_G(a_k)\subseteq (b,x)$. Since $x <_B b'$, the consecutive property of $<_B$ implies that $b'\notin N_G(a_k)$. This is a contradiction and hence proves \ref{item:span}.
		\end{proof}
		We now have the following corollary of Lemma~\ref{lem:path}.
		
		\begin{corollary}\label{corr:path}
		Let $G=(A,B,E)$ be a convex bipartite graph. Let $C$ be an induced cycle in $G^2$ and let $P=(a_1,a_2,\ldots,a_k)$ (where $k\geq 2$), be a sub-path (recall the definition from Section~\ref{sec:prelim}) of $C$ which is also an $(A,b,b')$-path. If  $b,b'\in B\cap V(C)$ then $(b, b')\cap V(C)=\emptyset$.
		\end{corollary}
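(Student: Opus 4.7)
The plan is to argue by contradiction, exploiting that $C$ is an induced cycle together with the spanning property of Lemma~\ref{lem:path}\ref{item:span}. Suppose $\tilde{b}\in(b,b')\cap V(C)$; the aim is to show that no edge of $C$ can realise the $G^2$-adjacency between $\tilde{b}$ and a suitable $a_j$.

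First I would establish the local structure of $C$ at the endpoints of the sub-path $P$. Since $P$ is a sub-path of $C$, the vertices of $V(P)$ appear consecutively along $C$, so $a_1$ has exactly one $C$-neighbour outside $V(P)$ (call it $u$) besides $a_2$, and $a_k$ has exactly one $C$-neighbour outside $V(P)$ (call it $w$) besides $a_{k-1}$. Because $b\in V(C)$ and $ba_1\in E(G)\subseteq E(G^2)$, and $C$ is induced, the edge $ba_1$ must lie on $C$; since $b\in B$ cannot equal $a_2\in A$, we conclude $b=u$, and symmetrically $b'=w$. (By Observation~\ref{obs:path} we may assume without loss of generality that $a_1<_A a_k$, giving $b<_B b'$ and making the interval notation $(b,b')$ well defined.)

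Next, applying Lemma~\ref{lem:path}\ref{item:span} to $\tilde{b}\in[b,b']$ produces some $j\in\{1,\ldots,k\}$ with $\tilde{b}\in N_G(a_j)$, hence $\tilde{b}a_j\in E(G^2)$. Because $\tilde{b},a_j\in V(C)$ and $C$ is induced, this edge must be an edge of $C$. I would then split cases on $j$: for $1<j<k$, the only $C$-edges at $a_j$ are $a_{j-1}a_j$ and $a_ja_{j+1}$, so $\tilde{b}\in\{a_{j-1},a_{j+1}\}\subseteq A$, contradicting $\tilde{b}\in B$; for $j=1$, the $C$-edges at $a_1$ are $a_1a_2$ and $a_1b$, giving $\tilde{b}\in\{a_2,b\}$, which is impossible since $a_2\in A$ and $\tilde{b}\neq b$ as $\tilde{b}$ lies strictly inside $(b,b')$; the case $j=k$ is entirely symmetric. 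Each case yields a contradiction, proving $(b,b')\cap V(C)=\emptyset$.

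The main obstacle is the initial identification $b=u$ and $b'=w$: it rests on combining the induced property of $C$ with the fact (from the $(A,b,b')$-path definition) that $b$ and $b'$ have no $G$-neighbours among the $a_j$'s beyond $a_1$ and $a_k$ respectively, so they cannot appear as $C$-neighbours of any other $a_j$. A small point to be careful about is the corner case in which $u$ and $w$ might coincide (if $|V(C)|=k+1$), but this is ruled out by the requirement $b\neq b'$ in the definition of an $(A,b,b')$-path. Once these identifications are locked in, the remainder is a short three-case analysis powered entirely by Lemma~\ref{lem:path}\ref{item:span}.
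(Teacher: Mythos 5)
Your proof is correct and follows essentially the same route as the paper: invoke Lemma~\ref{lem:path}\ref{item:span} to obtain a $G^2$-adjacency between $\tilde{b}$ and some $a_j$, then contradict the inducedness of $C$. The paper states this contradiction in one line, while you additionally spell out why $\tilde{b}a_j$ cannot be an edge of $C$ (by identifying $b$ and $b'$ as the $C$-neighbours of $a_1$ and $a_k$), which is a correct filling-in of a detail the paper leaves implicit.
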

  \begin{proof}
      Suppose not. Let  $x\in (b, b')\cap V(C)$. Then by Lemma~\ref{lem:path}\ref{item:span}, there exists a vertex $a_j\in V(P)$ such that $xa_j\in E(G)\subseteq E(G^2)$. This contradicts the fact that $C$ is an induced cycle.
  \end{proof}
		
		\medskip
		
		Let $G$ be a convex bipartite graph. Since the class of convex bipartite graphs forms a subclass of chordal bipartite graphs, by Corollary~\ref{corr:chordalbip}, $G^2$ does not contain odd antiholes of size larger than five. Therefore, by the \textit{Strong Perfect Graph Theorem}, the sole reason for the non-perfectness of the squares of convex bipartite graphs is due to the presence of odd holes (note that an odd antihole of size five is isomorphic to its complement $C_5$, the odd hole of size five). In the following observations and lemmas, we study the structure of holes that are present in the squares of convex bipartite graphs.
		
		\smallskip
		\begin{observation}\label{obs:cycleneighbors}
		Let $G=(A,B,E)$ be a convex bipartite graph with the orderings $<_A$ and $<_B$, and let $C$ be an induced cycle in $G^2$ of length $k\geq 4$. Then $V(C)\cap A\neq \emptyset$ and $V(C)\cap B\neq \emptyset$. Moreover, if the vertices in $C$ are labeled such that $C=(v_1,v_2,\ldots,v_k)$ with $v_1=\min_{<_A}\{v_l:v_l\in V(C)\cap A\}$, then exactly one of the vertices in the set $\{v_{2},v_{k}\}$ belongs to $A$ and the other is in $B$.
		\end{observation}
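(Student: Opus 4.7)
The plan is to split the observation into two statements and handle them independently. For the non-emptiness part, I would argue by contradiction: if $V(C)\subseteq A$, then $C$ is an induced cycle of length $k\geq 4$ in $G^2[A]$, contradicting Observation~\ref{obs:interval}, which asserts that $G^2[A]$ is an interval graph and hence chordal. Symmetrically, $V(C)\subseteq B$ is ruled out by Lemma~\ref{obs:properinterval}, which gives that $G^2[B]$ is a (proper) interval graph, and therefore also chordal. Both $V(C)\cap A$ and $V(C)\cap B$ are consequently non-empty, which in particular ensures that the vertex $v_1$ described in the statement is well defined.

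For the structural claim on $v_2$ and $v_k$, I would rule out the two possibilities $\{v_2,v_k\}\subseteq A$ and $\{v_2,v_k\}\subseteq B$ separately; in both cases the contradiction will come from producing the chord $v_2v_k\in E(G^2)$, which is forbidden because $C$ is an induced cycle of length at least four (so $v_2$ and $v_k$ are non-consecutive along $C$). In the first case, the minimality of $v_1$ in $V(C)\cap A$ with respect to $<_A$ forces $v_1<_A v_2$ and $v_1<_A v_k$; since $v_1v_2,v_1v_k\in E(G^2)$, both $v_2$ and $v_k$ lie in $N_{G^2}(v_1)\cap\{a\in A : v_1<_A a\}$, which is a clique by Observation~\ref{obs:clique}, so $v_2v_k\in E(G^2)$. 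In the second case, the edges $v_1v_2,v_1v_k\in E(G^2)$ go between $A$ and $B$, so Observation~\ref{obs:bipsquare} upgrades them to edges of $G$ itself, making $v_1$ a common $G$-neighbor of $v_2$ and $v_k$ and again placing $v_2v_k$ in $E(G^2)$. Eliminating both cases leaves the desired conclusion that one of $\{v_2,v_k\}$ is in $A$ and the other is in $B$.

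I do not expect a genuine obstacle here, as both parts reduce directly to already established tools: chordality of the two half-squares, and the clique structure of forward neighborhoods under $<_A$ given by Observation~\ref{obs:clique}. The only minor bookkeeping point is making sure the labeling of $C$ is read cyclically so that $v_2$ and $v_k$ are indeed the two neighbors of $v_1$ along the cycle, and that $k\geq 4$ forces them to be non-adjacent in the induced cycle.
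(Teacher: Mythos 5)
Your proposal is correct and matches the paper's own proof essentially step for step: both derive non-emptiness from the chordality of the interval graphs $G^2[A]$ and $G^2[B]$, and both produce the forbidden chord $v_2v_k$ via Observation~\ref{obs:clique} in the case $\{v_2,v_k\}\subseteq A$ and via Observation~\ref{obs:bipsquare} in the case $\{v_2,v_k\}\subseteq B$. No gaps.
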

	\begin{proof}
  By Observation~\ref{obs:interval} and Lemma~\ref{obs:properinterval}, both the subgraphs $G^2[A]$ and $G^2[B]$ are interval graphs and therefore, chordal. Since $C$ is an induced cycle of length at least 4 in $G^2$, we then have $V(C)\nsubseteq A$ and $V(C)\nsubseteq B$. This implies that $V(C)\cap A \neq \emptyset$ and $V(C)\cap B \neq \emptyset$. Let $v_1=\min_{<_A}\{v_l:v_l\in V(C)\cap A\}$. Now consider the vertices $v_{2}$ and $v_{k}$ in $C$. If $\{v_{2}, v_{k}\}\subseteq A$ then $v_1<_A v_{2}, v_{k}$ and $\{v_{2}, v_{k}\}\subseteq N_{C}(v_1)\subseteq N_{G^2}(v_1)$. This implies by Observation~\ref{obs:clique} that $v_{2}v_{k}\in E(G^2)$, a contradiction to the fact that $C$ is an induced cycle.  Now if $\{v_{2}, v_{k}\}\subseteq B$ then $\{v_{2},v_{k}\}\subseteq N_{C}(v_1)\subseteq N_G(v_1)$ (as $v_1\in A$ and $v_{2},v_{k}\in B$, by Observation~\ref{obs:bipsquare}). This implies that $v_{2}v_{k}\in E(G^2)$, again a contradiction. As we have a contradiction in both cases, we can conclude that exactly one of the vertices in $\{v_{2},v_{k}\}$ belongs to $A$ and the other is in $B$. 
		
	\end{proof}
	\begin{lemma}\label{lem:cycles}
			Let $G=(A,B,E)$ be a convex bipartite graph with the orderings $<_A$ and $<_B$ and let $C$ be an induced cycle in $G^2$ of length $k\geq 4$. Then the vertices in $C$ can be labeled as $C=(v_1,v_2,\ldots,v_k)$ with $v_1=\min_{<_A}\{v_l:v_l\in V(C)\cap A\}$ such that the sub-path $(v_1,v_2,\ldots,v_{k-2})$ is an $(A,v_k,v_{k-1})$-path. 
		\end{lemma}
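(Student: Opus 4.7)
Using Observation~\ref{obs:cycleneighbors}, I first label $C=(v_1,\ldots,v_k)$ with $v_1=\min_{<_A}\{v_l:v_l\in V(C)\cap A\}$ and orient the traversal so that $v_2\in A$ and $v_k\in B$. I then carry out three steps: (i) show $v_{k-1}\in B$; (ii) show every $v_j$ with $2\le j\le k-2$ lies in $A$; (iii) verify the three defining conditions of an $(A,v_k,v_{k-1})$-path for $(v_1,\ldots,v_{k-2})$. Step (i) is a quick chord argument: if $v_{k-1}\in A$, then by Observation~\ref{obs:bipsquare} the cycle edges $v_1v_k$ and $v_{k-1}v_k$ both lie in $G$, so $v_1,v_{k-1}\in N_G(v_k)$ and hence $v_1v_{k-1}\in E(G^2[A])$; as $k\ge 4$ this is a chord of $C$, contradicting that $C$ is induced. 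Step (iii) is immediate once (ii) is in place: the $C$-edges $v_1v_k$ and $v_{k-2}v_{k-1}$ give the required $G$-adjacencies via Observation~\ref{obs:bipsquare}, while $C$ being induced supplies the ``not in $N_G$'' exclusions.

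\textbf{Main step (ii).} Let $p\ge 2$ be the largest index with $v_1,\ldots,v_p\in A$, so $v_{p+1}\in B$. The sub-path $P'=(v_1,\ldots,v_p)$ is an $(A,v_k,v_{p+1})$-sub-path of $C$: all its vertices are in $A$ by the choice of $p$, the cycle edges $v_1v_k$ and $v_pv_{p+1}$ supply the required $G$-adjacencies, and $C$ being induced rules out any other $G$-adjacency between $\{v_k,v_{p+1}\}$ and $V(P')$. Since $v_1$ is the $<_A$-minimum of $V(P')$, Observation~\ref{obs:path} yields $v_k<_B v_{p+1}$, and Corollary~\ref{corr:path} yields $(v_k,v_{p+1})\cap V(C)=\emptyset$. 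I claim $p=k-2$. If instead $p\le k-3$, then $v_{p+1}\ne v_{k-1}$ and the corollary forces either $v_{k-1}>_B v_{p+1}$ or $v_{k-1}<_B v_k$. In the first sub-case $v_k<_B v_{p+1}<_B v_{k-1}$, and applying the proper vertex ordering of $G^2[B]$ (Lemma~\ref{obs:properinterval}) to the edge $v_kv_{k-1}\in E(G^2[B])$ forces $v_kv_{p+1}\in E(G^2[B])$, which is a chord of $C$ since $v_k$ and $v_{p+1}$ are at $C$-distance at least $2$.

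\textbf{Remaining case and main obstacle.} The remaining sub-case is $v_{k-1}<_B v_k$, which is the delicate part. Here I would exploit that any two $V(C)\cap B$ vertices that are not $C$-adjacent cannot be adjacent in $G^2[B]$ (otherwise a chord), so $V(C)\cap B$ induces in the proper interval graph $G^2[B]$ a disjoint union of monotone paths, one per $B$-run of $C$. If there is a single $B$-run, monotonicity of this induced path together with $v_k<_B v_{p+1}$ forces the run to be decreasing in $<_B$ along the $C$-direction from $v_{p+1}$ to $v_k$, which gives $v_{k-1}>_B v_k$, contradicting the case assumption. If instead there are multiple $B$-runs, the $<_B$-orientation of each of these monotone paths is pinned down by Observation~\ref{obs:path} applied to the preceding $A$-run (with its $<_A$-minimum endpoint playing the role of $v_1$). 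Combining these orientations cyclically with repeated use of Corollary~\ref{corr:path} on consecutive pairs of $B$-runs, I expect to locate two $V(C)$-vertices at $C$-distance $\ge 2$ that are forced into a common closed neighborhood of the proper ordering $<_B$, again producing a chord of $C$. The main obstacle is precisely this cyclic bookkeeping in the multiple-$B$-run subcase, where a single proper-interval sandwich argument is not enough and the orientations of all $B$-runs must be tracked simultaneously.
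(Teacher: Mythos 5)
Your set-up matches the paper's opening moves: the labeling via Observation~\ref{obs:cycleneighbors}, the quick chord argument that $v_{k-1}\in B$, the choice of the maximal initial $A$-run $P'=(v_1,\ldots,v_p)$, the verification that $P'$ is an $(A,v_k,v_{p+1})$-path, and the deductions $v_k<_B v_{p+1}$ (Observation~\ref{obs:path}) and $(v_k,v_{p+1})\cap V(C)=\emptyset$ (Corollary~\ref{corr:path}) are all correct, as is your first sub-case $v_k<_B v_{p+1}<_B v_{k-1}$, which the proper ordering of $G^2[B]$ applied to the cycle edge $v_kv_{k-1}$ does kill. But the proof is not complete: the sub-case $v_{k-1}<_B v_k$ is only a sketch, and you say so yourself (``I would exploit\dots'', ``I expect to locate\dots'', ``the main obstacle is precisely this cyclic bookkeeping''). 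The decomposition of $V(C)\cap B$ into runs, the claim that each run induces a monotone path in $<_B$, and the cyclic propagation of orientations across runs are all unproved assertions, and this is exactly where the difficulty of the lemma lives; what you have is a plan for the hard case, not an argument.

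The paper closes this case with a more economical device that avoids orienting $B$-runs altogether: it takes the entire complementary sub-path $P_1=(v_k,v_{k-1},\ldots,v_{p+1})$ and shows in one sweep that $V(P_1)\setminus\{v_{p+1}\}\subseteq[b_1,v_k]$, where $b_1$ is the $<_B$-minimum of $B$ (so in particular every such vertex lies in $B$ and is $\leq_B v_k$). This is proved by taking the first violator $v_q$ along $P_1$ and excluding it in three cases: if $v_q\in A$, Observation~\ref{obs:orderprop} applied to $v_1<_A v_q$ with their respective $B$-neighbours $v_k$ and the predecessor of $v_q$ on $P_1$ produces the chord $v_qv_k$; if $v_q\in(v_k,v_{p+1})$, Corollary~\ref{corr:path} applied to $P'$ gives a contradiction; and if $v_q>_B v_{p+1}$, the proper ordering of Lemma~\ref{obs:properinterval} creates a triangle in $C$. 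With the containment in hand, the neighbour $v_{p+2}$ of $v_{p+1}$ on $P_1$ satisfies $v_{p+2}<_B v_k<_B v_{p+1}$, and the proper ordering applied to the cycle edge $v_{p+2}v_{p+1}$ forces $v_kv_{p+1}\in E(G^2)$, whence $p+1=k-1$ because $C$ is induced. I recommend replacing your run-by-run bookkeeping with this single ``first violator on the complementary path'' argument; it subsumes both of your sub-cases at once.
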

		\begin{proof}
By Observation~\ref{obs:cycleneighbors}, if $v_1=\min_{<_A}\{v_l:v_l\in V(C)\cap A\}$ then exactly one of the vertices in $\{v_2,v_{k}\}$ belongs to $A$ and the other is in $B$. Without loss of generality, we can assume that $v_{2}\in A$ and $v_{k}\in B$ (as in the other case, we can first enumerate the cycle by starting at $v_1$ itself but in the reverse direction and use the same arguments below). Now, while traversing $C$ starting from $v_1\in A$ along the increasing order of their indices in $C$, namely, $v_1, v_{2},\ldots,$ in $A$, we can find a vertex, say $v_j\in V(C)\cap A$ such that $\{v_1,v_{2},\ldots, v_j\} \subseteq A$, but $v_{j+1}\in B$. If $v_{j+1}= v_{k}$ then $v_jv_{k},v_1v_{k}\in E(C)\subseteq E(G)$ (since $v_j,v_1\in A$ and $v_k\in B$,  by Observation~\ref{obs:bipsquare}). This implies that $v_1v_j\in E(G^2)$, a contradiction to the fact that $C$ is an induced cycle of length at least four in $G^2$. Thus we have $v_{j+1}\neq v_{k}$, and therefore $j\leq k-2$. Moreover, $v_{k}\in N_G(v_1)\setminus \Big(\bigcup\limits_{p=2}^{j} N_G(v_p)\Big)$ and $v_{j+1}\in N_G(v_j)\setminus \Big(\bigcup\limits_{p=1}^{j-1} N_G(v_p)\Big)$. Therefore,  $P_0= (v_1,v_{2},\ldots, v_j)$ is an  $(A,v_{k}, v_{j+1})$-path and by Observation~\ref{obs:path}, we have $v_{k}<_Bv_{j+1}$. Observe that if $v_{k}v_{j+1}\in  E(G^2)$ then $v_{j+1}=v_{k-1}$, $v_j=v_{k-2}$, and we are done. We intend to show that $v_{k}v_{j+1}\in  E(G^2)$ is always true.


     \medskip
     
     For the sake of contradiction, suppose that $v_{k}v_{j+1}\notin E(G^2)$. Then consider the sub-path of $C$ namely, $P_1: (v_{k},v_{k-1},\ldots,v_{j+2},v_{j+1})$ (starting from the vertex $v_{k}$ in $C$ traverse along the reverse direction all the way to $v_{j+1}$). Note that $|V(P_1)|\geq 3$ (since $v_{k}v_{j+1}\notin E(G^2)$) and $V(C)=V(P_0)\uplus V(P_1)$. Let $b_1$ be the minimum indexed vertex in $B$ with respect to the ordering $<_B$. We then claim that $V(P_1)\setminus \{v_{j+1}\}\subseteq [b_1,v_{k}]$. Suppose not. Let $v_p$ be the first vertex with respect to the order of their appearance in $P_1$ such that $v_p\notin [b_1,v_{k}]$. We then have the following cases.
     
\medskip
\noindent\textbf{Case-1 $v_p\in A$:} Recall that $v_1\in A$, $v_{k}\in B$. Therefore, $v_p\in A$ implies that $v_p\neq v_{k-1}$. As otherwise, $v_pv_k,v_1v_k\in E(C)\subseteq E(G)$ (by Observation~\ref{obs:bipsquare}). This implies that $v_pv_{1}\in E(G^2)$,  a contradiction (since $2<j+1<p<k$). Recall that $v_1<_A v_p$. By the choice of $v_p$ and the fact that $v_p\neq v_{k-1}$,  we also have $b_1\leq_B v_{p-1}<_B v_{k}$. Then as $v_1v_k,v_{p}v_{p-1}\in  E(C)\subseteq E(G)$ (as $v_1,v_p\in A$ and $v_{k},v_{p-1}\in B$, by Observation~\ref{obs:bipsquare}),  by Observation~\ref{obs:orderprop} (with $a=v_1$, $a'=v_p$, $b=v_{k}$, and $b'=v_{p-1}$), we get $v_pv_{k}\in E(G^2)$, a contradiction (since $v_p\neq v_{k-1},v_1$).

     \medskip
\noindent\textbf{Case-2 $v_p\in B:$} Here we have two subcases. 

\medskip

\noindent\textbf{Case-2.1 $v_p\in (v_{k},v_{j+1})$:}
 Since $P_0$ is an $(A,v_{k}, v_{j+1})$-path and $v_p\in V(C)$, by Corollary~\ref{corr:path} applied to $P_0$, we have $v_{p}\notin (v_{k},v_{j+1})$, a contradiction.

\medskip
\noindent\textbf{Case-2.2 $v_{j+1}<_B v_p$:} By the choice of $v_p$, we then have $b_1\leq_B v_{p-1}\leq_B v_{k}<_B v_{j+1}<_B v_p$. Now since $<_B$ is a proper interval ordering (by Lemma~\ref{obs:properinterval}), $v_{p-1}v_p\in E(G^2)$ implies that $v_{p-1}v_{j+1}, v_{j+1}v_p\in E(G^2)$, and therefore, $\{v_{p-1},v_p,v_{j+1}\}$ forms a triangle in $G^2$. This contradicts the fact that $C$ is an induced cycle of length at least 4 in $G^2$.


\medskip
Since we have a contradiction to the existence of $v_p$ in all the possible cases, we can conclude that $V(P_1)\setminus \{v_{j+1}\}\subseteq [b_1,v_{k}]$. Now consider the vertex $v_{j+2}\in V(P_1) \setminus \{v_{k},v_{j+1}\}$ (such a vertex $v_{j+2}$ exists since $|V(P_1)|\geq 3$). It follows that $v_{j+2}\in [b_1,v_{k}]$. This implies that $b_1<_Bv_{j+2}<_Bv_{k}<_Bv_{j+1}$. Further, as $<_B$ is a proper interval ordering (by Lemma~\ref{obs:properinterval}), the fact that $v_{j+2}v_{j+1}\in E(G^2)$ implies that $v_{k}v_{j+1}\in E(G^2)$, and therefore,  a contradiction. This completes the proof of the lemma.
\end{proof}
    		We are now ready to prove the following theorem on the structure of holes that is present in the squares of convex bipartite graphs.
		\begin{theorem}\label{cycle:structure}
			Let $G=(A,B,E)$ be a convex bipartite graph with the orderings $<_A$ and $<_B$. Let $C$ be an induced cycle in $G^2$ of length at least 4. Then the following hold:
			\begin{myenumerate}
				\item \label{P_1} The vertices in $C$ can be labelled as $(v_1,v_2\ldots,v_k)$, $k\geq 4$ such that the sub-path $P=(v_1,v_2,\ldots,v_{k-2})$ is an $(A,v_k,v_{k-1})$-path. 
    
				\item \label{P_2}  There exists a set of vertices $\{b_1,b_2,\ldots,b_{k-3}\} \subseteq (v_k,v_{k-1})$ such that for each $i\in \{1,2,\ldots,\\k-3\}$, we have $N_G(b_i)\cap V(P) = \{v_i,v_{i+1}\}$. Further, $v_k<_B b_1<_B\cdots<_B b_{k-3}<_B v_{k-1}$ and  $v_1<_A v_2<_A\cdots<_A v_{k-2}$.
    
				\item \label{P_3} There exists a vertex $a\in A$ such that $\{v_k,b_1,b_2,\ldots,b_{k-3},v_{k-1}\} \subseteq N_G(a)$. 
			\end{myenumerate}
		\end{theorem}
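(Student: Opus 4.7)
My plan is to derive part~\ref{P_1} directly from Lemma~\ref{lem:cycles}, which supplies exactly such a labeling, and to deduce parts~\ref{P_2} and~\ref{P_3} by analysing the interval representation $\{I_{v_j}=[l(I_{v_j}),r(I_{v_j})]\}$ of $G^2[A]$ on $(B,<_B)$, restricted to the path vertices $V(P)$. Since $P$ is an induced path in $G^2[A]$, these intervals satisfy $I_{v_i}\cap I_{v_j}\neq\emptyset$ iff $|i-j|\leq 1$, and Lemma~\ref{lem:path}\ref{item:neighbor} combined with the definition of an $(A,v_k,v_{k-1})$-path pins down the endpoints: $l(I_{v_1})\leq v_k<l(I_{v_j})$ for every $2\leq j\leq k-2$, and $r(I_{v_j})<v_{k-1}\leq r(I_{v_{k-2}})$ for every $1\leq j\leq k-3$.

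The heart of the proof will be to establish the monotone ordering $v_1<_A v_2<_A\cdots<_A v_{k-2}$. The plan is to first rule out containment between any two consecutive intervals $I_{v_i}$ and $I_{v_{i+1}}$. For interior pairs ($2\leq i\leq k-4$), if $I_{v_{i+1}}\subseteq I_{v_i}$, then the path-neighbor $v_{i+2}$ would satisfy $I_{v_{i+2}}\cap I_{v_i}\supseteq I_{v_{i+2}}\cap I_{v_{i+1}}\neq\emptyset$, placing $v_iv_{i+2}\in E(G^2[A])$ and contradicting the induced-path property of $P$; the reverse inclusion $I_{v_i}\subseteq I_{v_{i+1}}$ is excluded symmetrically via $v_{i-1}$. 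At the boundary pair $(I_{v_1},I_{v_2})$, $I_{v_1}\subseteq I_{v_2}$ is forbidden by $l(I_{v_1})\leq v_k<l(I_{v_2})$, while $I_{v_2}\subseteq I_{v_1}$ is ruled out via $v_3$ when $k\geq 5$, and for $k=4$ directly from $v_3\notin I_{v_1}$ and $v_3\in I_{v_2}$; the boundary pair $(I_{v_{k-3}},I_{v_{k-2}})$ is handled symmetrically using $r(I_{v_{k-3}})<v_{k-1}\leq r(I_{v_{k-2}})$ and $v_{k-4}$. With containment excluded, the standard fact that an induced path of intervals admits a unique linear arrangement (up to reversal) by left (equivalently right) endpoints yields a total order on $\{I_{v_j}\}$; since $l(I_{v_1})$ is strictly the smallest, this order must begin with $v_1$ and therefore read $v_1,v_2,\ldots,v_{k-2}$, forcing $r(I_{v_1})<r(I_{v_2})<\cdots<r(I_{v_{k-2}})$ in $<_B$ and hence $v_1<_A v_2<_A\cdots<_A v_{k-2}$.

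Once monotonicity is in hand, part~\ref{P_2} will follow by picking, for each $i\in\{1,\ldots,k-3\}$, any vertex $b_i\in N_G(v_i)\cap N_G(v_{i+1})\cap B$; this set is nonempty because $I_{v_i}\cap I_{v_{i+1}}\neq\emptyset$. The relation $I_{v_j}\cap I_{v_i}=\emptyset$ for $|j-i|\geq 2$ immediately yields $N_G(b_i)\cap V(P)=\{v_i,v_{i+1}\}$. Monotonicity together with the disjointness $r(I_{v_i})<l(I_{v_{i+2}})$ gives $b_i\leq r(I_{v_i})<l(I_{v_{i+2}})\leq b_{i+1}$, ordering the $b_i$'s, and the endpoint bounds of Lemma~\ref{lem:path}\ref{item:neighbor} sandwich every $b_i$ strictly between $v_k$ and $v_{k-1}$ in $<_B$. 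For part~\ref{P_3}, the edge $v_{k-1}v_k\in E(G^2)$ with both endpoints in $B$ forces a common $G$-neighbor $a\in A$; the convex bipartite property makes $N_G(a)$ a $<_B$-consecutive block containing both $v_k$ and $v_{k-1}$, hence containing the whole interval $[v_k,v_{k-1}]$, and in particular every $b_i$. The main obstacle will be the non-containment step at the boundary pairs, where the ``third path vertex'' trick used in the interior is unavailable for the smallest values of $k$ and must be replaced by the explicit endpoint data supplied by Lemma~\ref{lem:path}\ref{item:neighbor}.
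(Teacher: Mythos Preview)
Your proof is correct, and parts~\ref{P_1} and~\ref{P_3} match the paper exactly. For part~\ref{P_2}, however, you and the paper take the two orderings in opposite directions. The paper first orders the $b_i$'s and only then derives the $<_A$-ordering of the $v_i$'s: setting $b_0=v_k$, it observes that each sub-path $P_i=(v_i,\ldots,v_{k-2})$ is an $(A,b_{i-1},v_{k-1})$-path, so Lemma~\ref{lem:path}\ref{item:neighbor} forces $b_i\in N_G(v_i)\cap N_G(v_{i+1})\subseteq(b_{i-1},v_{k-1})$, which gives $v_k<_Bb_1<_B\cdots<_Bb_{k-3}<_Bv_{k-1}$ by a one-line induction; the $v_i$-ordering then drops out from the definition of $<_A$, since $v_i\in N_G(b_i)\setminus N_G(b_{i+1})$ with $b_i<_B b_{i+1}$ forces $r(I_{v_i})<b_{i+1}\leq r(I_{v_{i+1}})$. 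You instead establish $v_1<_A\cdots<_A v_{k-2}$ first, by ruling out nesting of consecutive intervals and invoking the standard monotone-path-of-intervals fact, and only afterwards read off the $b_i$-ordering from $b_i\leq r(I_{v_i})<l(I_{v_{i+2}})\leq b_{i+1}$. The paper's route is more uniform---no separate treatment of boundary pairs or of $k=4$, and it stays entirely within the $(A,b,b')$-path machinery already set up---whereas your route is more explicitly geometric and makes the interval picture transparent, at the cost of the boundary case analysis and an appeal to an external (though standard) fact about interval representations of induced paths.
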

		\begin{proof}
			\ref{P_1} Let $v_1=\min_{<_A}\{v_i:v_i\in V(C)\cap A\}$. Then Part~\ref{P_1} is essentially same as Lemma~\ref{lem:cycles}.  
			
			\medskip
			
			\ref{P_2} 
 Let $b_0=v_k$ and $i\in \{1,2,\ldots,k-3\}$. Since $P$ is an induced path in $G^2[A]$, $N_G(v_i)\cap N_G(v_{i+1})\neq \emptyset$ for each $i$. Let $b_i\in N_G(v_i)\cap N_G(v_{i+1})$. Note that $b_i\notin N_G(v_j)$ for each $j\neq i,i+1$ (as otherwise $\{v_i,v_{i+1},v_j\}$ forms a triangle in $G^2$, a contradiction).
 
 We now recursively do the following procedure for $i\in \{1,2,\ldots,k-3\}$: Consider the sub-path $P_i=(v_i,v_{i+1},\ldots,v_{k-2})$. Clearly, $P_i$ is an $(A,b_{i-1},v_{k-1})$-path. And by Lemma~\ref{lem:path}\ref{item:neighbor}, $b_i\in (N_G(v_i)\cap N_G(v_{i+1}))\subseteq (b_{i-1},v_{k-1})$.
Therefore, we have the set of vertices $\{b_1,b_2,\ldots,b_{k-3}\}$ in $B$ having the property that $v_k<_B b_1<_B\cdots<_B b_{k-3}<_B v_{k-1}$ and $N_G(b_i)\cap V(P) = \{v_i,v_{i+1}\}$. Since for each $i\in \{1,2,\ldots,k-3\}$, $v_i\in N_G(b_i)\setminus N_G(b_{i+1})$, this implies by the definition of $<_A$ that $v_1<_A v_2<_A\cdots<_A v_{k-2}$. This proves~\ref{P_2}.

			\medskip
			
			\ref{P_3} Since $v_{k}v_{k-1}\in E(C)\subseteq E(G^2[B])$ (as $v_{k},v_{k-1}\in B$), there exists a vertex $a\in A$ such that $v_{k},v_{k-1}\in N_G(a)$. Recall that $v_{k}<_B b_1<_B\cdots<_B b_{k-3}<_B v_{k-1}$ (by~\ref{P_2}). This implies by the consecutive property of $<_B$ that $\{v_k,b_1,b_2,\ldots,b_{k-3},v_{k-1}\} \subseteq N_G(a)$, and proves~\ref{P_3}.
		\end{proof}
  We then have the following corollaries of Theorem~\ref{cycle:structure}.

 \begin{corollary}\label{corr:cyclepart}
    Let $G=(A,B,E)$ be a convex bipartite graph. Let $\{\overline{S}, S\}=\{A,B\}$, where $S$ is the partite set of $G$ whose vertices have an ordering with a consecutive property and $\overline{S}=V(G)\setminus S$. If $C$ is an induced cycle of length $k\geq 4$ then $|V(C)\cap S|=2$ and $|V(C)\cap \overline{S}|=k-2$.
 \end{corollary}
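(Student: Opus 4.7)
The plan is to derive the corollary as a direct bookkeeping consequence of Theorem~\ref{cycle:structure}\ref{P_1}, so there is no real obstacle to overcome; the work has already been done in establishing the cycle structure. First I would match the notation: in the statement $S$ is the partite set whose vertices admit the consecutive neighborhood ordering, which is precisely the set $B$ in the setup of Section~\ref{sec:convex_main}, so $S=B$ and $\overline{S}=A$. Thus proving $|V(C)\cap S|=2$ and $|V(C)\cap \overline{S}|=k-2$ is equivalent to showing $|V(C)\cap B|=2$ and $|V(C)\cap A|=k-2$.

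Next I would invoke Theorem~\ref{cycle:structure}\ref{P_1} directly on the induced cycle $C$ of length $k\geq 4$. This theorem asserts that the vertices of $C$ can be labelled as $(v_1,v_2,\ldots,v_k)$ in such a way that the sub-path $P=(v_1,v_2,\ldots,v_{k-2})$ is an $(A,v_k,v_{k-1})$-path. By the definition of an $(A,b,b')$-path given in the excerpt, the vertices of $P$ lie entirely in $A$, so $\{v_1,v_2,\ldots,v_{k-2}\}\subseteq A$; and by the same definition, $v_{k-1},v_k\in B$. Since $A$ and $B$ are disjoint and $V(C)=\{v_1,\ldots,v_k\}$, this partitions $V(C)$ into exactly $k-2$ vertices from $A$ and exactly $2$ vertices from $B$.

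Combining these two observations, $|V(C)\cap \overline{S}|=|V(C)\cap A|=k-2$ and $|V(C)\cap S|=|V(C)\cap B|=2$, which is the claimed conclusion. The main (and only) subtlety is purely notational — namely, correctly identifying that the partite set with the consecutive-ones property in Definition~\ref{def:convex} is the set $B$ used throughout Section~\ref{sec:convex_main}, so that the structural output of Theorem~\ref{cycle:structure} aligns with the sets $S$ and $\overline{S}$ as named in the corollary.
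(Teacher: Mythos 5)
Your proposal is correct and matches the paper's own proof: both identify $S$ with the partite set $B$ carrying the consecutive ordering, invoke Theorem~\ref{cycle:structure}\ref{P_1} to label $C$ so that $(v_1,\ldots,v_{k-2})$ is an $(A,v_k,v_{k-1})$-path, and read off $|V(C)\cap A|=k-2$ and $|V(C)\cap B|=2$ from the definition of an $(A,b,b')$-path. No gaps.
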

  \begin{proof}
   The vertices in the partite set $S$ of $G$ always admit an ordering as in the definition of the ordering $<_B$, and the vertices in $\overline{S}$ always admit an ordering as in the definition of the ordering $<_A$ (see the definitions of $<_A$ and $<_B$ defined at the beginning of Section~\ref{sec:convex_main}). The proof of the corollary now easily follows from Theorem~\ref{cycle:structure}\ref{P_1} and the definition of an $(A,b,b')$ path. 
  \end{proof}
		\begin{corollary} \label{corr:cycleC_5}
		Let $G$ be a convex bipartite graph. If $G^2$ contains an induced cycle of length $k\geq 4$, then $G^2$ contains induced cycles of length $k'$ for each $4\leq k'\leq k$.
		\end{corollary}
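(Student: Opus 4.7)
The plan is to apply Theorem~\ref{cycle:structure} to an induced cycle $C$ of length $k$ in $G^2$ and exhibit, for each $k'$ with $4 \le k' \le k$, an induced cycle of length $k'$. For $k' = k$ the cycle $C$ itself suffices, so the focus is on $4 \le k' \le k-1$ (which only arises when $k \ge 5$), where the idea is to shortcut $C$ through the intermediate $B$-vertex $b_{k'-2}$ provided by Theorem~\ref{cycle:structure}.

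Invoking Theorem~\ref{cycle:structure}, I label $C = (v_1, \ldots, v_k)$ so that $v_1, \ldots, v_{k-2} \in A$ form an $(A, v_k, v_{k-1})$-path $P$; there exist $b_1, \ldots, b_{k-3} \in B$ with $N_G(b_i) \cap V(P) = \{v_i, v_{i+1}\}$, and a common neighbour $a \in A$ of every vertex in $\{v_k, b_1, \ldots, b_{k-3}, v_{k-1}\}$. For $4 \le k' \le k-1$, define
\[
C_{k'} := (v_1, v_2, \ldots, v_{k'-2}, b_{k'-2}, v_k),
\]
which is legitimate because $1 \le k'-2 \le k-3$. The consecutive pairs are all edges of $G^2$: $v_iv_{i+1}$ and $v_kv_1$ are inherited from $C$; $v_{k'-2}b_{k'-2} \in E(G)$ by Theorem~\ref{cycle:structure}\ref{P_2}; and $b_{k'-2}v_k \in E(G^2)$ because both vertices lie in $N_G(a)$.

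For inducedness, the pairs $v_iv_j$ with $i, j \in \{1, \ldots, k'-2\}$ that are non-consecutive in $C_{k'}$ are non-edges of $G^2$ since they are non-edges of $C$. For $1 \le i \le k'-3$, the pair $v_ib_{k'-2}$ lies across the bipartition, so by Observation~\ref{obs:bipsquare} it belongs to $E(G^2)$ iff it belongs to $E(G)$; Theorem~\ref{cycle:structure}\ref{P_2} restricts $N_G(b_{k'-2}) \cap V(P)$ to $\{v_{k'-2}, v_{k'-1}\}$, excluding $v_i$. For $2 \le i \le k'-2$, since the only neighbours of $v_k$ in $C$ are $v_1$ and $v_{k-1}$ and since $i \le k'-2 \le k-3 < k-1$, we obtain $v_iv_k \notin E(G^2)$. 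Hence $C_{k'}$ is an induced $k'$-cycle. The main obstacle, in my view, is picking the correct shortcut vertex: a smaller index $b_j$ either fails to close the cycle (for $j < k'-3$, because $v_{k'-2} \notin N_G(b_j)$) or creates the chord $b_{k'-3}v_{k'-3}$ (for $j = k'-3$); the choice $b_{k'-2}$ is the unique one whose $V(P)$-neighbours $\{v_{k'-2}, v_{k'-1}\}$ straddle the truncation boundary, so that the kept vertex $v_{k'-2}$ supplies the closing edge while the discarded vertex $v_{k'-1}$ produces no chord.
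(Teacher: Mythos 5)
Your proposal is correct and takes essentially the same route as the paper: the paper defines $C_i=(v_1,v_2,\ldots,v_i,b_i,v_k)$ for $i\in\{2,\ldots,k-2\}$ (with $b_{k-2}=v_{k-1}$), which is exactly your $C_{k'}$ under the substitution $i=k'-2$, relying on the same facts from Theorem~\ref{cycle:structure}. You merely spell out the inducedness verification in more detail than the paper does.
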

		\begin{proof}
			Let $C$ be an induced cycle in $G^2$ of length $k$, where $k\geq 4$. Then by Theorem~\ref{cycle:structure}\ref{P_1}, the vertices in $C$ can be labelled as $C= (v_1,\ldots,v_k)$, where $P=(v_1,v_2,\ldots,v_{k-2})$ is an $(A,v_k,v_{k-1})$-path. Further by  Theorem~\ref{cycle:structure}\ref{P_2}, there exists a set of vertices, say $B'=\{b_1,b_2,\ldots,b_{k-3}\}\subseteq B$ such that for each $j\in \{1,2,\ldots,k-3\}$, $N_G(b_j)\cap V(P) = \{v_j,v_{j+1}\}$. Also, by Theorem~\ref{cycle:structure}\ref{P_3}, the set $B'\cup \{v_{k},v_{k-1}\}$ forms a clique in $G^2$. Let $b_{k-2}=v_{k-1}$. For each $i\in \{2,\ldots,k-2\}$, define $C_i=(v_1,v_2,\ldots,v_i,b_i,v_k)$. As $C$ is an induced cycle and $N_{C_i}(b_i)=\{v_i,v_k\}$, this implies that $C_i$ is an induced cycle in $G^2$ of length $i+2$. Hence the corollary.
		\end{proof}
		Recall that a graph $H$ is said to be perfect if and only if $H$ is both (odd hole)-free and (odd antihole)-free; an odd antihole of length five is isomorphic to its complement $C_5$, the odd-hole of length five. 
		A graph $H$ is chordal if it does not contain induced cycles of length at least four. For a convex bipartite graph $G=(A,B,E)$, we have that both the subgraphs $G^2[A]$ and $G^2[B]$ are interval graphs and, therefore, chordal and hence perfect. Moreover, since the class of convex bipartite graphs forms a subclass of chordal bipartite graphs, by Corollary~\ref{corr:chordalbip}, we have that $G^2$ does not contain odd-antiholes of length greater than five. Now, the following theorem is an easy consequence of Corollary~\ref{corr:cycleC_5}.
		\begin{theorem}\label{thm:c4free}
		Let $G$ be a convex bipartite graph. Then, we have the following.
   \begin{myenumerate}
       \item \label{item:c5free} If $G^2$ is $C_5$-free, then $G^2$ is perfect. i.e. perfectness is a partite testable property for $C_5$-free squares of convex bipartite graphs.
       \item \label{item:c4free} If $G^2$ is $C_4$-free, then $G^2$ is chordal. i.e. chordality is a partite testable property for $C_4$-free squares of convex bipartite graphs.
   \end{myenumerate}

		\end{theorem}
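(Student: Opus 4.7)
The plan is to derive both parts directly from the cycle propagation result Corollary~\ref{corr:cycleC_5}, together with Corollary~\ref{corr:chordalbip} and the Strong Perfect Graph Theorem. The heavy lifting has already been done in Theorem~\ref{cycle:structure} and its corollaries; what remains is essentially a two-line deduction in each case.

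For part~\ref{item:c4free} I would start with the easier statement. Suppose $G^2$ is $C_4$-free but contains some induced cycle of length $k\geq 4$. Pick a shortest such cycle; this is an induced cycle of length at least four in $G^2$, so Corollary~\ref{corr:cycleC_5} applies and forces $G^2$ to contain induced cycles of every length $k'$ with $4\leq k'\leq k$. In particular, $G^2$ contains an induced $C_4$, contradicting $C_4$-freeness. Hence $G^2$ has no induced cycle of length $\geq 4$, i.e.\ $G^2$ is chordal.

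For part~\ref{item:c5free} the argument is the same template but using the Strong Perfect Graph Theorem. I would first dispatch odd anti-holes: by Corollary~\ref{corr:chordalbip}, squares of chordal bipartite graphs (and in particular squares of convex bipartite graphs) contain no odd anti-holes of size greater than five; the only remaining odd anti-hole, namely $C_5$ (which is self-complementary), is forbidden by hypothesis, so $G^2$ is (odd anti-hole)-free. Next I would rule out odd holes: suppose $G^2$ has an induced odd cycle $C$ of length $k\geq 5$. If $k=5$ we directly contradict the $C_5$-free hypothesis. If $k\geq 7$, then $k\geq 4$ so Corollary~\ref{corr:cycleC_5} produces induced cycles of all intermediate lengths, in particular an induced $C_5$, again a contradiction. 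Hence $G^2$ is (odd hole)-free as well, and therefore perfect by the Strong Perfect Graph Theorem.

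There is essentially no obstacle: the structural work of the section has already reduced these statements to one-line invocations of Corollary~\ref{corr:cycleC_5} plus Corollary~\ref{corr:chordalbip}. The only subtle point worth stating carefully in the write-up is that in part~\ref{item:c5free} the $C_5$ case is both an odd hole and (isomorphic to) an odd antihole of length five, which is exactly why excluding $C_5$ simultaneously closes off the last family of odd antiholes not already handled by Corollary~\ref{corr:chordalbip} and (via Corollary~\ref{corr:cycleC_5}) every longer odd hole.
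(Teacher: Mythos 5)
Your proposal is correct and follows exactly the route the paper intends: the paper presents Theorem~\ref{thm:c4free} as an immediate consequence of Corollary~\ref{corr:cycleC_5} together with Corollary~\ref{corr:chordalbip}, the self-complementarity of $C_5$, and the Strong Perfect Graph Theorem, which is precisely your argument. Nothing is missing.
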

  Now, let us focus on a well-known subclass of convex bipartite graphs, namely biconvex bipartite graphs.
		\begin{definition}[Biconvex bipartite graph]\label{def:biconvex}
			A bipartite graph $G=(A,B,E)$ is said to be biconvex bipartite if the vertices in $A$ and $B$, respectively, have orderings, say $<_A'$ and $<_B$ such that, for each vertex $a\in A$ (respectively, $b\in B$), the vertices in the neighborhood of $a$ (respectively, $b$) in $G$ appear consecutively with respect to the ordering $<_B$ (respectively, $<_A'$).
		\end{definition}
		\begin{figure}[h]
			\begin{tabular}{p{.4\textwidth}p{.4\textwidth}}
				\parbox{.4\textwidth}{\centering
					\begin{tikzpicture}[scale=.75]
						\node[draw,circle] (a2) at (0,-1.5) {$v_1$};
						\node[draw,circle] (a3) at (0,-3) {$a$};
						\node[draw,circle] (a4) at (0,-4.5) {$v_2$};
						\node[draw,circle] (b2) at (2,-1.5) {$v_4$};
						\node[draw,circle] (b3) at (2,-3) {$b$};
						\node[draw,circle] (b4) at (2,-4.5) {$v_3$};
						\draw (a2) -- (b2);
						\draw (a2) -- (b3);
						\draw (a3) -- (b2);
						\draw (a3) -- (b3);
						\draw (a3) -- (b4);
						\draw (a4) -- (b3);
						\draw (a4) -- (b4);
				\end{tikzpicture}} &
				\parbox{.4\textwidth}{\centering
					\begin{tikzpicture}[scale=.75]
						\node[draw,circle] (a2) at (0,-1.5) {$v_1$};
						\node[draw,circle] (a3) at (0,-3) {$a$};
						\node[draw,circle] (a4) at (0,-4.5) {$v_2$};
						\node[draw,circle] (b2) at (2,-1.5) {$v_4$};
						\node[draw,circle] (b3) at (2,-3) {$b$};
						\node[draw,circle] (b4) at (2,-4.5) {$v_3$};
						\draw[color=red,line width=1.5pt] (a2) -- (b2);
						\draw (a3) -- (b2);
						\draw (a3) -- (b3);
						\draw (a4) -- (b3);
						\draw[color=red,line width=1.5pt] (a4) -- (b4);
						\draw (a2) -- (a3);
						\draw (a2) -- (b3);
						\draw (a3) -- (a4);
						\draw (a3) -- (b4);
						\draw (b2) -- (b3);
						\draw (b3) -- (b4);
						\draw[color=red,line width=1.5pt] (b2).. controls (3,-3).. (b4);
						\draw[color=red,line width=1.5pt] (a2).. controls (-1,-3).. (a4);
				\end{tikzpicture}}\\
				\vspace{.1in}
				\parbox{.4\textwidth}{\centering $G$} &  \vspace{.1in} \parbox{.4\textwidth}{\centering $G^2$}
			\end{tabular}
			\caption{A biconvex bipartite graph $G$ such that $G^2$ is not chordal}
			\label{fig:biconvex}
		\end{figure}
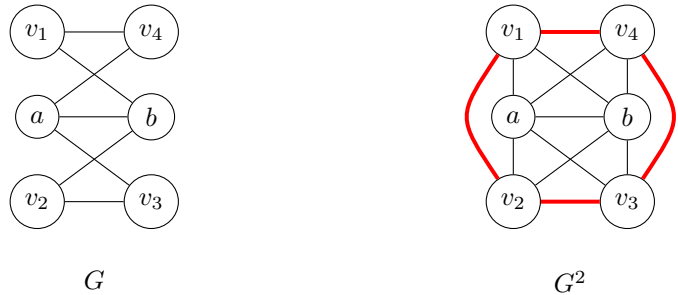
	We first note the following remark.
 \begin{remark}\textbf{\textit{Chordality is not a partite testable property for the squares of biconvex bipartite graphs:}}
    For a biconvex bipartite graph $G=(A,B,E)$, since both the partite sets have the consecutive ordering property, it is not difficult to infer from Lemma~\ref{obs:properinterval} that both the subgraphs $G^2[A]$ and $G^2[B]$  are proper interval graphs. But the squares of biconvex bipartite graphs need not be chordal.  See Figure~\ref{fig:biconvex} for an example of a biconvex bipartite graph $G$ for which $G^2$ is not chordal (the edges shown in red form an induced $C_4$). 
 \end{remark}
		However, we have the following corollary of Theorem~\ref{cycle:structure}, which implies that \emph{perfectness is a partite testable property for the squares of biconvex bipartite graphs}.
  		\begin{theorem}\label{thm:biconvex}
		The squares of biconvex bipartite graphs are $C_5$-free and hence perfect.
	\end{theorem}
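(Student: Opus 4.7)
The plan is to exploit the fact that a biconvex bipartite graph is convex bipartite ``in two ways'' simultaneously: both its partite sets admit an ordering with the consecutive-neighbours property (Definition~\ref{def:biconvex}). So Corollary~\ref{corr:cyclepart}, which I proved for convex bipartite graphs, can be applied with either $A$ or $B$ playing the role of the ``consecutively-ordered'' partite set $S$. I would use this double application to rule out all induced cycles of length at least five in $G^2$, and then invoke Theorem~\ref{thm:c4free}\ref{item:c5free} for the perfectness conclusion.

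Concretely, let $G=(A,B,E)$ be a biconvex bipartite graph and suppose for contradiction that $G^2$ contains an induced cycle $C$ of length $k \geq 5$. Since $G$ is convex bipartite (viewing $B$ as the side with the consecutive ordering), Corollary~\ref{corr:cyclepart} yields $|V(C) \cap B| = 2$ and $|V(C) \cap A| = k-2$. By the symmetry of biconvexity, $G$ is also convex bipartite when we swap the roles of $A$ and $B$, so the same corollary applied in the reverse direction gives $|V(C) \cap A| = 2$ and $|V(C) \cap B| = k-2$. Combining these identities forces $k-2 = 2$, i.e.\ $k = 4$, contradicting $k \geq 5$. Hence $G^2$ contains no induced cycle of length $\geq 5$, and in particular $G^2$ is $C_5$-free.

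Finally, since every biconvex bipartite graph is convex bipartite, I can apply Theorem~\ref{thm:c4free}\ref{item:c5free} to this $C_5$-free square and conclude that $G^2$ is perfect. There is no substantive obstacle here: the work has already been done in Corollary~\ref{corr:cyclepart} and Theorem~\ref{thm:c4free}, and the only conceptual ingredient is recognising that biconvexity lets us apply the convex-bipartite structural results on both sides of the bipartition. The one point worth stating carefully in the write-up is the justification that biconvexity does indeed permit swapping the roles of $A$ and $B$ when invoking earlier results that were phrased asymmetrically (with $B$ designated as the consecutively-ordered side).
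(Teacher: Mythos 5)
Your proposal is correct and follows essentially the same route as the paper: both apply Corollary~\ref{corr:cyclepart} twice, once with each partite set in the role of the consecutively-ordered side $S$, to force $|V(C)\cap A|=|V(C)\cap B|=2$ and thus rule out induced cycles of length at least five, then invoke Theorem~\ref{thm:c4free}\ref{item:c5free} for perfectness. The only cosmetic difference is that you phrase the contradiction as $k-2=2$ for all $k\geq 5$ while the paper bounds $|V(C)|\leq 4$ for a putative $C_5$.
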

		\begin{proof}
        Suppose not. Let $G=(A,B,E)$ be a biconvex bipartite graph. Then, the vertices in both the partite sets $A$ and $B$ admit consecutive orderings (see Definition~\ref{def:biconvex}). Suppose that $C$ is an induced cycle in $G^2$ of length five. Take $S=A$ and $\overline{S}=B$. Then, as the conditions in Corollary~\ref{corr:cyclepart} are satisfied, we get that $|V(C)\cap A|=2$. Now take $S=B$ and $\overline{S}=A$, then also the conditions in the same corollary are satisfied, and we have $|V(C)\cap B|=2$. Since $V(C)\subseteq A\cup B$, this implies that $|V(C)|\leq 4<5$, a contradiction. Therefore, we can conclude that the squares of biconvex bipartite graphs are $C_5$-free and hence perfect by Theorem~\ref{thm:c4free}\ref{item:c5free}.
       \end{proof}

\begin{remark}
   Let $\mathcal{C}$ denote the class of $C_5$-free squares of convex bipartite graphs, $\mathcal{C}_1$ denote the class of $C_4$-free squares of convex bipartite graphs, and $\mathcal{C}_2$ denote the class of squares of biconvex bipartite graphs. By Corollary~\ref{corr:cycleC_5} and Theorem~\ref{thm:biconvex}, respectively, we have $\mathcal{C}_1,\mathcal{C}_2\subseteq \mathcal{C}$. By a previous observation (recall Figure~\ref{fig:biconvex}), we have noted that the squares of biconvex bipartite graphs may contain $C_4$ as an induced subgraph. This implies that $\mathcal{C}_2\nsubseteq \mathcal{C}_1$. In fact, we can observe that $\mathcal{C}_1\nsubseteq \mathcal{C}_2$ as well. i.e. there exists convex bipartite graphs $G$ such that $G^2$ is $C_4$-free but $G$ is not a biconvex bipartite graph. See Figure~\ref{fig:convexC4free} for an example. It is not difficult to see that $G$ is a convex bipartite graph but not biconvex (since there are 3 vertices $b_i$, $i\in \{1,2,3\}$ in $B$ such that for each vertex $b_i$, the vertex $a_i\in A$ is an exclusive neighbor of $b_i$, whereas all the three vertices have the vertex $a$ as their common neighbor in $G$). On the other hand, $G^2$ is clearly $C_4$-free, implying that $\mathcal{C}_1\nsubseteq \mathcal{C}_2$. 
\end{remark}
\begin{figure}[h]
			\begin{tabular}{p{.4\textwidth}p{.4\textwidth}}
				\parbox{.4\textwidth}{\centering
					\begin{tikzpicture}[scale=.75]
						\node[draw,circle] (a2) at (0,-1.5) {$a$};
						\node[draw,circle] (a3) at (0,-3) {$a_1$};
						\node[draw,circle] (a4) at (0,-4.5) {$a_2$};
                        \node[draw,circle] (a5) at (0,-6) {$a_3$};
						\node[draw,circle] (b2) at (2,-1.5) {$b_1$};
						\node[draw,circle] (b3) at (2,-3) {$b_2$};
						\node[draw,circle] (b4) at (2,-4.5) {$b_3$};
						\draw (a2) -- (b2);
						\draw (a2) -- (b3);
						\draw (a2) -- (b4);
						\draw (a3) -- (b2);
						\draw (a4) -- (b3);
						\draw (a5) -- (b4);
						
				\end{tikzpicture}} &
				\parbox{.4\textwidth}{\centering
					\begin{tikzpicture}[scale=.75]
						\node[draw,circle] (a2) at (0,-1.5) {$a$};
						\node[draw,circle] (a3) at (0,-3) {$a_1$};
						\node[draw,circle] (a4) at (0,-4.5) {$a_2$};
                        \node[draw,circle] (a5) at (0,-6) {$a_3$};
						\node[draw,circle] (b2) at (2,-1.5) {$b_1$};
						\node[draw,circle] (b3) at (2,-3) {$b_2$};
						\node[draw,circle] (b4) at (2,-4.5) {$b_3$};
						\draw (a2) -- (b2);
						\draw (a2) -- (b3);
						\draw (a2) -- (b4);
						\draw (a3) -- (b2);
						\draw (a4) -- (b3);
						\draw (a5) -- (b4);
                        \draw (a2) -- (a3);
      
						\draw (b2) -- (b3);
						\draw (b3) -- (b4);
						\draw (b2).. controls (3,-3).. (b4);
						\draw (a2).. controls (-1,-3).. (a4);
                        \draw (a2).. controls (-1.5,-3).. (a5);
				\end{tikzpicture}}\\
				\vspace{.1in}
				\parbox{.4\textwidth}{\centering $G$} &  \vspace{.1in} \parbox{.4\textwidth}{\centering $G^2$}
			\end{tabular}
			\caption{A convex bipartite graph $G$ such that $G$ is not biconvex, but $G^2$ is $C_4$-free}
			\label{fig:convexC4free}
		\end{figure}
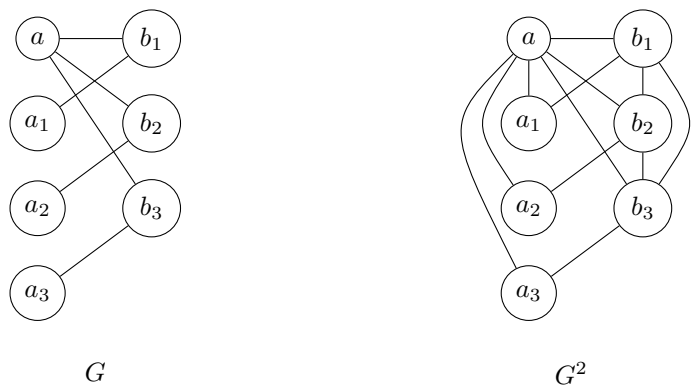
		\bibliographystyle{abbrv}
		\bibliography{references.bib}
	\end{document}